\DeclareMathOperator{\alphabet}{\textsf{alph}}
\DeclareMathOperator{\width}{\textsf{w}}
\DeclareMathOperator{\pathwidth}{\textsf{pw}}
\DeclareMathOperator{\cutwidth}{\textsf{cw}}
\DeclareMathOperator{\socutwidth}{\textsf{cw}_2}
\DeclareMathOperator{\cutedges}{\mathcal{C}}
\DeclareMathOperator{\pset}{\textsf{ps}}
\DeclareMathOperator{\loc}{\textsf{loc}}
\DeclareMathOperator{\condensed}{\textsf{cond}}
\DeclareMathOperator{\bigo}{O}
\DeclareMathOperator{\open}{\texttt{open}}
\DeclareMathOperator{\act}{\texttt{active}}
\DeclareMathOperator{\closed}{\texttt{closed}}
\DeclareMathOperator{\opensym}{\texttt{o}}
\DeclareMathOperator{\actsym}{\texttt{a}}
\DeclareMathOperator{\closesym}{\texttt{c}}
\DeclareMathOperator{\fpt}{\textsf{fpt}}
\newcommand{\alphagraph}{G_{\alpha}}
\newcommand{\betagraph}{G_{\beta}}
\DeclareMathOperator{\greedy}{\texttt{GREEDY}}
\DeclareMathOperator{\SOstrategy}{\textsf{FewOcc}}
\DeclareMathOperator{\MOstrategy}{\textsf{ManyOcc}}
\DeclareMathOperator{\SNMstrategy}{\textsf{FewBlocks}}
\DeclareMathOperator{\LRstrategy}{\textsf{LeftRight}}
\DeclareMathOperator{\BEstrategy}{\textsf{BE}}
\DeclareMathOperator{\BEstrategyAltOne}{\textsf{BE}-1}
\DeclareMathOperator{\BEstrategyAltTwo}{\textsf{BE}-2}
\DeclareMathOperator{\opt}{\textsf{opt}}
\newcommand{\MN}{\mathbb{N}}
\newcommand{\NP}{\textsf{NP}}
\newcommand{\pclass}{\textsf{P}}
\newcommand{\locProb}{\textsc{Loc}}
\newcommand{\minlocProb}{\textsc{MinLoc}}
\newcommand{\cutwidthProb}{\textsc{Cutwidth}}
\newcommand{\pathwidthProb}{\textsc{Pathwidth}}
\newcommand{\minPathwidthProb}{\textsc{MinPathwidth}}
\newcommand{\minCutwidthProb}{\textsc{MinCutwidth}}
\newcommand{\ta}{\mathtt{a}}
\newcommand{\tb}{\mathtt{b}}
\newcommand{\tc}{\mathtt{c}}
\newcommand{\td}{\mathtt{d}}
\newcommand{\te}{\mathtt{e}}
\newcommand{\tx}{\mathtt{x}}
\newcommand{\ty}{\mathtt{y}}
\newcommand{\tz}{\mathtt{z}}
\newcommand{\maxocc}[1]{|#1|_{\mathsf{maxocc}}}
\newcommand{\N}{\mathbb{N}}
\newtheorem{lemma}{Lemma}[section]
\newtheorem{theorem}[lemma]{Theorem}
\newtheorem{proposition}[lemma]{Proposition}
\newtheorem{corollary}[lemma]{Corollary}
\theoremstyle{definition}
\newtheorem{remark}[lemma]{Remark}
\newtheorem{observation}[lemma]{Observation}
\newtheorem{openproblem}[lemma]{Open Problem}
  \providecommand\BibTeX{{%
    \normalfont B\kern-0.5em{\scshape i\kern-0.25em b}\kern-0.8em\TeX}}}
\begin{document}

\title{Graph and String Parameters: Connections Between Pathwidth, Cutwidth and the Locality Number}

\author[1]{Katrin Casel}
\author[2]{Joel D. Day}
\author[3]{Pamela Fleischmann}
\author[4]{Tomasz Kociumaka}
\author[5]{Florin Manea}
\author[1]{Markus L.\ Schmid}

\affil[1]{Humboldt-Universit\"at zu Berlin, Berlin, Germany, \texttt{Katrin.Casel@hu-berlin.de}, \texttt{MLSchmid@MLSchmid.de}}
\affil[2]{Department of Computer Science, Loughborough University, Loughborough, United Kingdom, \texttt{J.Day@lboro.ac.uk}}
\affil[3]{Department of Computer Science, Kiel University, Kiel, Germany, \texttt{fpa@informatik.uni-kiel.de}}
\affil[4]{Max Planck Institute for Informatics, Saarland Informatics Campus, Saarbr\"ucken, Germany, \texttt{tomasz.kociumaka@mpi-inf.mpg.de}}
\affil[5]{Computer Science Department, Universit\"at G\"ottingen, G\"ottingen, Germany, \texttt{florin.manea@informatik.uni-goettingen.de}}

\date{}

\maketitle

\begin{abstract}
We investigate the locality number, a recently introduced structural parameter for strings (with applications in pattern matching with variables), and its connection to two important graph-parameters, cutwidth and pathwidth. These connections allow us to show that computing the locality number is $\NP$-hard, but fixed-parameter tractable, if parameterised by the locality number or by the alphabet size, which has been formulated as open problems in the literature. Moreover, the locality number can be approximated with ratio $\bigo(\sqrt{\log(\opt)} \log(n))$. \par

An important aspect of our work -- that is relevant in its own right and of independent interest -- is that we identify connections between the string parameter of the locality number on the one hand, and the famous graph parameters of cutwidth and pathwidth, on the other hand. These two parameters have been jointly investigated in the literature and are arguably among the most central graph parameters that are based on ``linearisations'' of graphs. In this way, we also identify a direct approximation preserving reduction from cutwidth to pathwidth, which shows that any polynomial $f(\opt,|V|)$-approximation algorithm for pathwidth yields a polynomial $2f(2\opt,h)$-approximation algorithm for cutwidth on multigraphs (where $h$ is the number of edges). In particular, this translates known approximation ratios for pathwidth into new approximation ratios for cutwidth, namely $\bigo(\sqrt{\log(\opt)} \log(h))$ and $\bigo(\sqrt{\log(\opt)} \opt)$ for (multi) graphs  with $h$ edges.
\end{abstract}

\maketitle

\section{Introduction}\label{sec:intro}

Graphs, on the one hand, and strings (we also use the term \emph{word}), on the other, are two different types of data objects and they have certain particularities. Graphs seem to be more popular in fields like classical and parameterised algorithms and complexity (due to the fact that many natural graph problems are intractable), while fields like formal languages, pattern matching, verification or compression are more concerned with strings. Moreover, both the field of graph algorithms as well as string algorithms are well established and provide rich toolboxes of algorithmic techniques, but they differ in that the former is tailored to computationally hard problems (e.g., the approach of treewidth and related parameters), while the latter focuses on providing efficient data-structures for near-linear-time algorithms. Nevertheless, it is sometimes possible to bridge this divide, i.e., by ``flattening'' a graph into a sequential form, or by ``inflating'' a string into a graph, to make use of respective algorithmic techniques otherwise not applicable. This paradigm shift may provide the necessary leverage for new algorithmic approaches. 

In this paper, we are concerned with certain structural parameters (and the problems of computing them) for graphs and strings: the \emph{cutwidth} $\cutwidth(G)$ of a graph $G$ (i.e., the maximum number of ``stacked'' edges if the vertices of a graph are drawn on a straight line), the \emph{pathwidth} $\pathwidth(G)$ of a graph $G$ (i.e., the minimum width of a tree decomposition the tree structure of which is a path), and the \emph{locality number} $\loc(\alpha)$ of a string $\alpha$ (explained in more detail in the next paragraph; formal definitions follow in Section~\ref{sec:prim}). By $\cutwidthProb$, $\pathwidthProb$ and $\locProb$, we denote the corresponding decision problems (i.\,e., checking whether $\cutwidth(G) \leq k$, $\pathwidth(G) \leq k$, or $\loc(\alpha) \leq k$, respectively) and with the prefix $\textsc{Min}$, we refer to the minimisation variants (for which we are mainly interested in approximation algorithms). The two former graph-parameters are very classical. Pathwidth is a simple (yet still hard to compute) subvariant of treewidth, which measures how much a graph resembles a path. The problems $\pathwidthProb$ and $\minPathwidthProb$ are intensively studied (in terms of exact, parameterised and approximation algorithms) and have numerous applications (see the surveys and textbook~\cite{Bodlaender1998, Kloks1994, Bodlaender1993}). $\cutwidthProb$ is the best-known example of a whole class of so-called \emph{graph layout problems} (see the survey~\cite{surveyDiaz, Petit2011} for detailed information), which are studied since the 1970s and were originally motivated by questions of circuit layouts. 

The locality number is rather new and we shall discuss it in more detail. A word is $k$-local if there exists an order of its symbols such that, if we {\em mark} the symbols in the respective order (which is called a \emph{marking sequence}), at each stage there are at most $k$ contiguous blocks of marked symbols in the word. This $k$ is called the {\em marking number} of that marking sequence. The \emph{locality number} of a word is the smallest $k$ for which that word is $k$-local, or, in other words, the minimum marking number over all marking sequences. For example, the marking sequence $\sigma = (\tx, \ty, \tz)$ marks $\alpha = \tx \ty \tx \ty \tz \tx \tz$ as follows (marked blocks are illustrated by overlines): 
\begin{equation*}
\tx \ty \tx \ty \tz \tx \tz \leadsto \overline{\tx} \ty \overline{\tx} \ty \tz \overline{\tx} \tz \leadsto\overline{\tx \ty \tx \ty} \tz \overline{\tx} \tz \leadsto \overline{\tx \ty \tx \ty \tz \tx \tz}\,;
\end{equation*}
thus, the marking number of $\sigma$ is $3$. In fact, all marking sequences for $\alpha$ have a marking number of $3$, except $(\ty, \tx, \tz)$, for which it is $2$: 
\begin{equation*}
\tx \overline{\ty} \tx \overline{\ty} \tz \tx \tz \leadsto \overline{\tx \ty \tx \ty} \tz \overline{\tx} \tz \leadsto \overline{\tx \ty \tx \ty \tz \tx \tz}\,. 
\end{equation*}
Thus, the locality number of $\alpha$ is $\loc(\alpha) = 2$. For the slightly more complicated word $\beta = \ta \td \ta \tb \ta \td \tb \td \ta \te \tc \tb \tc \tb$, it can be verified that $\loc(\beta) = 3$ (see Figure~\ref{fig:markingSequence} for an illustration of two marking sequences for $\beta$ with marking numbers $4$ and $3$, respectively).

\begin{figure}
\begin{tabular}{l|l|l}
Marked word & \begin{minipage}{1.5cm}Marking\\ sequence\end{minipage} & \begin{minipage}{1.5cm}Marking\\ number\end{minipage}\\\hline
$\ta \td \ta \tb \ta \td \tb \td \ta \te \tc \tb \tc \tb$ & & $0$\\
$\ta \td \ta \overline{\tb} \ta \td \overline{\tb} \td \ta \te \tc \overline{\tb} \tc \overline{\tb}$ & $\tb$ & $4$\\
$\ta \td \ta \overline{\tb} \ta \td \overline{\tb} \td \ta \te \overline{\tc \tb \tc \tb}$ & $\tc$ & $3$\\
$\ta \td \ta \overline{\tb} \ta \td \overline{\tb} \td \ta \overline{\te \tc \tb \tc \tb}$ & $\te$ & $3$\\
$\ta \overline{\td} \ta \overline{\tb} \ta \overline{\td \tb \td} \ta \overline{\te \tc \tb \tc \tb}$ & $\td$ & $4$\\
$\overline{\ta \td \ta \tb \ta \td \tb \td \ta \te \tc \tb \tc \tb}$ & $\ta$ & $1$
\end{tabular}
\hspace{20pt}
\begin{tabular}{l|l|l}
Marked word & \begin{minipage}{1.5cm}Marking\\ sequence\end{minipage} & \begin{minipage}{1.5cm}Marking\\ number\end{minipage}\\\hline
$\ta \td \ta \tb \ta \td \tb \td \ta \te \tc \tb \tc \tb$ & & $0$\\
$\ta \overline{\td} \ta \tb \ta \overline{\td} \tb \overline{\td} \ta \te \tc \tb \tc \tb$ & $\td$ & $3$\\
$\overline{\ta \td \ta} \tb \overline{\ta \td} \tb \overline{\td \ta} \te \tc \tb \tc \tb$ & $\ta$ & $3$\\
$\overline{\ta \td \ta \tb \ta \td \tb \td \ta} \te \tc \overline{\tb} \tc \overline{\tb}$ & $\tb$ & $3$\\
$\overline{\ta \td \ta \tb \ta \td \tb \td \ta} \te \overline{\tc \tb \tc \tb}$ & $\tc$ & $2$\\
$\overline{\ta \td \ta \tb \ta \td \tb \td \ta \te \tc \tb \tc \tb}$ & $\te$ & $1$
\end{tabular}
\caption{An illustration of the marking sequence $(\tb, \tc, \te, \td, \ta)$ with marking number of $4$ for the word $\beta = \ta \td \ta \tb \ta \td \tb \td \ta \te \tc \tb \tc \tb$ (left side), and the marking sequence $(\td, \ta, \tb, \tc, \te)$ with marking number of $3$ for the word $\beta$ (right side).}
\label{fig:markingSequence}
\end{figure}

The locality number has applications in pattern matching with variables~\cite{FSTTCS}. A \emph{pattern} is a word that consists of \emph{terminal symbols}  (e.g., $\ta, \tb, \tc$), treated as constants, and \emph{variables} (e.g., $x_1, x_2, x_3, \ldots $). A pattern is mapped to a word by substituting the variables by strings of terminals. For example, $x_1 x_1 \tb \ta \tb x_2 x_2$ can be mapped to $\ta \tc \ta \tc \tb \ta \tb \tc \tc$ by the substitution $(x_1 \to \ta \tc, x_2 \to \tc)$. Deciding whether a given pattern matches (i.e., can be mapped to) a given word is one of the most important problems that arise in the study of patterns with variables (note that the concept of patterns with variables arises in several different domains like combinatorics on words (word equations~\cite{kar:the}, unavoidable patterns~\cite{Loth02}), pattern matching~\cite{ami:gen}, language theory~\cite{ang:fin2}, learning theory~\cite{ang:fin2,erl:lea,ng:dev,rei:dis,kea:apo,FeMaMeSc16_TCS}, database theory~\cite{bar:exp}, as well as in practice, e.g., extended regular expressions with backreferences~\cite{Fre2013,FreydenbergerSchmid2017,Schmid2016,fri:mas}, used in programming languages like Perl, Java, Python, etc.). Unfortunately, the \emph{matching problem} is $\NP$-complete \cite{ang:fin2} in general (it is also $\NP$-complete for strongly restricted variants~\cite{FerSch2015, FeMaMeSc14_stacs} and also intractable in the parameterised setting~\cite{FerSchVil2015}); see also~\cite{ManeaSchmid2019} for a survey. As demonstrated in~\cite{rei:patIaC}, for the matching problem a paradigm shift as sketched in the first paragraph above yields a very promising algorithmic approach. 
More precisely, any class of patterns with bounded treewidth (for suitable graph representations) can be matched in polynomial-time. However, computing (and therefore algorithmically exploiting) the treewidth of a pattern is difficult (see the discussion in~\cite{FeMaMeSc14_stacs,rei:patIaC}), which motivates more direct string-parameters that bound the treewidth and are simple to compute (virtually all known structural parameters that lead to tractability~\cite{FSTTCS, FeMaMeSc14_stacs,rei:patIaC, shi:pol2} are of this kind (the efficiently matchable classes investigated in~\cite{DayEtAl2018ta} are one of the rare exceptions)). This also establishes an interesting connection between ad-hoc string parameters and the more general (and much better studied) graph parameter treewidth. The locality number is a simple parameter directly defined on strings, it bounds the treewidth and the corresponding marking sequences can be seen as instructions for a dynamic programming algorithm. However, compared to other ``tractability-parameters'', it seems to cover best the treewidth of a string, but whether it can be efficiently computed is unclear.

In this paper, we investigate the problem of computing the locality number (in the exact sense as well as fixed-parameter algorithms and approximations) and, by doing so, we establish an interesting connection to the graph parameters cutwidth and pathwidth with algorithmic implications for approximating cutwidth. In the following, we first discuss related results in more detail and then outline our respective contributions.

Note that a conference version of this paper has been published in ICALP~2019~\cite{CaselEtAl2019}.

\subsection{Known Results and Open Questions} \label{sec:knownResults}

For $\locProb$, only exact exponential-time algorithms are known and whether it can be solved in polynomial-time, or whether it is at least fixed-parameter tractable is mentioned as open problems in~\cite{FSTTCS}. Approximation algorithms have not yet been considered. Addressing these questions is the main purpose of this paper. 

$\pathwidthProb$ and $\cutwidthProb$ are $\NP$-complete, but fixed-parameter tractable with respect to the standard parameters $\pathwidth(G)$ or $\cutwidth(G)$, respectively (even with ``linear'' fpt-time $g(k) \bigo(n)$~\cite{Bodlaender1996, Bodlaender2012, ThilikosSB05}). 
With respect to approximation, their minimisation variants have received a lot of attention, mainly because they yield (like many other graph parameters) general algorithmic approaches for numerous graph problems, i.e., a good linear arrangement or path-decomposition can often be used for a dynamic programming (or even divide and conquer) algorithm. More generally speaking, pathwidth and cutwidth are related to the more fundamental concepts of small balanced vertex or edge separators for graphs (i.e., a small set of vertices (or edges, respectively) that, if removed, divides the graph into two parts of roughly the same size. More precisely, $\pathwidth(G)$ and $\cutwidth(G)$ are upper bounds for the smallest balanced \emph{vertex} separator of $G$ and the smallest balanced \emph{edge} separator of $G$, respectively (see~\cite{FeigeEtAl2008} for further details and explanations of the algorithmic relevance of balanced separators). 

With respect to $\minPathwidthProb$, there is an approximation algorithm with ratio $\bigo(\log n\sqrt{\log opt})$ (see~\cite[Corollary 6.5]{FeigeEtAl2008}) and an approximation algorithm with ratio $\bigo(\mathsf{tw} \sqrt{\log \mathsf{tw}})$, where $\mathsf{tw}$ is the treewidth of the graph (see~\cite{GroenlandEtAl2023}).

For $\minCutwidthProb$, there is an $\bigo(\sqrt{\log(n)} \log(n))$ approximation algorithm. This follows from using the $\bigo(\sqrt{\log n})$-approximation for sparsest cut of ~\cite{AroraEtAl2009} as described in the cutwidth approximation of~\cite[Section 3.8]{Leighton1999}, which adds a $\log(n)$-factor.

\subsection{Our Contributions} 

There are two natural approaches to represent a word $\alpha$ over alphabet $\Sigma$ as a graph $G_{\alpha} = (V_{\alpha}, E_{\alpha})$. The first option is to represent $\alpha$'s positions as vertices, i.\,e., $V_{\alpha} = \{1, 2, \ldots, |\alpha|\}$, and then somehow use the edges to represent the actual symbols on these position. We present such a reduction to relate the locality number of words with the pathwidth of graphs. More precisely, we transform a word $\alpha$ into a graph such that $|E_{\alpha}| = \bigo(|\alpha|^2)$ and $\loc(\alpha) \leq \pathwidth(G_{\alpha}) \leq 2\loc(\alpha)$. \par
The second option is to use a vertex per symbol that occurs in $\alpha$, i.\,e., $V_{\alpha} = \Sigma$, and somehow use the edges to encode where these symbols occur in the word. By such a reduction with $|E_{\alpha}| = \bigo(|\alpha|)$, we can relate the locality number of words with the cutwidth of graphs in the sense that $\cutwidth(G_{\alpha}) = 2\loc(\alpha)$. \par

Since these reductions are parameterised reductions and also approximation preserving, known upper bounds for the problems of computing or approximating the pathwidth or cutwidth of graphs carry over to the problem of computing or approximating the locality number of words. More precisely, we can conclude that $\locProb$ is fixed-parameter tractable if parameterised by $|\Sigma|$ or by the locality number (answering the respective open problem from~\cite{FSTTCS}), and also that there is a polynomial-time $\bigo(\sqrt{\log(\opt)} \log(n))$-approximation algorithm for $\minlocProb$. \par
In addition to these reductions, we can also show how an arbitrary multi-graph $G = (V, E)$ and an edge $e \in E$ can be represented by a word $\alpha_{G, e}$ over alphabet $V$, of length $|E|$ and with $\cutwidth(G) \leq \loc(\alpha_{G, e}) \leq \cutwidth(G)+1$. Moreover, there must be an edge $e \in E$ such that $\loc(\alpha_{G, e}) = \cutwidth(G)$. This describes an (approximation preserving) Turing-reduction from $\cutwidthProb$ to $\locProb$ which allows us to conclude that $\locProb$ is $\NP$-complete (which solves the other open problem from~\cite{FSTTCS}).\par
Even though the reduction from $\minlocProb$ to $\minPathwidthProb$ yields an $\bigo(\sqrt{\log(\opt)} \log(n))$-approximation algorithm for $\minlocProb$, it is also important to directly investigate whether obvious greedy strategies for constructing marking sequences (e.\,g., always marking a symbol next that leads to the smallest number of marked blocks) yield good approximation ratios. On the one hand, if such strategies fail, we can rule them out as possible approximation algorithms for computing the locality number, and, on the other hand, if such simple strategies work, then, due to the reduction from $\minCutwidthProb$ to $\minlocProb$, this might open a new angle to the approximation of cutwidth. Unfortunately, we can formally show that many natural candidates for greedy strategies fail to yield promising approximation algorithms (and are therefore also not helpful for cutwidth approximation). \par

Expecting an improvement of cutwidth approximation -- a heavily researched area -- by translating the problem into a string problem and then investigating the approximability of this string problem seems naive. This makes it even more surprising that linking cutwidth with pathwidth via the locality number is in fact helpful for cutwidth approximation. More precisely, by plugging together our reductions from $\minCutwidthProb$ to $\minlocProb$ and from $\minlocProb$ to $\minPathwidthProb$, we obtain a reduction which directly transfers approximation results from $\minPathwidthProb$ (e.\,g., the ones of~\cite{FeigeEtAl2008, GroenlandEtAl2023}; see the discussion of Section~\ref{sec:knownResults}) to $\minCutwidthProb$. On the one hand, this reduction yields new concrete approximation ratios for cutwidth (mentioned in more detail below), but, on the other hand, it also shows that any future improvement of pathwidth approximation directly carries over to cutwidth approximation (although there is a constant factor involved for constant factor approximations of pathwidth).\footnote{Note that both the pathwidth and the cutwidth is NP-hard to approximate to within a constant factor~\cite{WuEtAl2014}.}\par
A reason why this direct reduction from cutwidth to pathwidth has been overlooked might be that the literature on cutwidth and pathwidth approximation is focussed on more general approximation techniques (i.\,e., vertex and edge separators), which then yield approximation algorithms for these graph parameters. Another reason might be that this relation is less obvious on the graph level and becomes more apparent if linked via the string parameter of locality, as in our considerations. Nevertheless, since pathwidth and cutwidth are such crucial parameters for graph algorithms, we also translate our locality based reduction into one from graphs to graphs directly.\par

We conclude this subsection by summarising the main results of this work:
\begin{itemize}
\item We present approximation preserving reductions from $\locProb$ to $\cutwidthProb$ and $\pathwidthProb$, and an approximation preserving reduction from $\cutwidthProb$ to $\locProb$.
\item $\locProb$ is $\NP$-complete, but fixed-parameter tractable if parameterised by $|\Sigma|$ or by the locality number.
\item There is a polynomial-time $\bigo(\sqrt{\log(\opt)} \log(n))$-approximation algorithm for $\minlocProb$. 
\item Many obvious greedy strategies for $\minlocProb$ do not yield good approximation algorithms. 
\item We present an approximation preserving reduction from $\cutwidthProb$ to $\pathwidthProb$.
\item There is a polynomial-time $\bigo(\sqrt{\log(\opt)} \log(n))$-approximation algorithm and a polynomial-time $\bigo(\sqrt{\log(\opt)} \opt)$-approximation algorithm for $\minCutwidthProb$.
\end{itemize}

\subsection{Organisation of the Paper}

In Section~\ref{sec:prim}, we give basic definitions (including the central parameters of the locality number, the cutwidth and the pathwidth). In the next Section~\ref{sec:ExamplesWordComb}, we discuss the concept of the locality number with some examples and some word combinatorial considerations. The purpose of this section is to develop a better understanding of this parameter for readers less familiar with string parameters and combinatorics on words (the technical statements of this section are formally proven in the appendix). \par
The main results are presented in Sections~\ref{sec:cutwidth},~\ref{sec:approx}~and~\ref{sec:direct}. First, in Section~\ref{sec:cutwidth}, we present the reductions from $\locProb$ to $\cutwidthProb$ and vice versa, and we discuss the consequences of these reductions. Then, in Section~\ref{sec:approx}, we show how $\locProb$ can be reduced to $\pathwidthProb$, which yields an approximation algorithm for computing the locality number; furthermore, we investigate the performance of direct greedy strategies for approximating the locality number. Finally, since we consider this of high importance independent of the locality number, we provide a direct reduction from cutwidth to pathwidth in Section~\ref{sec:direct}.\par
In Section~\ref{sec:conc}, we conclude the paper by discussing some related topics and possible further research questions.

\section{Preliminaries}\label{sec:prim}

We now define basic concepts of complexity theory, some basics about string algorithms and the locality number, the central graph parameters cutwidth and pathwidth, and the formal definitions of the decision and minimisation problems to be investigated.

\subsection{(Parameterised) Complexity Theory and Approximation Algorithms}

We briefly summarise the fundamentals of parameterised complexity~\cite{FlumGrohe2006, DowneyFellows2013} and approximation algorithms~\cite{Ausiello1999}.

A \emph{parameterised problem} is a decision problem with instances $(x, k)$, where $x$ is the actual input and $k \in \mathbb{N}$ is the \emph{parameter}.
A parameterised problem $P$ is \emph{fixed-parameter tractable} if there is an \emph{$\fpt$-algorithm} for it, i.e., one that solves $P$ on input $(x, k)$ in time $f(k) \cdot p(|x|)$ for a recursive function $f$ and a polynomial $p$. In this case, we also say that the parameterised problem $P$ can be solved with \emph{fpt-running-time} $f(k) \cdot p(|x|)$.

We use the $\bigo^*(\cdot)$ notation which hides multiplicative factors polynomial in $|x|$.

A minimisation problem $P$ is a triple $(I, S, m)$, where $I$ is the \emph{set of instances}, $S$ is a function that 
maps instances $x \in I$ to the \emph{set of feasible solutions} for $x$, and $m$ is the 
\emph{objective function} that maps pairs $(x,y)$ with $x \in I$ and $y \in S(x)$ to a positive 
rational number. For every $x\in I$, we denote $m^*(x) = \min\{m(x,y)\colon y\in S(x)\}$. 
For $x\in I$ and $y\in S(x)$, the value $R(x,y)=\frac{m(x, y)}{m^*(x)}$ is the \emph{performance ratio} of $y$ with respect to $x$.
An algorithm $\mathcal{A}$ is an \emph{approximation algorithm} for $P$ with ratio $r : 
\mathbb{N} \to \mathbb{Q}$ (or an $r$-approximation algorithm, for short) if, for every $x \in I$, 
$\mathcal{A}(x) = y \in S(x)$, and $R(x,y)\leq r(|x|)$.
 We also let $r$ be of 
the form $\mathbb{Q}\times \mathbb{N} \to \mathbb{Q}$ when the ratio $r$ depends on $m^*(x)$ and $|x|$; 
in this case, we write $r(\opt,|x|)$. We further assume that the function $r$ is monotonically non-decreasing. Unless stated otherwise, all approximation algorithms run in polynomial time with respect to $|x|$.

\subsection{Basic String Definitions and Locality}\label{sec:localityDefinition}

The set of strings (or words) over an alphabet $X$ is denoted by $X^*$, by $|\alpha|$ we denote the length of a word $\alpha$, $\alphabet(\alpha)$ is the smallest alphabet $X$ with $\alpha \in X^*$, and $\varepsilon$ denotes the empty word with $|\varepsilon| = 0$. A string $\beta$ is called a \emph{factor} of $\alpha$ if $\alpha = \alpha' \beta \alpha''$; if $\alpha' = \varepsilon$ or $\alpha'' = \varepsilon$, then $\beta$ is a \emph{prefix} or a \emph{suffix} of $\alpha$, respectively. For a position $j$, $1 \leq j \leq |\alpha|$, we refer to the symbol at position $j$ of $\alpha$ by the expression $\alpha[j]$, and 
$\alpha[j..j'] = \alpha[j] \alpha[j + 1] \ldots \alpha[j']$, $1\leq j \leq j' \leq |\alpha|$. For a word $\alpha$ and $x \in \alphabet(\alpha)$, let $\pset_x(\alpha) = \{i \mid 1 \leq i \leq |\alpha|, \alpha[i] = x\}$ be the set of all positions where $x$ occurs in $\alpha$. For a word $\alpha$, let $\alpha^0=\varepsilon$ and $\alpha^{i+1}=\alpha\alpha^i$ for $i\geq 0$. 

Let $\alpha$ be a word and let $X = \alphabet(\alpha) = \{x_1,x_2,\ldots, x_n\}$. A \emph{marking sequence} (\emph{over $X$}) of $\alpha$ is an enumeration, or ordering on the letters from $X$, and hence may be represented either as an ordered list of the letters or, equivalently, as a bijection $\sigma : \{1,2,\ldots,|X|\} \to \{1,2,\ldots,|X|\}$. For a marking sequence $\sigma = (x_{\sigma(1)}, x_{\sigma(2)}, \ldots, x_{\sigma(m)})$, a word $\alpha$ and every $i$ with $1 \leq i \leq m$, by \emph{stage $i$ of $\sigma$} we denote the word $\alpha$ with exactly positions $\bigcup^i_{j = 1} \pset_{x_{\sigma(j)}}(\alpha)$ marked. 
The \emph{marking number} $\pi_\sigma(\alpha)$ (of $\sigma$ with respect to $\alpha$) is the maximum number of marked blocks in any stage $i$ of $\sigma$. We say that $\alpha$ is $k$-local if and only if, for some marking sequence $\sigma$, we have $\pi_\sigma(\alpha) \leq k$, and the smallest $k$ such that $\alpha$ is $k$-local is the \emph{locality number} of $\alpha$, denoted by $\loc(\alpha)$. We say that a word $w$ is \emph{strictly} $k$-local, if $\loc(\alpha) = k$. A marking sequence $\sigma$ with $\pi_{\sigma}(\alpha) = \loc(\alpha)$ is \emph{optimal} (for $\alpha$).
For illustration, see also the examples given in Section~\ref{sec:intro}.
 
For a word $\alpha$, the \emph{condensed form of $\alpha$}, denoted by $\condensed(\alpha)$, is obtained by replacing every maximal factor $x^k$ with $x \in \alphabet(\alpha)$ by $x$. For example, $\condensed(x_1x_1x_2x_2x_2x_1x_2x_2) =x_1x_2x_1x_2$. A word $\alpha$ is \emph{condensed} if $\alpha = \condensed(\alpha)$.

\begin{observation}\label{condensedWordsObs}
For a word $\alpha \in X^*$ and any marking sequence $\sigma$ over $X$, we have $\pi_{\sigma}(\condensed(\alpha)) = \pi_{\sigma}(\alpha)$. Moreover, if $\alpha$ is condensed, then the maximum number of occurrences of any symbol in $\alpha$ is bounded by $2\loc(\alpha)$ (see~\cite{FSTTCS} for details). In particular, this means that for condensed words $\alpha \in X^*$, we have that $|\alpha| = \bigo(|X|\loc(\alpha))$.
\end{observation}

Observation~\ref{condensedWordsObs} justifies that in the following, we are only concerned with condensed words (and therefore words with at most $2\loc(\alpha)$ occurrences per symbol and total length of at most $|X|2\loc(\alpha)$). In particular, for any word $\alpha$ we can compute $\condensed(\alpha)$ in time $\bigo(|\alpha|)$; thus algorithms for computing the locality number (and the respective marking sequences) for \emph{condensed} words extend to algorithms for general words. For the sake of convenience, in the following we shall only use the term \emph{word} and keep in mind that we always talk about condensed words.

\subsection{Basic Graph Definitions and Graph Parameters}\label{sec:graphTerm}

Let $G = (V,E)$ be a (multi)graph with the vertices $V = \{v_1, \ldots, v_n\}$. A {\em cut} of $G$ is a partition $(V_1,V_2)$ of $V$ into two disjoint subsets $V_1,V_2$, $V_1 \cup V_2 = V$; the (multi)set of edges $\cutedges(V_1,V_2)=\{\{x,y\}\in E \mid x\in V_1, y\in V_2\}$ is called the \emph{cut-set} or the \emph{(multi)set of edges crossing the cut}, while $V_1$ and $V_2$ are called the \emph{sides} of the cut. The {\em size} of this cut is the number of crossing edges, i.e., $|\cutedges(V_1,V_2)|$. A \emph{linear arrangement} of the (multi)graph $G$ is a sequence $(v_{j_1}, v_{j_2}, \ldots, v_{j_n})$, where $(j_1, j_2, \ldots, j_n)$ is a permutation of $(1, 2, \ldots, n)$. For a linear arrangement $L = (v_{j_1}, v_{j_2}, \ldots, v_{j_n})$, let $L(i)=\{v_{j_1}, v_{j_2}, \ldots, v_{j_i}\}$. For every $i$, $1 \leq i < n$, we consider the cut $(L(i), V\setminus L(i))$ of $G$, and denote the cut-set $\cutedges_L(i) = \cutedges(L(i), V\setminus L(i))$
(for technical reasons, we also set $\cutedges_L(0) = \cutedges_L(n) = \emptyset$). We define the \emph{cutwidth} of $L$ by $\cutwidth(L) = \max\{|\cutedges_L(i)| \mid 0 \leq i \leq n\}$. Finally, the cutwidth of $G$ is the minimum over all cutwidths of linear arrangements of $G$, i.e., $\cutwidth(G) = \min\{\cutwidth(L) \mid L \text{ is a linear arrangement for $G$}\}$. \par
Let us discuss an example. To this end, let $H = (V, E)$ with $V = \{u, v, w, x, y, z\}$ and the edges $E$ are as illustrated in Figure~\ref{fig:exampleCutwidth}. A possible linear arrangement for $H$ is $L = (u, v, w, x, y, z)$ with $|\cutedges_L(1)| = 3$, $|\cutedges_L(2)| = 5$, $|\cutedges_L(3)| = 5$, $|\cutedges_L(4)| = 2$ and $|\cutedges_L(5)| = 2$; thus, $\cutwidth(L) = 5$ (a cut with maximum size is $(L(3), V\setminus L(3))$, as illustrated by a vertical line in Figure~\ref{fig:exampleCutwidth}). Another linear arrangement is $L' = (w, u, x, v, y, z)$ with $\cutwidth(L') = 3$ (see Figure~\ref{fig:exampleCutwidth}). Moreover, it can be verified that $\cutwidth(H) = 3$.

\begin{figure}[tb]
\begin{center}

\includegraphics{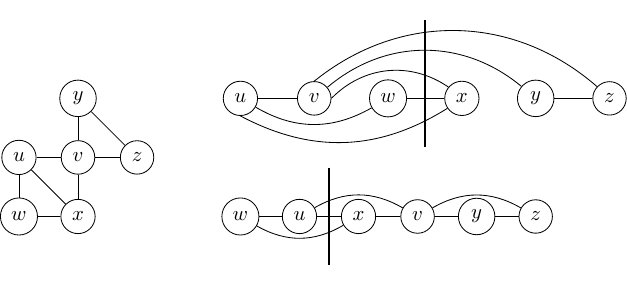}

\end{center}
\caption{A graph $H$ and two possible linear arrangements with cuts of maximum size illustrated by vertical lines.}
\label{fig:exampleCutwidth}
\end{figure}

A path decomposition (see~\cite{Bodlaender2012}) of a connected graph $G = (V, E)$ is a tree decomposition whose underlying tree is a path, i.e., a sequence $Q = (B_0, B_1, \ldots, B_{m})$ (of \emph{bags}) with $B_i \subseteq V$, $0 \leq i \leq m$, satisfying the following two properties:
\begin{itemize}
\item \emph{Cover property}: for every $\{u, v\} \in E$, there is an index $i$, $0 \leq i \leq m$, with $\{u, v\} \subseteq B_i$.
\item \emph{Connectivity property}: for every $v \in V$, there exist indices $i_v$ and $j_v$, $0\leq i_v\leq j_v\leq m$, such that $\{j \mid v \in B_j\}=\{i\mid i_v\leq i\leq j_v\}$. In other words, the bags that contain $v$ occur on consecutive positions in $(B_0, \ldots, B_{m})$.
\end{itemize}
The \emph{width} of a path decomposition $Q$ is $\width(Q) = \max\{|B_i| \mid 0 \leq i \leq m\} - 1$, and the \emph{pathwidth} of a graph ${G}$ is $\pathwidth({G}) = \min\{\width(Q) \mid Q \text{ is a path decomposition of } {G}\}$. A path decomposition is \emph{nice} if $B_0 = B_{m} = \emptyset$ and, for every $i$, $1 \leq i \leq m$, either $B_i = B_{i-1} \cup \{v\}$ or $B_i = B_{i-1} \setminus \{v\}$, for some $v \in V$. We further use $|Q|=\sum_{i=1}^n |B_i|$.\par
For example, $(\{u, w, x\}, \{u, v, x\}, \{v, y, z\})$ is a width-$2$ path decomposition for the graph $H$ defined above (see also Figure~\ref{fig:exampleCutwidth}); as can be easily seen, $\pathwidth({H}) = 2$. \par
In this paper, it shall be convenient to interpret path decompositions as \emph{marking schemes} of $V$ in the following way. Every vertex from $V$ can be marked as $\open$, as $\act$ or as $\closed$. Initially, every vertex is $\open$. Only $\open$ vertices can be set to $\act$, only $\act$ vertices can be set to $\closed$, and in the end of the marking scheme, all vertices must be $\closed$. In each step of the marking scheme, we allow an arbitrary number of vertices to be set from $\open$ to $\act$, and an arbitrary number of vertices to be set from $\act$ to $\closed$. Any such marking scheme translates into a sequence $Q = (B_0, B_1, \ldots, B_{m})$ with $B_i \subseteq V$, $0 \leq i \leq m$, by letting $B_i$ contain exactly the $\act$ vertices of step $i$ of the marking scheme. Obviously, $Q$ satisfies the connectivity property (this is a direct consequence from the fact that every vertex is marked $\act$ at some point and as soon as it is marked $\closed$, it is never marked as $\act$ again). If $Q$ also satisfies the cover property, then $Q$ is a path decomposition, and in this case we call the corresponding marking scheme a \emph{pd-marking scheme}. The width of a pd-marking scheme is then the maximum number of vertices which are marked $\act$ at the same time minus one. \par
For example, the path decomposition $(\{u, w, x\}, \{u, v, x\}, \{v, y, z\})$ for graph $H$ can be represented as a pd-marking scheme as illustrated in Figure~\ref{fig:examplePathwidth} (for convenience, we omit the vertex labels; see also Figure~\ref{fig:exampleCutwidth} for an illustration of $H$).\par
Both the locality number of a word and the pathwidth of a graph is defined via markings. In order to avoid confusion, we therefore use different terminology to distinguish between these two concepts (see also the terminology defined in Section~\ref{sec:localityDefinition}): The markings for words are called \emph{marking sequences}, while the markings for graphs are called \emph{pd-marking schemes}; the versions of a word during a marking sequence are called the \emph{stages} (of the marking sequence), while the different marked version of a graph during a pd-marking scheme are called the \emph{steps} (of the pd-marking scheme).

\begin{figure}[tb]
\begin{center}

\includegraphics{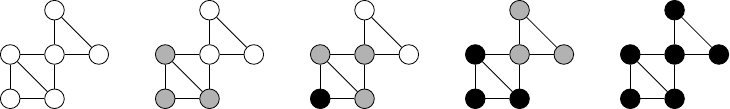}

\end{center}
\caption{The path decomposition $(\{u, w, x\}, \{u, v, x\}, \{v, y, z\})$ for graph $H$ (see Figure~\ref{fig:exampleCutwidth}) as a pd-marking scheme. White vertices are $\open$, grey vertices are $\act$, and black vertices are $\closed$. In order to see that this is a pd-marking scheme, it is sufficient to observe that for every edge there is a step in the pd-marking scheme where both incident vertices are grey (i.\,e., $\act$).}
\label{fig:examplePathwidth}
\end{figure}

\subsection{Problem Definitions}

We next formally define the computational problems of computing the parameters defined above. By $\locProb$, $\cutwidthProb$ and $\pathwidthProb$, we denote the problems to check for a given word $\alpha$ or graph $G$ and integer $k \in \mathbb{N}$, whether $\loc(\alpha) \leq k$, $\cutwidth(G) \leq k$, and $\pathwidth(G) \leq k$, respectively. Note that since we can assume that $k \leq |\alpha|$ and $k \leq |G|$, whether $k$ is given in binary or unary has no impact on the complexity. With the prefix $\textsc{Min}$, we refer to the minimisation variants. More precisely, $\minlocProb = (I, S, m)$, where $I$ is the set of words, $S(\alpha)$ is the set of all marking sequences for $\alpha$ and $m(\alpha, \sigma) = \pi_{\sigma}(\alpha)$ (note that $m^*(\alpha) = \loc(\alpha)$); $\minCutwidthProb=(I,S,m)$, where $I$ are all multigraphs, $S(G)$ is the set of linear arrangements of $G$, and $m(G,L)=\cutwidth(L)$ (note that $m^*(G)=\cutwidth(G)$); finally, $\minPathwidthProb=(I,S,m)$, where $I$ are all graphs, $S(G)$ is the set of path decompositions of $G$, and $m(G,Q)=\width(Q)$ (note that $m^*(G)=\pathwidth(G)$). 

\section{Examples and Word Combinatorial Considerations}\label{sec:ExamplesWordComb}

In this section, we discuss some examples that illustrate the concepts of marking sequences and the locality number, and we also discuss some word combinatorial properties related to the locality number. Note that for illustration purposes, the example words considered in this section are not necessarily condensed. \par
It is easy to see that $1$-locality implies some sort of \emph{palindromic} structure of a word. For example, palindromes like the English words {\em radar}, {\em refer} and {\em rotator} are obviously $1$-local, while the palindrome $\ta \tb \ta \tb \ta \tb \ta$ is obviously not $1$-local. Moreover, also $1$-local non-palindromes, like the word {\em blender}, have some palindromic structure. More precisely, it can be shown that a word $w$ is $1$-local if and only if $w \in \{a_1\}^* \{a_2\}^* \ldots \{a_n\}^* \{a_{n-1}\}^* \ldots \{a_1\}^*$, such that $\{a_1, a_2, \ldots, a_i\} \cap \{a_{i + 1}, a_{i + 2}, \ldots, a_n\} = \emptyset$ for every $1 \leq i \leq n$. An alternative equivalent point of view is that $1$-local words are necessarily of the form $y^\ell \alpha y^r$, where $\alpha$ is $1$-local with $y \notin \alphabet(\alpha)$. For further details, we refer to~\cite{FSTTCS}, where the structure of $1$-local and $2$-local words is characterised.\par

Determining structural properties that lead to high locality is more challenging. The Finnish word {\em tutustuttu} (perfect passive of \emph{tutustua}---to meet) is $4$-local, while 
\begin{quote}{\em pneumonoultramicroscopicsilicovolcanoconiosis}\end{quote} 
is an (English) $8$-local word,
and 
\begin{quote}{\em lentokonesuihkuturbiinimoottoriapumekaanikkoaliupseerioppilas}\end{quote}
 is a
$10$-local (Finnish) word. In general, in order to have a high locality number, a word needs to contain many alternating occurrences of (at least) two letters. For instance, $(x_1x_2)^n$ is $n$-local. However, the number of occurrences of a letter alone is not always a good indicator of the locality of a word. The German word {\em Einzelelement} (a basic component of a construction) has 5 occurrences of {\em e}, but is only $3$-local, as witnessed by marking sequence ({\em{l,m,e,i,n,z,t}}):
\begin{align*}
&\mathtt{E} \mathtt{i} \mathtt{n} \mathtt{z} \mathtt{e} \mathtt{l} \mathtt{e} \mathtt{l} \mathtt{e} \mathtt{m} \mathtt{e} \mathtt{n} \mathtt{t}& &\leadsto&
&\mathtt{E} \mathtt{i} \mathtt{n} \mathtt{z} \mathtt{e} \overline{\mathtt{l}} \mathtt{e} \overline{\mathtt{l}} \mathtt{e} \mathtt{m} \mathtt{e} \mathtt{n} \mathtt{t}& &\leadsto&
&\mathtt{E} \mathtt{i} \mathtt{n} \mathtt{z} \mathtt{e} \overline{\mathtt{l}} \mathtt{e} \overline{\mathtt{l}} \mathtt{e} \overline{\mathtt{m}} \mathtt{e} \mathtt{n} \mathtt{t}& &\leadsto&
&\overline{\mathtt{E}} \mathtt{i} \mathtt{n} \mathtt{z} \overline{\mathtt{e} \mathtt{l} \mathtt{e} \mathtt{l} \mathtt{e} \mathtt{m} \mathtt{e}} \mathtt{n} \mathtt{t}& &\leadsto& \\
&\overline{\mathtt{E} \mathtt{i}} \mathtt{n} \mathtt{z} \overline{\mathtt{e} \mathtt{l} \mathtt{e} \mathtt{l} \mathtt{e} \mathtt{m} \mathtt{e}} \mathtt{n} \mathtt{t}& &\leadsto&
&\overline{\mathtt{E} \mathtt{i} \mathtt{n}} \mathtt{z} \overline{\mathtt{e} \mathtt{l} \mathtt{e} \mathtt{l} \mathtt{e} \mathtt{m} \mathtt{e} \mathtt{n}} \mathtt{t}& &\leadsto&
&\overline{\mathtt{E} \mathtt{i} \mathtt{n} \mathtt{z} \mathtt{e} \mathtt{l} \mathtt{e} \mathtt{l} \mathtt{e} \mathtt{m} \mathtt{e} \mathtt{n}} \mathtt{t}& &\leadsto&
&\overline{\mathtt{E} \mathtt{i} \mathtt{n} \mathtt{z} \mathtt{e} \mathtt{l} \mathtt{e} \mathtt{l} \mathtt{e} \mathtt{m} \mathtt{e} \mathtt{n} \mathtt{t}}& &&
\end{align*}

For this example marking sequence, it is worth noting that marking the many occurrences of $e$ joins several individual marked blocks into one marked block. This also intuitively explains the correspondence between the locality number and the maximum number of occurrences per symbol (in condensed words): if there are $2k$ occurrences of a symbol, then, by marking this symbol either at least $k$ new marked blocks are created, or at least $k$ marked blocks must already exist before marking this symbol (see Observation~\ref{condensedWordsObs}). \par
A repetitive structure often leads to high locality. For example, note that {\em tutustuttu} from above is nearly a repetition. Regarding the question of how repetitions of a word affect its locality number, we can show the following result (see the Appendix for a proof).

\begin{lemma}
Let $w=u^i$ be the $i$-times repetition of $u\in X^{\ast}$ and $i\in\N$. If $u$ is strictly $k$-local then $\loc(w) \in \{ik-i+1, ik\}$.
\end{lemma}

The well-known \emph{Zimin words} \cite{Loth02} also have high locality numbers compared to their lengths. These words are important in the domain of avoidability, 
as it was shown that a terminal-free pattern is unavoidable (i.e., it occurs in every infinite word over a large enough finite alphabet) if and only if it occurs in a Zimin word. 
The  Zimin words $Z_i$, for $i\in\MN$, are inductively defined by $Z_1=x_1$ and $Z_{i+1}=Z_ix_{i+1}Z_i$. 
Clearly, $|Z_i| = 2^i-1$ for all $i\in\MN$. 
Regarding the locality of $Z_i$, note that marking $x_2$ leads to $2^{i-2}$ marked blocks; further, marking $x_1$ first and then the remaining symbols in an arbitrary order only extends or joins marked blocks. Thus, we obtain a sequence with marking number $2^{i-2}$. In fact, with respect to the locality of Zimin words, we can show the following result (see the Appendix for a proof).

\begin{lemma}
$\loc(Z_i)=\frac{|Z_i|+1}{4}=2^{i-2}$ for $i\in\MN_{\geq 2}$.
\end{lemma}

Notice that both Zimin words and $1$-local words have an obvious palindromic structure. However, in the Zimin words, the letters occur multiple times, but not in large blocks, while in $1$-local words there are at most $2$ blocks of each letter. With respect to palindromes, we can show the following general result (see the Appendix for a proof).

\begin{lemma}
If $w$ is a palindrome, with $w=uau^R$ or $w=uu^R$ ($u^R$ denotes the reversal of~$u$), and $\loc(u)=k$, then $\loc(w)\in \{2k-1,2k,2k+1\}$.
\end{lemma}

\section{Locality and Cutwidth}\label{sec:cutwidth}

In this section, we introduce polynomial-time reductions from the problem of computing the locality number of a word to the problem of computing the cutwidth of a graph, and vice versa. This establishes a close relationship between these two problems (and their corresponding parameters), which lets us derive several upper and lower complexity bounds for $\locProb$. We also discuss the approximation-preserving properties of our reductions, which shall be important later on.

\subsection{Reducing Locality Number to Cutwidth}\label{sec:locToCutwidth}

First, we show a reduction from $\locProb$ to $\cutwidthProb$. For a word $\alpha$ and an integer $k\in \mathbb{N}$, we build a multigraph $H_{\alpha, k} = (V,E)$ whose set of nodes $V = \alphabet(\alpha) \cup \{\$, \#\}$ consists of symbols occurring in $\alpha$ and two additional characters $\$,\#\notin\alphabet(\alpha)$. The multiset of edges $E$ is constructed as follows. We scan through the word from left to right, and for every individual occurrence of a size-$2$ factor $x y$ (often called \emph{digrams} in the literature on strings), we add an edge between vertices $x$ and $y$. Since the edges are undirected, this means that both an occurrence of $x y$ and an occurrence of $y x$ will cause the addition of and edge $\{x, y\}$. Moreover, independent of $\alpha$'s structure, we add $2k$ edges between $\$$ and $\#$, one edge between $\$$ and the first letter of $\alpha$, and one edge between $\$$ and the last letter of $\alpha$.\par

Let us clarify this reduction with an example. Let $\alpha = \ta \tb \tc \tb \tc \td \tb \ta \td \ta$ and $k = 2$. This means that $H_{\alpha,k} = (V, E)$ with $V = \{\ta, \tb, \tc, \td, \$, \#\}$. There are $2k = 4$ edges between vertices $\$$ and $\#$. Moreover, two edges connect $\$$ with the first and last letter of $\alpha$, respectively, which, in this case, is both the letter $\ta$, which means there are two edges between $\$$ and $\ta$. The edges that actually depend on $\alpha$'s structure are obtained by scanning through $\alpha$ from left to right with a window of size $2$, and adding the respective edge for each size-$2$ factor that we see in this window. For our example, we thus add an edge between $\ta$ and $\tb$ due to factor $\ta \tb$, then an edge between $\tb$ and $\tc$ due to the factor $\tb \tc$, then another edge between $\tb$ and $\tc$ due to the factor $\tc \tb$, and so on. This results in the multigraph shown on the left of Figure~\ref{fig:cut-loc}.

\begin{figure}
\begin{center}

\includegraphics{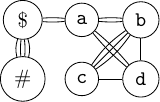}
\hspace{1cm}
\includegraphics{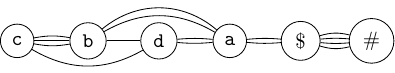}

\end{center}
\caption{The graph $H_{\alpha, k}$ for $\alpha = \ta \tb \tc \tb \tc \td \tb \ta \td \ta$ and $k = 2$; an optimal linear arrangement of $H_{\alpha,k}$ with cutwidth $4$ induces the optimal marking sequence $(\tc, \tb, \td, \ta)$ for $\alpha$ with marking number $2$.}
\label{fig:cut-loc}
\end{figure}

We claim that $\loc(\alpha) \leq k$ if and only if the smallest cutwidth of any linear arrangement of the graph $H_{\alpha,k}$ is exactly $2k$. Before formally proving this, let us discuss this on an intuitive level with our example. It is not hard to see that any linear arrangement of $H_{\alpha, k}$ with cutwidth $2k$ must start with vertices $\#, \$$ or end with vertices $\$, \#$ (this is due to the edges between $\#$ and $\$$). This means that any linear arrangement of $H_{\alpha, k}$ with cutwidth $2k$ induces a marking sequence for $\alpha$. The optimal linear arrangement of $H_{\alpha, k}$ with cutwidth $4$ shown on the right side of Figure~\ref{fig:cut-loc} thus induces the marking sequence $(\tc, \tb, \td, \ta)$. If we now carry out this marking sequence on $\alpha$ and in parallel move through $H_{\alpha, k}$'s linear arrangement from left to right, we can see that the vertices to the left of our current position correspond to marked symbols, the vertices to the right of our current position correspond to unmarked symbols, and there is exactly one crossing edge per boundary between a marked and an unmarked block in $\alpha$ (in particular, all non-crossing edges to the left and all non-crossing edges to the right correspond to size-$2$ factors that are completely contained in a marked block or an unmarked block, respectively). This means that the cuts of the linear arrangement have size twice the current number of marked blocks in the marked version of $\alpha$; thus, the size of any cut is at most $2k$, where $k$ is the marking number of the corresponding marking sequence. Note, however, that we need the edges between $\$$ and the first and last symbol to also enforce two crossing edges per marked prefix or suffix, and the $2k$ edges between $\$$ and $\#$ force the cutwidth to be at least $2k$. In particular, for our example, the linear arrangement has a cutwidth of $4$, while the corresponding marking sequence has a marking number of $2$ (which is optimal for $\alpha$). We shall now provide a formal proof for our claim.

\begin{lemma}\label{loc-cut}
The graph $H_{\alpha,k}$ satisfies $\cutwidth(H_{\alpha, k})=2k$ if and only if $\loc(\alpha) \leq k$.
\end{lemma}

\begin{proof}
    Suppose firstly that $\alpha$ is $k$-local, and let $\sigma = 
    (x_1,x_2,\ldots,x_n)$ be an optimal marking sequence of $\alpha$. Consider the 
    linear arrangement $L=(x_1,x_2,\ldots, x_n, \$,\#)$. We observe that $|\cutedges(\{x_1,x_2,\ldots,x_n,\$\},\{\#\})| = 2k$ and $|\cutedges(\{x_1,x_2,\ldots,x_n\},\{\$,\#\})| = 2$. Now consider a cut $(K_1, K_2)$ with $K_1 = \{x_1,x_2,\ldots,x_i\}$ and $K_2 = \{x_{i+1},\ldots, x_n, \$,\#\})$ for $1\leq i <n$. 
    Every edge $e \in \cutedges(K_1, K_2)$ is of the form $\{x_j, x_h\}$ with $j\leq i <h$, or of the form $\{\alpha[1], \$\}$ or $\{\$, \alpha[|\alpha|]\}$. Consequently, every edge $e \in \cutedges(K_1, K_2)$ corresponds to a unique factor $x_j x_h$ or $x_h x_j$ of $\alpha$ with $j\leq i <h$ and, after exactly the symbols $x_1,x_2,\ldots, x_i$ are marked, $x_j$ is marked and $x_h$ is not, or to a unique factor $\alpha[1]$ or $\alpha[|\alpha|]$ and, after exactly the symbols $x_1,x_2,\ldots, x_i$ are marked, $\alpha[1]$ or $\alpha[|\alpha|]$ is marked. Since there can be at most $k$ marked blocks in $\alpha$ after marking the symbols $x_1,\ldots,x_i$, there are at most $2k$ such factors, which means that $|\cutedges(K_1,K_2)| \leq 2k$.
    Thus $\cutwidth(H_{\alpha, k})\leq 2k$. Note that any linear arrangement must at some 
    point separate the nodes $\$$ and $\#$, meaning $\cutwidth(H_{\alpha, k})\geq 
    2k$, so we get that $\cutwidth(H_{\alpha, k})=2k$.

    Now suppose that the cutwidth of $H_{\alpha, k}$ is $2k$ and let $L$ be an optimal 
    linear arrangement witnessing this fact. Firstly, we note that $L$ must either 
    start with $\#$ followed by $\$$ (i.e., have the form $(\#,\$, \ldots)$)  or end 
    with $\#$ preceded by $\$$ (i.e., have the form $(\ldots, \$,\#)$. Otherwise, 
    since $H_{\alpha, k}$ is connected, every cut separating $\$$ and $\#$ would be of 
    size strictly greater than $2k$. Since a linear ordering and its mirror image 
    have the same cutwidth, we may assume that the optimal linear arrangement 
    has the form $L=(x_{\tau(1)}, 
    x_{\tau(2)},\ldots,x_{\tau(n)},\$,\#)$ for some permutation $\tau$ of 
    $\{1,\ldots,n\}$. Let $\sigma$ be the marking sequence $(x_{\tau(1)}, 
    x_{\tau(2)},\ldots,x_{\tau(n)})$ of $\alpha$ induced by $\tau$. Suppose, for 
    contradiction, that for some~$i$, with $1 \leq i < n$, after marking 
    $x_{\tau(1)},\ldots,x_{\tau(i)}$, we have $k^\prime > k$ marked blocks. Furthermore, let $K_1 = \{x_{\tau(1)},\ldots,x_{\tau(i)}\}$ and $K_2 = \{x_{\tau(i+1)}, \ldots, x_{\tau(n)}, \$, \#\}$. For every marked block $\alpha[s..t]$ that is not a prefix or a suffix of $\alpha$, we have $\alpha[s], \alpha[t] \in K_1$ and $\alpha[s - 1], \alpha[t + 1] \in K_2$ and therefore $\{\alpha[s-1], \alpha[s]\}, \{\alpha[t], \alpha[{t+1}]\} \in \cutedges(K_1, K_2)$. Moreover, for a marked prefix $\alpha[1..s]$, we have $\alpha[1], \alpha[s] \in K_1$ and $\$, \alpha[s + 1] \in K_2$ and therefore $\{\alpha[1], \$\}, \{\alpha[s], \alpha[s+1]\} \in \cutedges(K_1, K_2)$. Analogously, the existence of a marked suffix $\alpha[t..|\alpha|]$ leads to $\{\alpha[|\alpha|], \$\}, \{\alpha[t-1], \alpha[t]\} \in \cutedges(K_1, K_2)$.
    Consequently, for each marked block, we have two unique edges in $\cutedges(K_1, K_2)$, which implies $|\cutedges(K_1, K_2)| \geq 2k' > 2k$.
    This contradicts the assumption that $L$ is a witness that 
    $H_{\alpha, k}$ has cutwidth $2k$. Thus, $\alpha$ must be $k$-local.
    \end{proof}

Next we formally state how upper complexity bounds for $\cutwidthProb$ carry over to $\locProb$ via the reduction above. In particular, we formulate this result to also cover fpt-algorithms with respect to the standard parameters $\cutwidth(G)$ and $\loc(\alpha)$. The maximum degree in a multigraph $G$ is bounded from above by $2\cutwidth(G)$, so the number of nodes $n$ and the number of edges $h$ satisfy $h \le n \cdot \cutwidth(G)$. Hence, we state the complexity in terms of $n$ and $\cutwidth(G)$ rather than with respect to $h$, which is the actual input size (assuming connected graphs). 

\begin{lemma}\label{locAlgoMin}
If $\minCutwidthProb$ (resp. $\cutwidthProb$) can be solved in $\bigo(f(\cutwidth(G),n))$ time for a multigraph $G$ with $n$ vertices, then $\minlocProb$ (resp., $\locProb$) can be solved in $\bigo(f(2\loc(\alpha),{|\Sigma|+2})\allowbreak+|\alpha|)$ time for a word $\alpha$ over an alphabet~$\Sigma$.
\end{lemma}

\begin{proof}
We only show the claim for $\minCutwidthProb$; the case of $\cutwidthProb$ follows immediately from Lemma~\ref{loc-cut}. Our goal is to compute $\loc(\alpha)$ for the word $\alpha$, i.e., the minimum $k$ such that $\alpha$ is $k$-local. By Lemma~\ref{loc-cut}, we get $\cutwidth(H_{\alpha,k})=2k$
for $k\ge \loc(\alpha)$ and $\cutwidth(H_{\alpha,k})>2k$ for $k < \loc(\alpha)$. Consider the multigraph $H_{\alpha}$ obtained by removing the vertices $\#$ and $\$$ from $H_{\alpha,i}$ (the result does not depend on $i\in \N$), and observe that $2\loc(\alpha)-4\leq \cutwidth(H_\alpha)\leq 2\loc(\alpha)$. Indeed, if $\cutwidth(H_\alpha)<2\loc(\alpha)-4$, we add the two missing nodes $\#$ and $\$$ (in this order) as a prefix to an optimal linear arrangement for  $H_{\alpha}$ and get a linear arrangement of $H_{\alpha,\loc(\alpha)-1}$ of width $2\loc(\alpha)-2$, a contradiction. 

Hence, in order to determine $\loc(\alpha)$, we proceed as follows: Compute $\ell=\cutwidth(H_{\alpha})$ and iterate over integers $k$, $\frac{\ell}{2}\leq k\leq \frac{\ell+4}{2}$, in  increasing order, checking if $\cutwidth(H_{\alpha,k})=2k$.
The first value for which this equality holds equals $\loc(\alpha)$, and the marking sequence induced by the respective linear arrangement of $H_{\alpha,k}$ is an optimal one for $\alpha$ (as proved in Lemma~\ref{loc-cut}).
\end{proof}

Next, we formally state and prove the approximation preserving properties of this reduction.

\begin{lemma}\label{loc-cut-apx}
If there is an $r(\opt,h)$-approximation algorithm for $\minCutwidthProb$ running in $\bigo(f(h))$ time for an input multigraph with $h$ edges, then there is an $(r(2\opt,|\alpha|)+\frac{1}{\opt})$-approximation algorithm for $\minlocProb$ running in $\bigo(f(|\alpha|)+|\alpha|)$ time on an input word~$\alpha$.
\end{lemma}

\begin{proof}
    As already indicated in the proof of Lemma~\ref{loc-cut}, for $k=\loc(\alpha)$, every linear arrangement for $H_{\alpha,k}$ naturally translates to a marking sequence for $\alpha$. However, in an approximate linear arrangement, the vertices  $\#$ and $\$$ do not have to be at the first (or last) positions. Still, the marking sequence corresponding to the linear arrangement $L$ can have not more than  $\frac{\cutwidth(L)}{2}+1$ marked blocks, since only  suffix and prefix can be marked blocks which correspond to only one instead of two edges in a cut in $H_{\alpha,k}$. This observation remains valid if we do not include the extra vertices  $\#$ and $\$$ in  $H_{\alpha,k}$ in the reduction. Let $H_\alpha$ be the graph obtained from $H_{\alpha,k}$ (for some $k$) by removing the extra vertices $\#$ and $\$$ (observe that this also removes the dependence on $k$). Removing vertices only decreases the cutwidth, so Lemma~\ref{loc-cut} implies that  $\cutwidth(H_{\alpha})\leq 2m^*(\alpha)$. Let $\alpha$ be an instance of $\minlocProb$ and $\mathcal A$ an $r(\opt,h)$-approximation for $\minCutwidthProb$ on multigraphs. 
    The approximation algorithm $\mathcal A$ run on $H_\alpha$ returns a linear arrangement $L=\mathcal A(H_{\alpha})$ with $\cutwidth(L)\leq r(\opt,h)\cutwidth(H_{\alpha})$. Let $\sigma$ be the marking sequence corresponding to $L$, then $R(\alpha,\sigma)=\frac{\pi_\sigma(\alpha)}{m^*(\alpha)}\leq \frac{2}{\cutwidth(H_{\alpha})}(\frac{\cutwidth(L)}{2}+1)= \frac{\cutwidth(L)}{\cutwidth(H_{\alpha})}+\frac{1}{m^*(\alpha)}=R(H_{\alpha},L)+\frac{1}{m^*(\alpha)}$.
    The performance ratio $R(H_{\alpha},L)$ is at most $r(\opt,h)$, where $h=|\alpha|$ is the number of edges in $H_{\alpha}$. For the optimum value $k=m^*(\alpha)$, the cutwidth of $H_{\alpha,k}$ is at least $2k-2$ and $\sigma$ has performance ratio at most $r(2\opt,|\alpha|)$ (with respect to the optimum value $k$ for $\minlocProb$).
    The approximation procedure  builds the graph $H_{\alpha}$ in $\bigo(|\Sigma|)$, runs $\mathcal A$ on $H_{\alpha}$ in $\bigo(f(|\alpha|))$ and translates the linear arrangement into a marking sequence $\sigma$ in $\bigo(|\Sigma|)$. This gives an $(r(2\opt,|\alpha|)+\frac{1}{\opt})$-approximation for $\minlocProb$ running time in $\bigo(f(|\alpha|)+|\alpha|)$.
    \end{proof}

\subsection{Reducing Cutwidth to Locality Number}\label{sec:cutwidthToLocality}

We next describe a reduction from $\cutwidthProb$ to $\locProb$. To this end, let $H = (V,E)$ be a connected multigraph, where $V$ is the set of nodes and $E$ the multiset of edges (for technical reasons, we assume $|V| \geq 2$). First, we construct the multigraph $H' = (V, E')$ obtained by duplicating every edge in $H$. As such, each node in $H'$ has even degree, so we can fix some Eulerian cycle $C$ (i.e., a cycle visiting each edge exactly once) in $H'$, and, moreover, $\cutwidth(H')=2\cutwidth(H)$. For each edge $e \in E'$, let $\alpha_e$ be the word over $V$ that represents a traversal of the Eulerian path $P$ obtained from $C$ by deleting $e$. It is not important on which endpoint of the deleted edge $e$ we start the traversal of the Eulerian path $P$ (note that by introducing some order on $V$, we could easily make this choice unique).

\begin{figure}[tb]
\begin{center}

\includegraphics{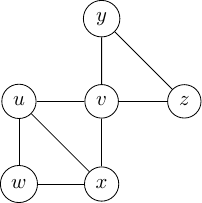}
\hspace{0.2cm}
\includegraphics{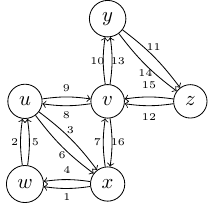}
\hspace{0.1cm}
\includegraphics{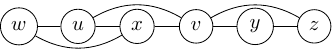}

\end{center}
\caption{A graph $H$ and its multigraph $H'$ obtained by doubling the edges; the edge labels describe a Eulerian cycle that starts and ends in $x$. Deleting the edge $(v, x)$ in this cycle yields the word $\alpha_{(v, x)} = xwuxwuxvuvyzvyzv$, which has an optimal marking sequence $(w, u, x, v, y, z)$ with marking number $3$, and, thus, induces an optimal linear arrangement of $H$ with cutwidth $3$.}
\label{fig:loc-cut}
\end{figure}

Let us again clarify this reduction with an example. Let $H$ be the graph shown on the left of Figure~\ref{fig:loc-cut} (note that this is the same graph from Figure~\ref{fig:exampleCutwidth}). By duplicating every edge in $H$, we obtain the graph $H'$ shown in the middle of Figure~\ref{fig:loc-cut} (the edge directions and labels are not an explicit part of the reduction and shall serve illustrative purposes). A possible Eulerian cylce of $H'$ that starts in vertex $x$ is  illustrated by the edge labels and the edge directions, i.\,e., the Eulerian cycle is $(x, w, u, x, w, u, x, v, u, v, y, z, v, y, z, v, x)$. Now splitting this cycle into a path by deleting its last edge $e = (v, x)$ results in a Eulerian path that corresponds to the word $\alpha_{(v, x)} = xwuxwuxvuvyzvyzv$.\par
The important property of the word $\alpha_{e}$ is that for every edge $\{x, y\}$ of $H$ (except $e$), it contains two distinct size-$2$ factors that are $xy$- or $yx$-factors (for example, the original edge $\{x, w\}$ translates into two $xw$-factors, while the original edge $\{u, v\}$ translates into a $vu$-factor and a $uv$-factor). Consider the cuts of a fixed linear arrangement of $H'$ from left to right and the marked versions of $\alpha_{e}$ with respect to the corresponding marking sequence. By construction, every boundary between a currently marked block and an adjacent unmarked block corresponds to a crossing edge of the current cut. This means that if there are $\ell$ marked blocks, then, depending on whether there is a marked prefix or suffix, the current cut must have size at least $2(\ell-1)$ and at most $2\ell$. On the other hand, every crossing edge of the current cut (except $e$, if contained in the cut) is responsible for a marked symbol next to an unmarked one. This means that if the size of the current cut is $2\ell$ (note that it must be even due to the duplication of edges), then there are $\ell$ marked blocks if no prefix or suffix is marked, there are $\ell + 1$ marked blocks if both a prefix and a suffix is marked, and if a prefix is marked but no suffix is marked (or the other way around), then in the current marked version there are $2\ell - 1$ boundaries between marked and unmarked blocks, and therefore the current cut contains $2\ell - 1$ edges different from $e$ (the ones responsible for the $2\ell - 1$ boundaries between marked and unmarked blocks), and the additional edge $e$, which is not represented by any size-$2$ factor in $\alpha_e$. Consequently, if $H'$ has a cutwidth of $2k$ (which means that $H$ has a cutwidth of $k$), then the locality number of $\alpha_e$ is either $k$ or $k + 1$. \par
If $(v_1, v_2, \ldots, v_n)$ is a linear ordering for $H$ with optimal cutwidth, then choosing any $e$ that is adjacent to $v_n$ will result in a word $\alpha_e$ for which, if marked according to $H$'s optimal linear arrangement, the situation that both a prefix and suffix is marked can only happen at the end when all symbols are marked. This means that for such a choice of $e$, we have $\loc(\alpha_e) = \cutwidth(H)$.\par
With respect to our example, we note that in fact the optimal linear arrangement shown on the right of Figure~\ref{fig:loc-cut} has a cutwidth of $3$, while the corresponding optimal marking sequence has a marking number of $3$ for the word $\alpha_{\{v, x\}}$ (note that here the edge $\{v, x\}$ is not adjacent to the first or last vertex of the linear arrangement). We shall next provide a formal proof for these intuitive observations.

\begin{lemma}\label{cut-loc}
For any edge $e$ in $E'$, the word $\alpha_e$ satisfies $\cutwidth(H) \leq \loc(\alpha_e)\leq \cutwidth(H)+1$. Moreover, there is a vertex $v \in V$ such that $\loc(\alpha_e) = \cutwidth(H)$ for every edge $e$ incident to $v$.
\end{lemma} 

\begin{proof}
Let $k = \cutwidth(H)$. Note that there is a natural bijection between the linear arrangements of $H'$ and the marking sequences of the word $\alpha_e$, since they both are essentially permutations of $\{1, 2, \ldots, n\}$, i.\,e., for a permutation $\tau$ of $\{1,2,\ldots,n\}$, we can interpret $(x_{\tau(1)},x_{\tau(2)},\ldots,x_{\tau(n)})$ both as the linear 
arrangement for $H'$ and the a marking sequence of $\alpha_e$ \emph{induced by $\tau$}.
In the following, let $\tau$ be a permutation of $\{1, 2, \ldots, n\}$, let $i \in \{1,2,\ldots, n-1\}$, $K_1 = \{x_{\tau(1)},x_{\tau(2)},\ldots,x_{\tau(i)}\}$ and $K_2 = \{x_{\tau(i+1)},\ldots,x_{\tau(n)}\}$, and let $\cutedges(K_1, K_2) = 2\ell$ (note that since every edge has been duplicated, the size of every cut of $H'$ is even).\par
Consider $\alpha_e$ after marking the letters $x_1,\ldots,x_{\tau(i)}$.
For every marked block $\alpha[s..t]$ that is not a prefix or a suffix of $\alpha$, we have $\alpha[s], \alpha[t] \in K_1$ and $\alpha[s - 1], \alpha[t + 1] \in K_2$ and therefore $\{\alpha[s-1], \alpha[s]\}, \{\alpha[t], \alpha[t+1]\} \in \cutedges(K_1, K_2)$.
Moreover, for a marked prefix $\alpha[1..s]$, we have $\alpha[s] \in K_1$ and $\alpha[s + 1] \in K_2$ and therefore $\{\alpha[s], \alpha[s+1]\} \in \cutedges(K_1, K_2)$. Analogously, the existence of a marked suffix $\alpha[t..|\alpha|]$ leads to $\{\alpha[t-1], \alpha[t]\} \in \cutedges(K_1, K_2)$.

Conversely, for every edge in $\cutedges(K_1, K_2)$,
with the exception of $e$ (if $e$ is in $\cutedges(K_1, K_2)$ at all),
there is a unique length-$2$ factor $\alpha_e[p..p+1]$ of $\alpha_e$ such that either $\alpha_e[p]$ is marked and $\alpha_e[p+1]$ is unmarked, or vice-versa. Thus, if all marked blocks are internal, i.e., no marked block is a prefix or a suffix, then there are exactly $\ell$ marked blocks. Also, if both a prefix and a suffix occurs as a marked block, then we have $\ell+1$ marked blocks. Finally, if a prefix occurs as a marked block, but no suffix, or vice-versa, then there are only $\ell$ marked blocks; note that in this case we must have $e \in \cutedges(K_1, K_2)$. Since we consider all permutations, the arguments above are sufficient to conclude that, in our setting, each $\alpha_e$ has locality number either $k$ or $k+1$.

Furthermore, consider a linear ordering $L=(x_{j_1}, \ldots, x_{j_n})$ of $H'$ 
which is optimal, i.e., $|\cutedges_L(i)|\leq 2k$. Note that if either the first 
or last letter of $\alpha_e$ is the last letter $x_{j_n}$ to be marked according 
to the marking sequence induced by the linear ordering $(x_{j_1}, \ldots, 
x_{j_n})$, the case that both a suffix and prefix of $\alpha_e$ are marked 
cannot be reached until $i = n$ and the entire word is marked. Consequently, 
this would imply that $\alpha_e$ has locality number $k$. For any permutation 
of the linear ordering $(x_{j_1}, \ldots, x_{j_n})$, this holds for $\alpha_e$ 
where $e$ is an edge adjacent to the node $x_{j_n}$, since the path $P$ obtained 
by removing such an edge $e$ from $C$ must start or end with $x_{j_n}$. 
\end{proof}

Again, we formally state and prove the approximation preserving properties of this reduction.

\begin{lemma}\label{cut-loc-apx}
If there is an $r(\opt,|\alpha|)$-approximation algorithm for $\minlocProb$ running in $\bigo(f(|\alpha|))$ time on a word $\alpha$, then there is an $r(\opt, h)$-approximation algorithm for $\minCutwidthProb$ running in $\bigo(n(f(h)+h))$ time on a multigraph with $n$ vertices and $h$ edges.
\end{lemma}

\begin{proof}
Let $G=(V,E)$ be an instance of $\minCutwidthProb$ and $\mathcal A$ an $r(\opt,|\alpha|)$-approximation algorithm for $\minlocProb$. By Lemma~\ref{cut-loc}, there exists a vertex $v\in V$ such that  $\loc(\alpha_e)=\cutwidth(G)$ for any  edge $e\in E$ adjacent to $v$.  The approximation algorithm $\mathcal A$ hence returns on input $\alpha_e$ a marking sequence $\sigma$ with $\pi_\sigma(\alpha_e)\leq r(\opt,|\alpha|)\cutwidth(G)$.\par
In the proof of Lemma~\ref{cut-loc} it is further  shown that any marking sequence  $\sigma$ for $\alpha_e$ translates to a linear arrangement $L$ for $G$ with $\cutwidth(L)\leq \pi_\sigma(\alpha_e)$. The performance ratio of this linear arrangement is $R(G, L)=\frac{\cutwidth(L)}{\cutwidth(G)}\leq \frac{\pi_\sigma(\alpha_e)}{\loc(\alpha_e)}\leq R(\alpha_e,\sigma)$.\par
 The procedure which,  for each vertex $v\in V$, constructs $\alpha_e$ for some $e\in E$ adjacent to $v$  in $\bigo(h)$, runs $\mathcal A$ in $\bigo(f(|\alpha_e|))=\bigo(f(h))$ and checks the resulting linear arrangement in $\bigo(h)$ and returns the best linear arrangement among all $v\in V$,  yields an $r(\opt,h)$-approximation for $\minCutwidthProb$ on multigraphs in $\bigo(n(f(h)+h))$.
\end{proof}

\subsection{Consequences of the Reductions} 

In the following, we discuss the lower and upper complexity bounds that we obtain from the reductions provided above. We first note that since $\cutwidthProb$ is $\NP$-complete, so is $\locProb$. In particular, note that this answers one of the main questions left open in~\cite{FSTTCS}.

\begin{theorem}\label{locNPC}
$\locProb$ is $\NP$-complete (under Turing reductions), even if every symbol has at most $3$ occurrences.  
\end{theorem}

\begin{proof}
Lemma~\ref{cut-loc} shows a polynomial time Turing reduction from $\cutwidthProb$ to $\locProb$. Indeed, given a (multi)graph $H$ we construct in linear time the multigraph $H'$ by duplicating its edges. $H'$ has an Eulerian cycle, so, using Hierholzer's algorithm, we can compute such a cycle in linear time \cite{Hierholzer1873}. Let $C$ be the computed Eulerian cycle. For each edge $e$ of $C$ construct, in linear time, the word $\alpha_e$ as described before Lemma \ref{cut-loc}. By Lemma \ref{cut-loc} we get that $\cutwidth(H)=\frac{\cutwidth(H')}{2}=\min\{\loc(\alpha_e)\mid e\mbox{ edge of } C\}$. This completes the reduction, and, thus, as $\cutwidthProb$ is $\NP$-hard (see, e.\,g.,~\cite{surveyDiaz}), we get that $\locProb$ is also $\NP$-hard.\par
In order to show that the hardness holds even if every symbol has at most $3$ occurrences, we first observe that in the reduction from Section~\ref{sec:cutwidthToLocality}, the number of occurrences of any symbol $x$ in the constructed word $\alpha_{e}$ corresponds to the degree of the vertex $x$ in the graph $H$. Hence, since $\cutwidthProb$ is $\NP$-complete already for graphs with maximum degree $3$ (see~\cite{MakedonEtAl1985}), it follows that $\locProb$ is $\NP$-complete even if every symbol has at most $3$ occurrences.\par
In order to show that $\locProb$ is in $\NP$, let $\alpha \in \Sigma^*$ be an arbitrary word and $k \in \mathbb{N}$. We can guess a marking sequence $\sigma$ for $\alpha$ in polynomial time, and then check in polynomial time whether its marking number $\pi_\sigma(\alpha)$ is less than or equal to~$k$.
\end{proof}

Next, we formally state the positive fixed-parameter tractability results that $\locProb$ inherits from $\cutwidthProb$ via the reduction from Section~\ref{sec:locToCutwidth} (note that the fixed-parameter tractability of $\locProb$ was left as open problem in~\cite{FSTTCS}).

\begin{theorem}\label{locFPTAlphabet}
$\locProb$ is fixed-parameter tractable if parameterised by $|\Sigma|$. Moreover, it can be solved in time and space $\bigo^*(2^{|\Sigma|})$, or in $\bigo^*(4^{|\Sigma|})$ time and polynomial space. 
\end{theorem}

\begin{proof}
In~\cite{BodlaenderFKKT12}, the authors present algorithms for $\cutwidthProb$ that run in $\bigo^*(2^{n})$ time and space, or in $\bigo^*(4^{n})$ time and polynomial space (where $n$ is the number of vertices), and they also work for multigraphs.\footnote{These algorithms actually support weighted graphs without any major modification and in the same complexity. In this setting, parallel edges connecting two vertices are replaced by a single ``super-edge'' whose weight is the number of parallel edges.} Hence, Lemma~\ref{locAlgoMin} implies that $\locProb$ can be solved in $\bigo^*(2^{|\Sigma|})$ time and space, or in $\bigo^*(4^{|\Sigma|})$ time and polynomial space. 
\end{proof}
\begin{theorem}\label{locFPTLocNumber}
$\locProb$ is fixed-parameter tractable if parameterised by $\loc(\alpha)$. Moreover, it can be solved with linear fpt-running-time $g(\loc(\alpha))\bigo(|\Sigma|)$.
\end{theorem}

\begin{proof}
The algorithm from~\cite{ThilikosSB05} solves $\cutwidthProb$ with linear fpt-running-time $g(\cutwidth(G)) \bigo(n)$ (where $n$ is the number of vertices). Hence, Lemma~\ref{locAlgoMin} implies that $\locProb$ can be solved with linear fpt-running-time $g(\loc(\alpha))\bigo(|\Sigma|)$. 
\end{proof}

The most natural parameters for $\locProb$ are the alphabet size $|\Sigma|$ and the standard parameter $\loc(\alpha)$ (recall that we have just seen that $\locProb$ is fixed-parameter tractable with respect to these two parameters). However, for string problems it is also common to investigate the parameterised complexity with respect to the maximum number of occurrences per symbols in the word $\alpha$ (we denote this parameter by $\maxocc{\alpha}$). Theorem~\ref{locNPC} already demonstrated that $\locProb$ is not fixed-parameter tractable with respect to $\maxocc{\alpha}$ (unless $\pclass = \NP$). However, we only know that $\locProb$ stays $\NP$-hard if $\maxocc{\alpha}$ is bounded by a constant $k$ with $k \geq 3$, and that the problem is trivial if $\maxocc{\alpha}$ is bounded by $1$ (in this case, the locality number is always $1$). The complexity of $\locProb$ is open for the case where $\maxocc{\alpha} \leq 2$.

\begin{openproblem}\label{maxxoccOpenProblem}
Can $\locProb$ be solved in polynomial time if $\maxocc{\alpha} \leq 2$?
\end{openproblem}

If, on the other hand, the maximum number of occurrences per symbol in $\alpha$ is large in terms of $\alpha$'s length, i.\,e., we have that $\maxocc{\alpha} = \Omega(\frac{|\alpha|}{\log(|\alpha|)})$, then $\locProb$ can be solved in polynomial time. Indeed, since $\maxocc{\alpha} \geq \frac{|\alpha|}{|\Sigma|}$, we can  conclude that $\frac{|\alpha|}{|\Sigma|} = \Omega(\frac{|\alpha|}{\log(|\alpha|)})$, which also means that $\log(|\alpha|) = \Omega(|\Sigma|)$. Consequently, $|\Sigma| = \bigo(\log(|\alpha|))$, which means that $\locProb$ can be solved in polynomial time by using the $\bigo^*(2^{|\Sigma|})$-time algorithm mentioned in Theorem~\ref{locFPTAlphabet}.

We conclude this section by pointing out that Lemmas~\ref{loc-cut-apx}~and~\ref{cut-loc-apx} imply some approximation results for $\minlocProb$. However, we shall discuss approximation issues in greater detail in Section~\ref{sec:approx}.

\section{Locality and Pathwidth}\label{sec:approx}

One of the main results of this section is a reduction from the problem of computing the locality number of a word $\alpha$ to the probem of computing the pathwidth of a graph. This reduction, however, does not technically provide a reduction from the decision problem $\locProb$ to $\pathwidthProb$, since the constructed graph's pathwidth ranges between $\loc(\alpha)$ and $2\loc(\alpha)$, and therefore the reduction cannot be used to solve $\minlocProb$ exactly. The main purpose of this reduction is to carry over approximation results from $\minPathwidthProb$ to $\minlocProb$ (also recall that exact and fpt-algorithms for $\minlocProb$ are obtained in Section~\ref{sec:cutwidth} via a reduction to $\minCutwidthProb$). Hence, in this section we are mainly concerned with approximation algorithms. \par
Our strongest positive result about the approximation of the locality number will be derived from the reduction mentioned above (see Section~\ref{sec:locToPW}). However, we shall first investigate in Section~\ref{sec:apprGreedy} the approximation performance of several obvious greedy strategies to compute the locality number (with ``greedy strategies'', we mean simple algorithmic strategies that build up a marking sequence from left to right by choosing the next symbol to be marked by some simple greedy rule). This is mainly motivated by two aspects. Firstly, ruling out simple strategies is a natural initial step in the search for approximation algorithms for a new problem. Secondly, due to the results of Section~\ref{sec:cutwidth}, the investigated greedy strategies for computing the locality number can also be interpreted as greedy strategies for computing the cutwidth of a graph. This may provide a new angle to approximating the cutwidth of a graph, i.e., some greedy strategies may only become apparent in the locality number point of view and are hard to see in the graph formulation of the problem. It may seem naive to expect new approximation results for cutwidth in this way, but, as mentioned in the introduction and as shall be discussed in detail in Section~\ref{sec:direct}, approximating the cutwidth via approximation of the locality number may be beneficial for cutwidth approximation (although not by using simple greedy strategies, but the algorithm that follows from the reduction to computing the pathwidth). \par
Before presenting the main results of this section, let us briefly discuss some inapproximability results for $\minlocProb$ that directly follow from the reductions of Section~\ref{sec:cutwidth} and known results about cutwidth approximation. Firstly, it is known that, assuming the Small Set Expansion Conjecture (denoted $\textsc{SSE}$; see~\cite{SSE}), there exists no constant-ratio approximation for $\minCutwidthProb$ (see~\cite{WuEtAl2014}). Consequently, approximating  $\minlocProb$ within any constant factor is also SSE-hard. In particular, we point out that stronger inapproximability results for $\minCutwidthProb$ are not known. \par

On certain graph classes, the $\textsc{SSE}$ conjecture is equivalent to the Unique Games Conjecture \cite{UCG} (see \cite{SSE,SSEfollow}), which, at its turn, was used to show that many approximation algorithms are tight (see~\cite{UCG-survey}) and is considered a major conjecture in inapproximability. However, some works seem to provide evidence that could lead to a refutation of $\textsc{SSE}$; see \cite{AroraBS10,BarakRS11,Sinop}. In this context, our negative result of Section~\ref{sec:apprGreedy} can also be interpreted as a series of unconditional results which state that multiple natural greedy strategies for computing the locality number (and their equivalents for computing the cutwidth) do not provide low-ratio approximations of $\minlocProb$ (or $\minCutwidthProb$, respectively).

\subsection{Greedy Strategies}\label{sec:apprGreedy}

Since a marking sequence is just a linear arrangement of the symbols of the input word, computing marking sequences seems to be well tailored to greedy algorithms: until all symbols are marked, we choose an unmarked symbol according to some greedy strategy and mark it. Unfortunately, we can formally show that many natural candidates for greedy strategies fail to yield promising approximation algorithms (and are therefore also not helpful for cutwidth approximation). \par
For a systematic investigation, we shall now define our \emph{basic greedy strategies}:\smallskip\\
\begin{tabular}{ll}
$\SOstrategy$ & Among all unmarked symbols, choose one with a smallest number of occurrences.\\
$\MOstrategy$ & Among all unmarked symbols, choose one with a largest number of occurrences.\\
$\SNMstrategy$ & Among all unmarked symbols, choose one that, after marking it, results in the\\
& smallest total number of marked blocks.\\
$\LRstrategy$ & Among all unmarked symbols, choose the one with the leftmost occurrence.
\end{tabular}\smallskip \\

These strategies are -- except for $\LRstrategy$ -- nondeterministic, since there are in general several valid choices of the next symbol to mark. However, we will show poor performances for these strategies independent of the nondeterministic choices (i.\,e., the approximation ratio is bad for every possible nondeterministic choices), which are stronger negative results. We make the convention that all strategies -- except, of course, $\LRstrategy$ -- can choose any symbol as the initially marked symbol, which is justified by the fact that, in terms of running-time, we could afford to try out every possible choice of the first symbol.\par
In the following, for every greedy strategy $S$ and for every word $\alpha$, let $\greedy_S(\alpha)$ be the optimal marking number over all marking sequences that can be obtained by strategy $S$. For every word $\alpha$ let $\psi_S(\alpha) = \frac{\greedy_{S}(\alpha)}{\loc(\alpha)}$, and for every $\ell \in \mathbb{N}$, let $\psi_S(\ell) =\max\left\{\psi_S(\alpha) \mid |\alpha| = \ell\right\}$; the function $\psi_S : \mathbb{N} \to \mathbb{N}$ is called the \emph{approximation performance of strategy $S$}. Our negative results will be as follows. For every strategy $S$ and for every $\ell \in \mathbb{N}$, we show that there is a constant $c$ and a length-$\ell$ word $\alpha$, such that $\loc(\alpha) = c$, while $\greedy_S(\alpha) = \Omega(\ell)$. Note that this means that the approximation performance $\psi_S(\ell)$ of strategy $S$ is linear. \par
We first investigate strategies $\SOstrategy$ and $\SNMstrategy$. For every $\ell \geq 2$, let 
\begin{equation*}
\alpha = (x_1 x_2 \ldots x_{\ell})^2 x_1 \beta_1 x_2 \beta_2 x_3 \beta_3 \ldots \beta_{\ell-1} x_{\ell}\,,
%, \text{where, for every $i$, $1 \leq i \leq \ell-1$, } \beta_i = (y_{2i-1} y_{2i})^4, 1 \leq i \leq \ell-1\,.
\end{equation*}
where, for every $i$, $1 \leq i \leq \ell-1$, $\beta_i = (y_{2i-1} y_{2i})^4$. \par
For example, if $\ell = 4$, then we obtain $\alpha = (x_1 x_2 x_3 x_4)^2 x_1 (y_{1} y_{2})^4 x_2 (y_{3} y_{4})^4 x_3 (y_{5} y_{6})^4 x_4$. It can be easily seen that $|\alpha| = 11\ell - 8 = \bigo(\ell)$.

\begin{lemma}\label{SOSNMstrategyLemma}
$\psi_{\SOstrategy}(\alpha) \geq \frac{\ell-1}{6}$ and $\psi_{\SNMstrategy}(\alpha) \geq \frac{\ell-1}{6}$.
\end{lemma}

\begin{proof}
We first observe that $(x_1, y_1, y_2, x_2, y_3, y_4, x_3, y_5, y_6, \ldots)$ is an optimal marking sequence which shows that $\loc(\alpha) = 6$. Next, we consider how the strategies $\SOstrategy$ and $\SNMstrategy$ can mark $\alpha$. If the first marked symbol is some $x_i$, then $\SOstrategy$ would next mark all remaining $x_{j}$, $j \neq i$, in some order, since each symbols $x_i$ has fewer occurrences than any of the symbols $y_j$, and $\SNMstrategy$ would next mark all remaining $x_{j}$, $j \neq i$, since these can be marked in such an order that per new marking we increase the number of new blocks only by one, while marking some $y_j$ would increase the number of marked blocks by three. This leads to at least $\ell$ marked blocks. If, on the other hand, some $y_{2j-1}$ or $y_{2j}$ is marked first, then $\SOstrategy$ marks some $x_i$ next and then all remaining $x_{i'}$ as before, while $\SNMstrategy$ would mark the remaining symbol of $y_{2i-1}$ or $y_{2i}$ (because this is the only choice that does not increase the number of marked blocks) and then all $x_i$ in some order that produces a minimal number of new marked blocks. This results in at least $\ell-1$ marked blocks. Thus, $\psi_{\SOstrategy}(\alpha) \geq \frac{\ell-1}{6}$ and $\psi_{\SNMstrategy}(\alpha) \geq \frac{\ell-1}{6}$. 
\end{proof}

Next, we consider the strategy $\MOstrategy$. For every $\ell \geq 2$, let 

\begin{equation*}
\gamma = x_1 x_2 \ldots x_{\ell} x_1 y_1 x_2 y_2 x_3 y_3 \ldots y_{\ell-1} x_{\ell}\,.
\end{equation*}

For example, if $\ell = 5$, then we obtain $\gamma = x_1 x_2 x_3 x_4 x_5 x_1 y_1 x_2 y_2 x_3 y_3 x_4 y_4 x_5$. It can be easily seen that $|\gamma| = 3\ell - 1 = \bigo(\ell)$.

\begin{lemma}\label{MOstrategyLemma}
$\psi_{\MOstrategy}(\gamma) \geq \frac{\ell-1}{2}$.
\end{lemma}

\begin{proof}
We first observe that $(x_1, y_1, x_2, y_2, x_3, y_3, \ldots)$ is an optimal marking sequence which shows that $\loc(\gamma) = 2$. If $\MOstrategy$ marks some $x_i$ first, then it will mark all remaining $x_j$ next, since each of these symbols have $2$ occurrences. This results in $\ell$ marked blocks. If, on the other hand, the first symbol is some $y_j$, then again the symbols $x_i$ have the most occurrences and are therefore marked next in some order. This leads to $\ell-1$ marked blocks. Thus, $\psi_{\MOstrategy}(\gamma) \geq \frac{\ell-1}{2}$.
\end{proof}

Finally, we consider the strategy $\LRstrategy$. For every even number $\ell \geq 2$, let
\begin{equation*}
\delta = x_1 x_2 \ldots x_{\ell} x_1 x_\ell x_2 x_{\ell-1} x_3 x_{\ell-2} \ldots x_{\frac{\ell}{2}} x_{\frac{\ell}{2} + 1}\,. 
\end{equation*}
For example, if $\ell = 6$, then we obtain $\delta = x_1 x_2 x_3 x_4 x_5 x_6 x_1 x_6 x_2 x_5 x_3 x_4$. It can be easily seen that $|\delta| = 2\ell = \bigo(\ell)$.

\begin{lemma}\label{LRstrategyLemma}
$\psi_{\LRstrategy}(\delta) \geq \frac{\ell}{4}$.
\end{lemma}

\begin{proof}
We first observe that $\loc(\delta) = 2$, which is witnessed by the marking sequence 
\begin{equation*}
(x_1, x_{\ell}, x_2, x_{\ell-1}, x_3, x_{\ell-2}, \ldots, x_{\frac{\ell}{2}}, x_{\frac{\ell}{2} + 1}) 
\end{equation*}
(note that this marking sequence maintains a marked prefix and one additional marked internal factor starting with $x_{\ell} x_1 x_{\ell}$, which is alternately extended to both sides). 
Now assume that the strategy $\LRstrategy$ marks some symbol $x_i$. If $i \leq \frac{\ell}{2}$, then it marks next all the symbol $x_1, \ldots, x_{i-1}, x_{i+1}, \ldots, x_{\frac{\ell}{2}}$, which results in $\frac{\ell}{2} + 1$ marked blocks. If, on the other hand, $i > \frac{\ell}{2}$, then symbols $x_1, \ldots, x_{\frac{\ell}{2}}$ are marked next, which leads to at least $\frac{\ell}{2}$ marked blocks. Thus $\psi_{\LRstrategy}(\delta) \geq \frac{\ell}{4}$.
\end{proof}

In the following, we investigate another aspect of greedy strategies. Any symbol that is marked next in a marking sequence can have \emph{isolated} occurrences (i.\,e., occurrences that are not adjacent to any marked block) and \emph{block-extending} occurrences (i.\,e., occurrences with at least one adjacent marked symbol). Each isolated occurrence results in a new marked block, while each block-extending occurrence just extends an already existing marked block, and potentially may even combine two marked blocks and therefore may decrease the overall number of marked blocks. Therefore, marking a symbol when it only has isolated occurrences causes the maximum number of marked blocks that can ever be contributed by this symbol, and therefore this seems to be the worst time to mark this symbol. Hence, in terms of a greedy strategy, it seems reasonable to only mark symbols if they also have block-extending occurrence (obviously, this is not possible for the initially marked symbol). \par
We call a marking sequence $\sigma$ for a word $\alpha$ \emph{block-extending}, if every symbol that is marked except the first one has at least one block-extending occurrence. This definition leads to the general combinatorial question of whether every word has an optimal marking sequence that is block-extending, or whether the seemingly bad choices of marking a symbol that has only isolated occurrences (and that is not the first symbol) is necessary for optimal marking sequences. We answer this question in the negative.

For every even number $\ell \geq 2$, let $\beta = x_1 y x_2 y x_3 y \ldots x_{\ell} y$.

\begin{proposition}\label{GreedyPropositionBE}
$\loc(\beta) = \frac{2}{\ell}$ and, for every block-extending marking sequence $\sigma$ for $\beta$, we have $\pi_{\sigma}(\beta) \geq \ell-1$.
\end{proposition}

\begin{proof}
For the sake of convenience, let $\ell = 2k$ for some $k \geq 1$. Let $\sigma$ be any block-extending marking sequence for $\alpha$. If $\sigma$ marks $y$ first, then we have $2k$ marked blocks and if some $x_i$, $1 \leq i \leq 2k$, is marked first, then $y$ is marked next, which leads to $2k-1$ marked blocks. Thus, $\pi_{\sigma}(\beta) \geq 2k-1$. On the other hand, we can proceed as follows. We first mark the $k$ symbols $x_2, x_3, \ldots, x_{k + 1}$, which leads to $k$ marked blocks (and which is a marking sequence that is \emph{not} block extending). Then we mark $y$, which joins all the previously marked blocks into one marked block and turns $k - 1$ occurrences of $y$ into new individual marked blocks (i.\,e., the $k - 2$ occurrences of $y$ between the symbols $x_{k + 2}, x_{k + 3}, \ldots, x_{2k}$ and the single occurrence of $y$ after $x_{2k}$). Thus, there are still $k$ marked blocks, and from now on marking the rest of the symbols only decreases the number of marked blocks. Consequently, $\loc(\beta) \leq k$. Moreover, after any marking sequence has marked symbol $y$, there are $2k$ marked occurrences of symbol $y$. If these marked occurrences form at least $k$ marked blocks, the overall marking number of the marking sequence is at least $k$. If they form at most $k-1$ marked blocks, then at least $k + 1$ of the symbols $x_i$ must be marked as well, and since these symbols were marked before marking $y$, they have formed at least $k+1$ marked blocks before marking $y$. This means that the overall marking number is at least $k + 1$. This shows that $\loc(\beta) \geq k$, and therefore $\loc(\beta) = k$.
\end{proof}

This proposition points out that even simple words can have only optimal marking sequences that are not block-extending. In terms of greedy strategies however, Proposition~\ref{GreedyPropositionBE} only shows a lower bound of roughly $2$ for the approximation ratio of any greedy algorithm that employs some block-extending greedy strategy (since the lower bound applies to \emph{every} marking sequence that is block-extending). We note that the requirement of marking in a block-extending way does not specify which one of the possible block-extending symbols should be marked; trying out all of them is obviously too costly. In order to further investigate block-extending greedy strategies, we therefore couple the block-extending requirement with other greedy strategies, e.\,g., for the strategies $S \in \{\SOstrategy, \MOstrategy, \SNMstrategy, \LRstrategy\}$ from above, we denote by $\BEstrategy-S$ the greedy strategy which only marks block-extending symbols (except the first one) and chooses among possible block-extending symbols according to strategy $S$ (note that these strategies are still non-deterministic in the sense of how $S$ is a non-deterministic strategy). More precisely, the strategies $\BEstrategy-S$ are defined by replacing ``unmarked symbols'' by ``unmarked symbols that have at least one block-extending occurrence'' in the descriptions of the basic strategies from above. \par
We observe that for $S \in \{\SOstrategy, \MOstrategy, \SNMstrategy, \LRstrategy\}$ the strategy $\BEstrategy-S$ is not covered by $S$, i.\,e., the set of marking sequences for a word that can be obtained by $S$ is not necessarily a superset of the marking sequences that satisfy $\BEstrategy-S$. For example, an unmarked symbol that has a maximum (or minimum) number of occurrences among all block-extending symbols, might not have a maximum (or minimum, respectively) number of occurrences among all unmarked symbols. Therefore, the lower bounds form Lemmas~\ref{SOSNMstrategyLemma},~\ref{MOstrategyLemma}~and~\ref{LRstrategyLemma} do not carry over automatically. Nevertheless, $\BEstrategy-S$ behaves more or less in the same way as $S$ on the witness words $\alpha$, $\gamma$ and $\delta$ defined above, which yields the following.

\begin{lemma}\label{BEstrategyLemma}
Let the words $\alpha$, $\gamma$ and $\delta$ be defined as above. Then
\begin{itemize}
\item $\psi_{\BEstrategy-\SOstrategy}(\alpha) \geq \frac{\ell-1}{6}$,
\item $\psi_{\BEstrategy-\SNMstrategy}(\alpha) \geq \frac{\ell-1}{6}$, 
\item $\psi_{\BEstrategy-\MOstrategy}(\gamma) \geq \frac{\ell-1}{6}$,
\item $\psi_{\BEstrategy-\LRstrategy}(\delta) \geq \frac{\ell}{4}$.
\end{itemize}
\end{lemma}

\begin{proof}
Optimal marking sequences for the words and the lengths of these words have been discussed above, so it only remains to show that no $\BEstrategy-S$ with $S \in \{\SOstrategy, \MOstrategy, \SNMstrategy, \LRstrategy\}$ can produce better marking sequences.\par
We first discuss $\BEstrategy-\SOstrategy$ and $\BEstrategy-\SNMstrategy$ together. If the initially marked symbol is some $x_i$, then both $\BEstrategy-\SOstrategy$ and $\BEstrategy-\SNMstrategy$ would next mark all remaining $x_{j}$, $j \neq i$. The only difference to strategies $\SOstrategy$ and $\SNMstrategy$ is that $\BEstrategy-\SOstrategy$ and $\BEstrategy-\SNMstrategy$ must mark these symbols such that always a block-extending symbol is marked next, i.\,e., we must mark in such a way there are always at most $2$ marked blocks in the prefix $(x_1 x_2 \ldots x_{\ell})^2$. This leads to at least $\ell$ marked blocks. If some $y_{2i-1}$ or $y_{2i}$ is marked first, then $\BEstrategy-\SOstrategy$ marks $x_i$ or $x_{i+1}$ next (depending on whether $y_{2i-1}$ or $y_{2i}$ is marked first) and then all remaining $x_{j}$ as before, while $\BEstrategy-\SNMstrategy$ would mark the remaining symbol of $y_{2i-1}$ or $y_{2i}$ and then all $x_j$. This results in at least $\ell-1$ marked blocks. Thus, $\psi_{\BEstrategy-\SOstrategy}(\alpha) \geq \frac{\ell-1}{6}$ and $\psi_{\BEstrategy-\SNMstrategy}(\alpha) \geq \frac{\ell-1}{6}$.\par
Next, we consider $\BEstrategy-\MOstrategy$. It can be easily seen that no matter whether we initially mark some $x_i$ or some $y_i$, just like $\MOstrategy$ the strategy $\BEstrategy-\MOstrategy$ will mark all remaining $x_i$ next (these symbols have a maximum number of occurrences and two of them must be block-extending). Thus, $\BEstrategy-\MOstrategy$ necessarily produces $\ell$ or $\ell-1$ marked blocks, and therefore $\psi_{\BEstrategy-\MOstrategy}(\gamma) \geq \frac{\ell-1}{6}$. \par
Finally, we consider $\BEstrategy-\LRstrategy$. This strategy behaves similar to $\LRstrategy$, but we have to argue a bit more carefully. Assume that $x_i$ is the first symbol marked by $\BEstrategy-\LRstrategy$. If $i \leq \frac{\ell}{2}$, then we mark next $x_{i-1}$, then $x_{i-2}$ and so on, until all $x_1, x_2, \ldots, x_i$ are marked. Then the symbols $x_{i + 1}, x_{i + 2}, \ldots, x_{\frac{\ell}{2}}$ are marked, which leads to $\frac{\ell}{2} + 1$ marked blocks. If, on the other hand, $i = \frac{\ell}{2} + j$ with $j \geq 1$, then the next marked symbol will be $x_{\frac{\ell}{2}-(j-1)}$. Then, as before, we will mark $x_{\frac{\ell}{2}-(j-1)-1}, x_{\frac{\ell}{2}-(j-1)-2}, \ldots$ until all $x_1, x_2, \ldots, x_{\frac{\ell}{2}-(j-1)}$ are marked, and then $x_{\frac{\ell}{2}-(j-1) + 1}, \ldots, x_{\frac{\ell}{2}}$ are marked. This results in $\frac{\ell}{2}$ marked blocks. Thus, $\psi_{\BEstrategy-\LRstrategy}(\delta) \geq \frac{\ell}{4}$. 
\end{proof}

If we are concerned with block-extending greedy strategies, then it is natural to choose among the block-extending symbols according to the number of their block-extending or isolated occurrences. This motivates the following two strategies: \smallskip\\

\begin{tabular}{ll}
$\BEstrategyAltOne$ & Among all block-extending symbols, choose one that has the most\\
& block-extending occurrences.\\
$\BEstrategyAltTwo$ & Among all block-extending symbols, choose one \\
& for which $\frac{\text{\#block-extending occ.}}{\text{\#occ.}}$ is maximal.
\end{tabular}\smallskip\\

Unfortunately, the witness word $\alpha$ also shows poor approximation ratios for these strategies.

\begin{lemma}\label{BEAltOnestrategyProposition}
$\psi_{\BEstrategyAltOne}(\alpha) \geq \frac{\ell-1}{6}$ and $\psi_{\BEstrategyAltTwo}(\alpha) \geq \frac{\ell-1}{6}$.
\end{lemma}

\begin{proof}
Again, we first note that the optimal marking sequence for $\alpha$ and the length $|\alpha|$ have already been discussed above. If we first mark a symbol $x_i$, then, among all symbols extending a marked block, i.\,e., symbols $x_{i-1}$, $x_{i + 1}$, $y_{2i-1}$ and $y_{2{i+1}}$, the symbols $x_{i-1}$ and $x_{i+1}$ each have $3$ occurrences in total, two of which are block-extending, whereas the symbols $y_{2i-1}$ and $y_{2{i+1}}$ each have $4$ occurrences, only one of which is block-extending. Consequently, both $\BEstrategyAltOne$ and $\BEstrategyAltTwo$ chose either $x_{i-1}$ or $x_{i+1}$ next. This situation does not change until all $x_i$ are marked, which leads to $\ell$ marked blocks. If, on the other hand, some $y_{2i-1}$ or $y_{2i}$ is marked first, then we mark next the remaining symbol $y_{2i-1}$ or $y_{2i}$ such that $\beta_i$ is completely marked (this is due to the fact that this symbol is the only block-extending one). Next, $x_{i}$ and $x_{i+1}$ are marked, in some order (again, this is enforced by the fact that we can only mark block-extending symbols), which brings us back to the situation described above which leads to the marking of all remaining $x_j$, leading to $\ell-1$ marked blocks. Consequently, $\psi_{\BEstrategyAltOne}(\alpha) \geq \frac{\ell-1}{6}$ and $\psi_{\BEstrategyAltTwo}(\alpha) \geq \frac{\ell-1}{6}$.
\end{proof}

\subsection{Reducing Locality Number to Pathwidth}\label{sec:locToPW}

In the following, we obtain an approximation algorithm for the locality number by reducing it to the problem of computing the pathwidth of a graph. To this end, we first describe another way of how a word can be represented by a graph. Recall that the reduction to cutwidth from Section~\ref{sec:cutwidth} also transforms words into graphs. The main difference is that the reduction from Section~\ref{sec:cutwidth} turns every symbol from the alphabet into an individual vertex of the graph (thus, producing a graph with $\bigo(|\Sigma|)$ vertices), while the reduction to pathwidth will use a vertex per position of the word $\alpha$, i.\,e., $|\alpha|$ individual vertices. In the reduction from Section~\ref{sec:cutwidth} the information of the actual occurrences of the symbols in the word is encoded  by the edges (in particular, the length $|\alpha|$ is represented by the number of edges), while in the following reduction the alphabet is encoded by connecting the vertices that correspond to positions of the same symbol to cliques in the graph (in particular, the number of edges may range between $|\alpha|$ and $|\alpha|^2$). We proceed with a formal definition and an example.\par
For a word $\alpha$, the graph $\alphagraph = (V_{\alpha}, E_{\alpha})$ is defined by $V_{\alpha} = \{1, 2, \ldots, |\alpha|\}$ and $E_{\alpha} = \{\{i, i+1\} \mid 1 \leq i \leq |\alpha|-1\} \cup \{\{i, j\} \mid \{i, j\} \subseteq \pset_{x}(\alpha)$ for some $x \in \alphabet(\alpha)\}$. Intuitively, $\alphagraph$ is a length-$|\alpha|$ path (due to the edges $\{\{i, i+1\} \mid 1 \leq i \leq |\alpha|-1\}$), and, additionally, we add edges such that every set $\pset_x(\alpha)$, $x \in \alphabet(\alpha)$, is a clique.\par
Let us discuss a brief example. Let $\alpha = \tc \ta \tb \ta \tc \ta \tb \ta \tc$ be a word of length $9$ over alphabet $\{\ta, \tb, \tc\}$. The graph $G_{\alpha}$ has therefore vertices $\{1, 2, \ldots, 9\}$, which are connected into a path from vertex $1$ to vertex $9$. In addition, $\pset_{\ta}(\alpha) = \{2, 4, 6, 8\}$ forms a clique, $\pset_{\tb}(\alpha) = \{3, 7\}$ forms a clique, and $\pset_{\tb}(\alpha) = \{1, 5, 9\}$ forms a clique. See Figure~\ref{fig:localityPathwidthReduction} for an illustration. 

\begin{figure}
\centering

\includegraphics{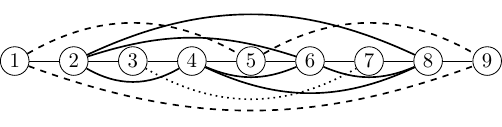}

\caption{The graph $G_{\alpha}$ for $\alpha = \tc \ta \tb \ta \tc \ta \tb \ta \tc$; the three cliques are drawn with different edge-types.}
\label{fig:localityPathwidthReduction}
\end{figure}

We use $\alphagraph$ as a unique graph representation for words and whenever we talk about a path decomposition for $\alpha$, we actually refer to a path decomposition of $\alphagraph$. Recall that we consider path-decompositions as certain marking schemes, which we called pd-marking schemes (see Section~\ref{sec:graphTerm} and Figure~\ref{fig:examplePathwidth}). Since $\alphagraph$ has the positions of $\alpha$ as its vertices, the pd-marking scheme behind a path decomposition (and its respective terminology) directly translates to a marking scheme of the positions of $\alpha$.\par
A very similar reduction has been used in~\cite{rei:patIaC} in order to prove that certain structural restrictions of patterns with variables lead to polynomial-time cases of the corresponding matching problem. The reduction from~\cite{rei:patIaC} is more general in the sense that it does not require the vertices of $\pset_{x}(\alpha)$ to be a clique, but only requires that these vertices form a connected component (if we do not count the ``path-edges'' $\{\{i, i+1\} \mid 1 \leq i \leq |\alpha|-1\}$).\par
The main property of this reduction is that the pathwidth of $\alphagraph$ ranges between $\loc(\alpha)$ and $2\loc(\alpha)$. 

\begin{lemma}\label{pathwidthLemma}
Let $\alpha$ be a word with $|\alpha| \geq 2$. Then $\loc(\alpha) \leq \pathwidth(\alphagraph) \leq 2\loc(\alpha)$. Further, given a path decomposition $Q$ for $\alphagraph$,  a marking sequence $\sigma$ for $\alpha$ with $\pi_\sigma(\alpha)\leq \width(Q)$ can be constructed in time $\bigo(|\alpha|)$.
\end{lemma}

We defer the somewhat technical proof of Lemma~\ref{pathwidthLemma} to the end of this section, and first discuss the consequences and some further properties of the reduction.\par

A rather simple observation is that the statement of Lemma~\ref{pathwidthLemma} is in fact not true for words $\alpha$ of size $1$, since then $\loc(\alpha) = 1$ and $\pathwidth(\alphagraph) = 0$. \par
Intuitively speaking, every marked block in an optimal marking sequence for $\alpha$ accounts for one unit of the quantity $\loc(\alpha)$, while in an optimal path decomposition of $\alphagraph$, any marked block is represented by two $\act$ vertices (i.\,e., vertices that are in the current bag, see the terminology introduced in Section~\ref{sec:graphTerm}). This explains why $\pathwidth(\alphagraph)$ can be twice as large as $\loc(\alpha)$; on the other hand, that $\pathwidth(\alphagraph)$ can be strictly smaller than $2\loc(\alpha)$ is due to the fact that every marked block of size $1$ is actually represented by only $1$ $\act$ vertex, instead of two. We can formally show that there are rather simple example words $\alpha$ and $\beta$ that reach the extremes of $2 \loc(\alpha) = \pathwidth(\alphagraph)$ and $\loc(\beta) = \pathwidth(\betagraph)$, i.e., the bounds of \ref{pathwidthLemma} are tight. The proof of the following proposition also serves as an introduction to path-decompositions for the graph representation $\alphagraph$ of words (and our use of the terminology explained in Section~\ref{sec:graphTerm}), and therefore as a preparation for the proof of Lemma~\ref{pathwidthLemma}. 

\begin{proposition}\label{tightnessProposition}
Let $\alpha = (x_1 x_2 \ldots x_n x_{n-1} \ldots x_2)^k x_1$ with $n \geq 3$, and let $\beta = (x_1 x_2)^k$. Then we have $\loc(\alpha) = k$ and $\pathwidth(\alphagraph) = 2k$, and $\loc(\beta) = \pathwidth(\betagraph) = k$.
\end{proposition}

\begin{proof}
We start with proving the first statement and first observe that $\loc(\alpha) \leq k$ due to the marking sequence $x_n, x_{n-1}, \ldots, x_1$. In order to show $\pathwidth(\alphagraph) \ge 2k$, we first observe that, for every $i \in \{2, \ldots, n-1\}$, $\pset_{x_i}(\alpha)$ is a clique of size $2k$ in $\alphagraph$, which implies that every path-decomposition $Q$ (interpreted as a pd-marking scheme) for $\alphagraph$ reaches a step where all $2k$ vertices of $\pset_{x_i}(\alpha)$ are $\act$. Now let $Q$ be a path-decomposition for $\alphagraph$ (interpreted as a pd-marking scheme), let $i \in \{2, \ldots, n-1\}$ be such that all $\pset_{x_i}(\alpha)$ are first set to $\act$, i.\,e., when all vertices $\pset_{x_i}(\alpha)$ are $\act$ for the first time, then in every $\pset_{x_j}(\alpha)$, $j \in \{2, \ldots, n-1\} \setminus \{i\}$, there is at least one $\open$ vertex (in particular, no vertex from any $\pset_{x_j}$, $2 \leq j \leq n-1$, is $\closed$). Moreover, in the following we consider the earliest point of $Q$, where all $\pset_{x_i}(\alpha)$ are $\act$. \par
If, at this point, there is some additional $\act$ vertex, then there are $2k+1$ $\act$ vertices; thus, in the following we assume that there are no other $\act$ vertices. If there is also no $\closed$ vertex, then all other vertices are $\open$, which means that every vertex from $\pset_{x_i}(\alpha)$ has at least one adjacent $\open$ vertex and therefore we have to set an $\open$ vertex to $\act$, before we can set a vertex from $\pset_{x_i}(\alpha)$ to $\closed$; this leads to at least $2k + 1$ $\act$ vertices. It remains to consider the case where there is some $\closed$ vertex $j$. This means that all vertices of $\pset_{\alpha[j]}(\alpha)$ are $\closed$, which implies that $j \in \pset_{x_1}(\alpha) \cup \pset_{x_n}(\alpha)$. We first consider the case $j \in \pset_{x_1}(\alpha)$. Since every vertex from $\pset_{x_2}(\alpha)$ is adjacent to some vertex from $\pset_{x_1}(\alpha)$, we can conclude that all vertices from $\pset_{x_2}(\alpha)$ are $\act$, i.\,e., $i = 2$. The assumption $j \in \pset_{x_n}(\alpha)$ analogously leads to the situation that $i = n-1$. Consequently, all $2k$ vertices from $\pset_{x_i}(\alpha)$ are $\act$, either $\pset_{x_1}(\alpha)$ are all $\closed$ or $\pset_{x_n}(\alpha)$ are all $\closed$, and all other vertices are $\open$. In both of these cases, every vertex from $\pset_{x_i}(\alpha)$ has at least one adjacent $\open$ vertex, which, as before, means that we have to set an $\open$ vertex to $\act$, before we can set a vertex from $\pset_{x_i}(\alpha)$ to $\closed$; this, as well, leads to at least $2k + 1$ $\act$ vertices. Consequently, $\width(Q) \geq 2k$, and, with Lemma~\ref{pathwidthLemma}, we can conclude $\pathwidth(\alphagraph) = 2k$.\par
With respect to the second statement, we first note that any marking sequence for $\beta$ leads to $k$ marked blocks, which implies $\loc(\beta) = k$. Moreover, a pd-marking scheme $Q$ with $\width(Q) = k$ can be easily constructed as follows. First, we set all positions of $\pset_{x_1}(\beta)$ to $\act$. Then we set position $2$ to $\act$, position $1$ to $\closed$, position $4$ to $\act$, position $3$ to $\closed$ and so on, until all positions of $\pset_{x_2}(\beta)$ are $\act$ and all positions of $\pset_{x_1}(\beta)$ are $\closed$. Finally, the positions of $\pset_{x_2}(\beta)$ are set to $\closed$. There are at most $k+1$ positions $\act$ at the same time; thus, $\width(Q) = k$ and therefore $\pathwidth(\betagraph) \leq k$. Together with Lemma~\ref{pathwidthLemma}, this implies $\loc(\beta) = \pathwidth(\betagraph) = k$.
\end{proof}

As explained at the beginning of this section, the construction of a graph $\alphagraph$ from a word $\alpha$ does not reduce the decision problem $\locProb$ to $\pathwidthProb$ (since $\pathwidth(\alphagraph)$ lies between $\loc(\alpha)$ and $2\loc(\alpha)$); its main purpose is to obtain approximation results, which is formally stated by the next lemma.

\begin{lemma}\label{pathwidthLemma-apx}
If $\minPathwidthProb$ admits an $\bigo(f(n))$-time $r(\opt,n)$-approximation algorithm, then $\minlocProb$ admits an $\bigo(f(|\alpha|)+|\alpha|^2)$-time $2r(2\opt,|\alpha|)$-approximation algorithm.
\end{lemma}

\begin{proof}
Let $\alpha$ be an instance of $\minlocProb$ and $\mathcal A$ an $r(\opt,n)$-approximation for $\minPathwidthProb$. By Lemma~\ref{pathwidthLemma}, it follows that $\pathwidth(\alphagraph)\leq 2m^*(\alpha)$.\par
Let $Q$ be the path decomposition computed by $\mathcal A$ on $G_\alpha$ and $\sigma$ be the corresponding marking sequence constructed with Lemma~\ref{pathwidthLemma}. With the inequality $m^*(\alpha)\geq \frac{1}{2}\pathwidth(\alphagraph)$, the performance ratio of $\sigma$ can be bounded by $R(\alpha,\sigma)=\frac{\pi_\sigma(\alpha)}{m^*(\alpha)}\leq \frac{2}{\pathwidth(\alphagraph)}\width(Q)\leq 2R(\alphagraph,Q)$.
With $R(\alphagraph,Q)\leq r(\cutwidth(\alphagraph),n)$ from the approximation ratio of $\mathcal A$, $n=|\alpha|$ from the construction of $\alphagraph$, and $\pathwidth(\alphagraph)\leq 2m^*(\alpha)$ from Lemma~\ref{pathwidthLemma}, the claimed bound of $2r(2\opt,|\alpha|)$ on the approximation ratio follows. The  approximation procedure to compute $\sigma$, creates $\alphagraph$ in $\bigo(|\alpha|^2)$, runs $\mathcal A$ in $\bigo(f(|\alpha|))$ and translates the path-decomposition $Q$ into $\sigma$ in $\bigo(|\alpha|)$, which takes an overall running time in $\bigo(f(|\alpha|)+|\alpha|^2)$.
\end{proof}

Consequently, approximation algorithms for $\minPathwidthProb$ carry over to $\minlocProb$. For example, the $\bigo(\sqrt{\log(\opt)} \log(n))$-approximation algorithm for $\minPathwidthProb$ from~\cite{FeigeEtAl2008} implies the following.

\begin{theorem}\label{thm:approxLoc}
There is an $\bigo(\sqrt{\log(\opt)} \log(n))$-approximation algorithm for $\minlocProb$.
\end{theorem}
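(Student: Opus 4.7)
The plan is to directly plug the best known approximation algorithm for $\minPathwidthProb$ into the reduction supplied by Lemma~\ref{pathwidthLemma-apx}. Concretely, I would invoke the algorithm of Feige, Hajiaghayi and Lee from~\cite{FeigeEtAl2008}, which achieves approximation ratio $r(\opt,n) = \bigo(\sqrt{\log(\opt)}\,\log(n))$ for $\minPathwidthProb$ in polynomial time. Here $r$ is monotonically non-decreasing in both arguments, as demanded by our approximation-ratio framework.

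Feeding this choice of $r$ and $f$ into Lemma~\ref{pathwidthLemma-apx} then yields a polynomial-time approximation algorithm for $\minlocProb$ whose performance ratio is bounded by
\begin{equation*}
2\,r(2\opt,|\alpha|) \;=\; \bigo\bigl(\sqrt{\log(2\opt)}\,\log(|\alpha|)\bigr).
\end{equation*}
Since $\log(2\opt) = \log(\opt) + 1 = \bigo(\log(\opt))$ for $\opt \geq 2$ (the case $\opt = 1$ can be handled trivially in constant additional time), this simplifies to the claimed ratio $\bigo(\sqrt{\log(\opt)}\,\log(n))$ with $n = |\alpha|$.

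There is essentially no obstacle here: the reduction from $\minlocProb$ to $\minPathwidthProb$ via the graph $\alphagraph$ has already been established in Corollary~\ref{pathwidthLemma}, and the factor-$2$ loss coming from the fact that $\pathwidth(\alphagraph)$ can be as large as $2\loc(\alpha)$ is absorbed into the $\bigo$. The only points to double-check are that (i) the construction of $\alphagraph$ (of size $\bigo(|\alpha|^2)$) and the back-translation of a path decomposition into a marking sequence both run in polynomial time, and (ii) that the parameter $\opt$ in the $\sqrt{\log(\cdot)}$-factor refers to $\loc(\alpha)$ rather than to $\pathwidth(\alphagraph)$; both points are already taken care of by the statement of Lemma~\ref{pathwidthLemma-apx} together with Corollary~\ref{pathwidthLemma}.
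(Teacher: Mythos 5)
Your proposal is correct and mirrors the paper's argument exactly: the paper likewise derives Theorem~\ref{thm:approxLoc} by plugging the $\bigo(\sqrt{\log(\opt)}\log(n))$-approximation for $\minPathwidthProb$ from~\cite{FeigeEtAl2008} into Lemma~\ref{pathwidthLemma-apx}, absorbing the factor~$2$ and the shift $\opt \mapsto 2\opt$ into the $\bigo$. No gaps.
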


Another consequence that is worth mentioning is due to the fact that an optimal path decomposition can be computed faster than $\bigo^*(2^n)$. More precisely, it is shown in~\cite{SuchanVillanger2009} that for computing path decompositions, there is an exact algorithm with running time $\bigo^*((1.9657)^n)$, and even an additive approximation algorithm with running time $\bigo^*((1.89)^n)$. Consequently, there is a $2$-approximation algorithm for $\minlocProb$ with running time $\bigo^*((1.9657)^n)$ and an asymptotic $2$-approximation algorithm with running time $\bigo^*((1.89)^n)$ for $\minlocProb$.\par
Many existing algorithms constructing path decompositions are of theoretical interest only, and this disadvantage carries over to the possible algorithms computing the locality number or cutwidth (see Section~\ref{sec:direct}) based on them. However, the reduction of \ref{pathwidthLemma} is also applicable in a purely practical scenario, since any kind of practical algorithm constructing path decompositions can be used  to compute marking sequences (the additional tasks of building $\alphagraph$ and the translation of a path decomposition for it back to a marking sequence are computationally simple). This observation is particularly interesting since developing practical algorithms constructing tree and path decompositions of small width is a vibrant research area. See, e.g., the work~\cite{CoudertEtAl2016} and the references therein for practical algorithms constructing path decompositions; also note that designing exact and heuristic algorithms for constructing tree decompositions was part of the ``PACE 2017 Parameterized Algorithms and Computational Experiments Challenge''~\cite{DellEtAl2017}.\par
As mentioned several times already, our reductions to and from the problem of computing the locality number also establish the locality number for words as a (somewhat unexpected) link between the graph parameters cutwidth and pathwidth. We shall discuss in more detail in Section~\ref{sec:direct} the consequences of this connection. Next, we conclude this section by providing a formal proof of Lemma~\ref{pathwidthLemma}, which is the main result of this section.

\subsection{Proof of Lemma~\ref{pathwidthLemma}}\label{sec:mainProof}

In order to prove Lemma~\ref{pathwidthLemma}, we shall prove the two claims $\pathwidth(\alphagraph) \leq 2\loc(\alpha)$ and $\loc(\alpha) \leq \pathwidth(\alphagraph)$ separately. Recall that for any word $\alpha$, by $G_{\alpha}$ we denote the graph constructed as described in Section~\ref{sec:locToPW}. \par
We first prove $\pathwidth(\alphagraph) \leq 2\loc(\alpha)$. Intuitively speaking, we will translate the stages of a marking sequence $\sigma$ for $\alpha$ into steps of a pd-marking scheme for $G_{\alpha}$ in a natural way: each marked block $\alpha[s..t]$ is represented by letting the border positions $s$ and $t$ be $\act$, the internal position $s+1, s+2, \ldots, t-1$ $\closed$, and all other positions $\open$. In particular, this means that each stage of the marking sequence with $k$ marked blocks is represented by at most $2k$ $\act$ positions in the corresponding step of the pd-marking scheme (note that marked blocks of size $1$ are represented by only one $\act$ position). The difficulty will be to show that in the process of transforming one such step of the pd-marking scheme into the next one, we do not produce more than $2\pi_{\sigma}(\alpha) + 1$ $\act$ positions. This is non-trivial, since due to the cover-property of the pd-marking scheme, we must first set \emph{all} positions to $\act$ that correspond to occurrences of the next symbol to be marked by $\sigma$ before we can set them from $\act$ to $\closed$.

\begin{lemma}\label{mainPatwidthLemmaOne}
Let $\alpha$ be a word. Then $\pathwidth(\alphagraph) \leq 2\loc(\alpha)$.
\end{lemma}

\begin{proof}
We first observe that there is a natural correspondence between any marked version of $\alpha$ and a step of a marking scheme of $G_{\alpha}$ (recall the terminology introduced in the last paragraph of Section~\ref{sec:graphTerm}). More precisely, every marked block $\alpha[s..t]$ can be represented by having the \emph{border positions} $s$ and $t$ of $G_{\alpha}$ marked as $\act$, and all \emph{internal positions} $j$ with $s+1 \leq j \leq t-1$ marked as $\closed$. All other positions that are unmarked in $\alpha$ are $\open$ in $G_{\alpha}$. Note that $s = t$ means that the marked block is represented by only one $\act$ position and no $\closed$ positions, and that $t = s + 1$ means that the marked block is represented by two $\act$ positions and no $\closed$ positions. Hence, each of $\alpha$'s marked blocks of size $1$ is represented by only one $\act$ position, while each marked block of size at least $2$ is represented by two $\act$ positions.\par
In this way, a marking sequence $\sigma = (x_1, x_2, \ldots, x_m)$ for $\alpha$ with $\pi_{\sigma}(\alpha) = k$ translates into steps $p_1, p_2, \ldots, p_m$ (i.\,e., $p_i$ represents stage $i$ of $\sigma$ as described above) of a marking scheme for $G_{\alpha}$. By our observation from above, it is obvious that at each step $p_i$ there are at most $2k_i$ $\act$ vertices, where $k_i$ is the number of marked blocks at stage $i$ of $\sigma$. It now remains to show how the steps $p_1, p_2, \ldots, p_m$ representing $\sigma$'s stages can be transformed into a pd-marking scheme of $G_{\alpha}$. More precisely, we have to describe how we can obtain step $p_{i+1}$ from step $p_i$, how we can reach step $p_1$ from $Q$'s initial step (i.\,e., where all positions are $\open$), and how to transform step $p_m$ into the final step of $Q$ where all positions are $\closed$. Moreover, this should be done in such a way that the marking scheme is a pd-marking scheme, and such that at most $2k + 1$ positions are $\act$ in each step.\par
We can reach step $p_1$ from $Q$'s initial step by just setting all positions of $\pset_{x_1}(\alpha)$ to $\act$ (note that $|\pset_{x_1}(\alpha)| = k_1 \leq k$), and the final step of $Q$ can be obtained from step $p_m$ by setting the only $\act$ positions $1$ and $|\alpha|$ to $\closed$ (note that at stage $m$ of $\sigma$ the whole word is one marked block, so, by definition of step $p_m$, only positions $1$ and $|\alpha|$ are $\act$, while all other positions are $\closed$). Obviously, the maximum number of $\act$ positions for these steps is bounded by $\max\{k_1, 2\} \leq k$. \par
Let $s$ be arbitrary with $1 \leq s \leq m - 1$. We now describe how step $p_{s + 1}$ can be obtained from step $p_s$. 
Intuitively speaking, we first set all positions from $\pset_{x_{s+1}}(\alpha)$ that extend already marked blocks to $\act$ (note that this includes positions that join two already marked blocks). We have to set each of these positions to $\act$ one after the other, and whenever some $\act$ position becomes an internal position of a marked block, then it must be set to $\closed$ so that we do not get too many $\act$ positions. However, we only set internal $\act$ positions to $\closed$ if they are not from $\pset_{x_{s+1}}(\alpha)$, since due to the fact that $\pset_{x_{s+1}}(\alpha)$ is a clique in $G_{\alpha}$, we must reach a point where all positions from $\pset_{x_{s+1}}(\alpha)$ are $\act$  at the same time. After this is done, we mark the remaining positions $\pset_{x_{s+1}}(\alpha)$ that create new marked blocks of size $1$. Let us now formally describe this marking scheme. \par
We call every $j \in \pset_{x_{s+1}}(\alpha)$ \emph{extending}, if marking position $j$ extends an already marked block, and we call it \emph{isolated}, if marking position $j$ creates a new marked block of size $1$. First, we set all extending positions $j \in \pset_{x_{s+1}}(\alpha)$ to $\act$ (in some order), but every time we do this, we perform the following update operation before setting the next position to $\act$: every $\act$ position from $\{1, 2, \ldots, |\alpha|\} \setminus \pset_{x_{s+1}}(\alpha)$ that is an internal position of some marked block is set to $\closed$. As soon as all extending positions are $\act$, we set all isolated $j \in \pset_{x_{s+1}}(\alpha)$ to $\act$.\par
We have now reached the following situation (which we denote by step $p'_s$):
\begin{itemize} 
\item All positions of $\pset_{x_{s+1}}(\alpha)$ are $\act$.
\item All border positions of marked blocks of stage $s+1$ of $\sigma$ are $\act$.
\item All internal positions of marked blocks of stage $s+1$ of $\sigma$ are $\closed$, except possibly some of the positions from $\pset_{x_{s+1}}(\alpha)$ that have now become internal positions of marked blocks.
\end{itemize} 
We can now transform step $p'_s$ into step $p_{s + 1}$ by setting all $\act$ positions from $\pset_{x_{s+1}}(\alpha)$ to $\closed$ that are internal positions of marked blocks. This only decreases the number of $\act$ positions. \par

\begin{figure}[tb]
\begin{center}

\includegraphics{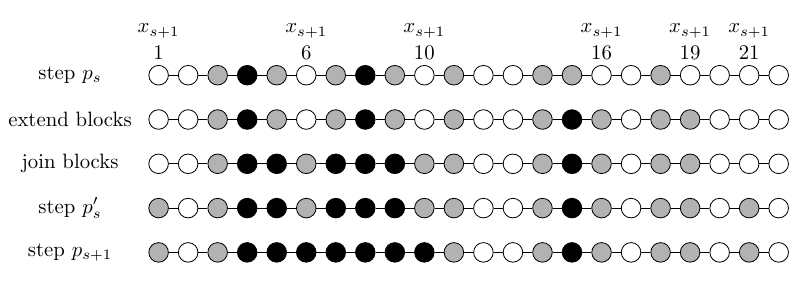}

\end{center}
\caption{An illustration of how step $p_s$ is transformed into step $p_{s + 1}$ (for simplicity, the edges that connect the sets $\pset_{x_i}(\alpha)$ into cliques are omitted). As in Figure~\ref{fig:examplePathwidth}, white vertices are $\open$, grey vertices are $\act$, and black vertices are $\closed$. The upper most graph represents step $p_s$. As indicated above, we have $\pset_{x_i}(\alpha) = \{1, 6, 10, 16, 19, 21\}$; positions $6, 10, 16, 19$ are extending and $1, 21$ are isolated. If we first set the extending positions to $\act$ that do not join marked blocks, i.\,e., positions $16$ and $19$, then we obtain the situation represented by the second graph. Note that after setting $16$ to $\act$, we have to immediately set $15$ to $\closed$, whereas position $19$ does not trigger such an action. Next, we set all remaining extending positions $6$ and $10$ to $\act$, which yields the third graph. Immediately after setting $6$ to $\act$, we have to set both $5$ and $7$ to $\closed$, and immediately after setting $10$ to $\act$, we have to set $9$ to $\closed$. Now it only remains to set the isolated positions $1$ and $21$ to $\act$, which yields the second to last graph corresponding to step $p'_s$. Finally, in order to reach step $p_{s + 1}$, we set all $\act$ positions from $\pset_{x_i}(\alpha)$ to $\closed$ that are internal positions, which sets $6$ and $10$ to $\closed$.}
\label{fig:pspsplusoneIllustration}
\end{figure}

This completes the definition of the marking scheme. Figure~\ref{fig:pspsplusoneIllustration} contains an example of how step $p_{s + 1}$ is obtained from step $p_s$. In this example, we first set extending positions to $\act$ that do not join marked blocks, and then we set the remaining extending positions to $\act$. This is done for illustrational reasons (recall that we have not restricted the order in which we set extending positions to $\act$). \par
Next, we observe that this marking scheme is a pd-marking scheme. To this end, we observe that every edge $\{j, j+1\}$ with $1 \leq j \leq |\alpha| - 1$ is covered, since for every edge $\{j, j+1\}$, we either set $j$ from $\open$ to $\act$ while $j+1$ is already $\act$, or the other way around. Moreover, all positions from $\pset_{x_j}(\alpha)$ are $\act$ at step $p'_j$; thus, the cover property is also satisfied with respect to these edges.\par
Finally, we have to show that in this pd-marking scheme, the maximum number of $\act$ positions is bounded by $2k+1$. This is obviously true at step $p_1$. Now let $s$ with $1 \leq s \leq |\alpha| - 1$ be arbitrary. Since the total number of $\act$ positions at step $p_{s}$ and $p_{s + 1}$ are bounded by $2k$, we only have to show that the maximum number of $\act$ positions in the marking scheme transforming $p_{s}$ into $p_{s + 1}$ is bounded by $2k + 1$. Let us assume that at stage $s$ and $s + 1$ of $\sigma$, there are $k_s$ ($k_{s + 1}$, respectively) marked blocks, and exactly $k_{s, 1}$ ($k_{s + 1, 1}$, respectively) blocks have size $1$; note that this means that at step $p_s$ there are $k_{s, 1} + 2 (k_s - k_{s, 1})$ $\act$ positions. \par
In the first phase of the marking scheme, i.\,e., the phase where we only set extending positions to $\act$, the following different situations can arise, whenever we set some position $j$ to $\act$ (see Figure~\ref{fig:pspsplusoneIllustration} for an illustration):
\begin{enumerate} 
\item\label{extB1} $j$ extends a marked block of size $1$, but does not join two blocks: The number of $\act$ positions increases by $1$.\\ This is due to the fact that by setting $j$ to $\act$, we do not create any internal $\act$ positions that could be set to $\closed$.
\item $j$ extends a marked block of size at least $2$, but does not join two blocks: The number of $\act$ positions increases by $1$ and then decreases by $1$. \\
Assume that the block of size $2$ is extended to the right. Then $j-1$ must be $\act$ and, since the block has size at least $2$, $j-2$ cannot be $\open$. Moreover, since $j-1$ is a neighbour of $j$ it cannot be an element from $\pset_{x_{s+1}}(\alpha)$. This means that $j-1$ is set to $\closed$. 
\item $j$ joins two marked blocks of size $2$: the number of $\act$ positions increases by $1$ and then decreases by $2$.\\
We can argue similarly as in the previous case. Positions $j - 1$ and $j + 1$ must be $\act$ and, since the blocks have size at least $2$, neither $j-2$ nor $j + 2$ can be $\open$. Moreover, since both $j-1$ and $j+1$ are neighbours of $j$ they cannot be elements from $\pset_{x_{s+1}}(\alpha)$. This means that $j-1$ and $j + 1$ are set to $\closed$. 
\item $j$ joins a marked block of size $1$ and a block of size $2$: the number of $\act$ positions increases by $1$ and then decreases by $1$.\\
Without loss of generality, assume that the block of size $2$ is to the left of the block of size $1$. Then $j-1$ must be $\act$ and, since the block has size at least $2$, $j-2$ cannot be $\open$. Moreover, since $j-1$ is a neighbour of $j$ it cannot be an element from $\pset_{x_{s+1}}(\alpha)$. This means that $j-1$ is set to $\closed$.
\item\label{joinB1} $j$ joins two blocks of size $1$: the number of $\act$ positions increases by $1$.\\
This is due to the fact that by setting $j$ to $\act$, we do not create any internal $\act$ positions that could be set to $\closed$.
\end{enumerate} 

We note that only operations of Type~\ref{extB1}~and~\ref{joinB1} increase the overall number of $\act$ positions. In the worst case, we apply all these operations first, before performing the other operations that potentially decrease the number of $\act$ positions. Let us define $\ell_s$ to be the number of $\act$ positions at step $p_s$, and $k_{s, 1}$ to be the number of marked blocks of size $1$ at stage $s$. Since any original marked block of size $1$ can be responsible for at most one operation of Type~\ref{extB1}~or~\ref{joinB1} (after such an operation, the block is not of size $1$ anymore), the maximum number of $\act$ positions after the first phase of the marking is at most $\ell_s + k_{s, 1} + 1$ (we have to count ``$+ 1$'' since also in the operations that do not increase the overall number of $\act$ positions, we always have to first set a position to $\act$ and then, in a new step of the marking scheme, we can set another position to $\closed$). Since at step $p_s$ every marked block of size at least $2$ is represented by two $\act$ positions, and every marked block of size $1$ is represented by only one $\act$ position, we have $\ell_s = 2(k_{s} - k_{s, 1}) + k_{s, 1} = 2k_{s} - k_{s, 1}$, and therefore $\ell_s + k_{s, 1} + 1 = 2k_{s} - k_{s, 1} + k_{s, 1} + 1 = 2k_{s} + 1 \leq 2k + 1$. This means that the number of $\act$ positions is bounded by $2k + 1$ until we reach the situation where we first set an isolated position to $\act$. \par
When we start setting isolated positions to $\act$, we increase the number of $\act$ positions until we have reached step $p'_{s}$. Hence, we have to bound the total number of $\act$ positions at step $p'_{s}$. \par
For the following reasoning, let us assume that in going from stage $s$ to stage $s + 1$ in $\sigma$, we mark all occurrences of $x_{s+1}$ one after the other (instead of all of them in parallel). Each of these individual markings can either create a new marked block of size $1$, or join an existing marked block with another existing marked block, or just extend a marked block (possibly of size $1$) without joining any marked blocks. Let us assume that creating a marked block of size $1$ happens $q$ times, and joining two marked blocks happens $t$ times (how often we extend marked blocks without joining is not important).\par
We first count the number of $\act$ positions at step $p'_{s}$ that correspond to border positions of marked blocks. For each marked block of size at least $2$ there are $2$ such $\act$ positions, while for each block of size $1$ there is only $1$ such $\act$ position. Consequently, there are $2(k_{s+1} - k_{s+1, 1}) + k_{s+1, 1} = 2k_{s+1} - k_{s+1, 1}$ such border positions, where $k_{s+1, 1}$ is the number of marked blocks of size $1$. Since $q \leq k_{s+1, 1}$, we have $2k_{s+1} - k_{s+1, 1} \leq 2k_{s+1} - q$ border positions. In addition to these border positions, we also have a number of $\act$ positions that are internal positions of marked blocks. However, each such internal $\act$ position results from joining two blocks, which means that we have $r$ such positions. Hence, we have at most $2k_{s+1} - q + r$ $\act$ positions. Since stage $s+1$ of $\sigma$ is obtained from stage $s$ by creating $q$ new blocks and joining $r$ marked blocks (and extending some blocks), we have $k_{s+1} = k_s + q - r$. Consequently, $2k_{s+1} - q + r = k_{s+1} + k_{s} + q - r - q + r = k_{s+1} + k_s \leq 2k$. This means that the maximum number of $\act$ positions in the marking scheme that transforms step $p_s$ into step $p_{s+1}$ is bounded by $2k + 1$.\par
Consequently, we have described a pd-marking scheme for $G_{\alpha}$ that has always at most $2k + 1$ $\act$ positions, which means that $\pathwidth(G_{\alpha}) \leq 2k$.
\end{proof}

Next, we take care of the other inequality $\loc(\alpha) \leq \pathwidth(\alphagraph)$. On an intuitive level, the proof will proceed as follows. Any pd-marking scheme $Q$ for $\alphagraph$ induces a linear order on $\{x_1, x_2, \ldots, x_m\}$ (and therefore a marking sequence $\sigma$), since it is forced to go through individual steps where all positions of the cliques $\pset_{x_i}(\alpha)$ with $1 \leq i \leq m$ are $\act$ at the same time. It is our goal to prove that there are at least $\pi_{\sigma}(\alpha) + 1$ $\active$ position in the pd-marking scheme $Q$, since this implies that $\loc(\alpha) \leq \pi_{\sigma}(\alpha) \leq \width(Q)$.\par
The marking sequence $q$ has a stage $s$ in which the maximum of $\pi_{\sigma}(\alpha)$ marked blocks is reached for the first time. In the corresponding step $p_s$ of the pd-marking scheme (i.\,e., the step where the positions of $\pi_{\sigma}(\alpha)$ are all $\act$ for the first time), we obviously cannot assume that the marked blocks are represented in the way of the proof of Lemma~\ref{mainPatwidthLemmaOne} (i.\,e., border positions are $\act$, internal positions are $\closed$, and all other positions are $\open$). However, by carefully analysing step $p_s$, we can identify at least $\pi_{\sigma}(\alpha)$ $\act$ positions (for this, we need the property that $Q$ is a pd-marking scheme for $\alphagraph$ and that a maximum number of marked blocks is reached at stage $s$ of $\sigma$). If now there is one additional $\act$ position, then there are $\pi_{\sigma}(\alpha) + 1$ $\act$ positions and we are done. This is the easy part of the proof, and the the more difficult part is the case where we do not have an additional $\act$ position, i.\,e., the identified $\pi_{\sigma}(\alpha)$ $\act$ positions are the only $\act$ positions. This property, however, can be shown to impose some structural constraints with respect to step $p_s$ of the pd-marking sequence and stage $s$ of the marking sequence. In particular, by some more technical combinatorial observations and exhaustive case distinctions, we are able to prove that we will necessarily get $\pi_{\sigma}(\alpha) + 1$ $\act$ positions in the next step of the pd-marking sequence, or we can prove that the marking sequence $\sigma$ can be changed into a better marking sequence $\sigma'$ with $\pi_{\sigma'}(\alpha) = \pi_{\sigma}(\alpha) - 1$. This latter property means that we have $\loc(\alpha) \leq \pi_{\sigma}(\alpha) - 1 \leq \width(Q)$.

\begin{lemma}
Let $\alpha$ be a word with $|\alpha| \geq 2$. Then, given any path-decomposition $Q$ for $\alphagraph$, a marking sequence $\sigma$ for $\alpha$ with $\loc(\sigma)\leq \width(Q)$ can be constructed in $\bigo(|\alpha|)$. In particular, $\loc(\alpha) \leq \pathwidth(\alphagraph)$.
\end{lemma}

\begin{proof}
Let $Q = (B_0, B_1, B_2, \ldots, B_{2|\alpha|})$ be an arbitrary nice path-decomposition for $\alphagraph$, which we interpret as a pd-marking scheme (see the last paragraph of Section~\ref{sec:graphTerm}). For every $i$, $1 \leq i \leq m$, $\pset_{x_i}(\alpha)$ is a clique in $G_{\alpha}$; thus, there must be a step $p_i$ of this pd-marking sequence in which all positions of $\pset_{x_i}(\alpha)$ are $\act$ for the first time. Without loss of generality, we assume that $p_1 < p_2 < \ldots < p_m$. Next, let $\sigma = (x_1, x_2, \ldots, x_m)$ be the marking sequence for $\alpha$ that marks the symbols $x_1, x_2, \ldots, x_m$ in the same order as their occurrences $\pset_{x_i}(\alpha)$ are all together $\act$ for the first time in the pd-marking scheme. Thus $\sigma$ can be constructed from $Q$ in time $\bigo(|\alpha|)$.  We now prove that for $k = \pi_{\sigma}(\alpha)$ one of the following cases holds:
\begin{itemize}
\item There is a step of $Q$ with at least $k + 1$ $\act$ positions. In this case, we have $\loc(\alpha) \leq k \leq \width(Q)$.
\item There is a step of $Q$ with at least $k$ $\act$ positions and a marking sequence $\sigma'$ with $\pi_{\sigma'}(\alpha) = k-1$. In this case, we have $\loc(\alpha) \leq k - 1 = \width(Q)$.
\end{itemize}
Since the path decomposition is arbitrary, this implies that $\loc(\alpha) \leq \width(Q)$ for every path decomposition $Q$, and therefore also $\loc(\alpha) \leq \pathwidth(\alphagraph)$.\par
Let $s$, $1 \leq s \leq m$, be chosen such that the maximum number of marked blocks is reached for the first time at stage $s$ of $\sigma$, i.\,e., after marking symbol $x_s$, we obtain $k$ marked blocks for the first time. As defined above, $p_s$ is the step of $Q$ where all positions of $\pset_{x_s}(\alpha)$ are $\act$ for the first time. We now represent the pd-marking scheme at step $p_s$ and the marked version of $\alpha$ at stage $s$ of $\sigma$ as a single marked word $\widehat{\alpha}$ over the alphabet $\{\opensym, \actsym, \closesym\}$. More precisely, for every $i$ with $1 \leq i \leq |\alpha|$, $\widehat{\alpha}[i] = \opensym$ if position $i$ is $\open$ at step $p_s$, $\widehat{\alpha}[i] = \actsym$ if position $i$ is $\act$ at step $p_s$ and $\widehat{\alpha}[i] = \closesym$ if position $i$ is $\closed$ at step $p_s$ of the pd-marking scheme. This fully describes step $p_s$ of the pd-marking scheme. Moreover, we represent the marked version of $\alpha$ at stage $s$ of $\sigma$ by marking the symbols of $\widehat{\alpha}$ in the same way, i.\,e., for every $i$ with $1 \leq i \leq |\alpha|$, the symbol $\widehat{\alpha}[i]$ is marked if $\alpha[i]$ is marked at stage $s$ of $\sigma$ (i.\,e., $\alpha[i] \in \{x_1, x_2, \ldots, x_{s}\}$), and $\widehat{\alpha}[i]$ is unmarked otherwise. We shall also consider $\widehat{\alpha}$'s factorisation according to its marked and unmarked blocks, i.\,e., we consider the factorisation
\begin{equation*}
\widehat{\alpha} = \beta_0 \mu_1 \beta_1 \mu_2 \ldots \mu_k \beta_k\,,
\end{equation*}
where the factors $\beta_i$, $0 \leq i \leq k$, correspond to the unmarked blocks of $\widehat{\alpha}$, and $\mu_{i}$, $1 \leq i \leq k$, correspond to the marked blocks of $\widehat{\alpha}$. Next, we establish some simple properties of $\widehat{\alpha}$ and its factorisation. \par

\begin{enumerate}
\item\label{simpleObsOne} If $\widehat{\alpha}[i] = \closesym$ for some $i$, $1 \leq i \leq |\alpha|$, then position $i$ is $\closed$ at step $p_s$ of the pd-marking scheme, which means that the situation where all positions of $\pset_{\alpha[i]}(\alpha)$ have been $\act$ at the same time must have occurred already. This means that $\alpha[i] \in \{x_1, x_2, \ldots, x_s\}$. Hence, position $i$ is marked and therefore in some marked block. Consequently, all of $\widehat{\alpha}$'s occurrences of $\closesym$ occur in marked blocks $\mu_j$, i.\,e., $\beta_0, \beta_k \in \{\actsym, \opensym\}^*$, $\beta_{i} \in \{\actsym, \opensym\}^+$ and $\mu_{i} \in \{\actsym, \closesym\}^+$ for every $i$ with $1 \leq i \leq k$ (note that the factors $\mu_i$ cannot contain occurrences of $\opensym$ by definition of $\sigma$ (i.\,e., an $x_i$ is marked by $\sigma$ not before all positions $\pset_{x_i}(\alpha)$ are $\act$)).
\item\label{simpleObsTwo} For every $i$, $1 \leq i \leq k-1$, if the last symbol of the marked block $\mu_i$ is $\closesym$, then the first symbol of the unmarked block $\beta_i$ must be $\actsym$ (or $\beta_i = \varepsilon$, which can happen for $i = k$), since otherwise $\beta_i$'s first symbol must be $\opensym$ (see Point~\ref{simpleObsOne}), which leads to the contradiction that at step $p_s$ of the pd-marking scheme there is a $\closed$ position adjacent to an $\open$ position (this violates the cover property of path decompositions). Analogously, it follows that if the first symbol of the marked block $\mu_{i}$ is $\closesym$, then the first symbol of the unmarked block $\beta_{i - 1}$ must be $\actsym$ (or $\beta_{i - 1} = \varepsilon$, which can happen for $i = 1$). 
\item\label{simpleObsThree} For every $i$ with $1 \leq i \leq |\alpha|$ such that $\alpha[i] = x_s$, position $i$ must be contained in some marked block of $\widehat{\alpha}$ (since $x_s$ is marked at stage $s$ of $\sigma$), and (by definition of $\sigma$) position $i$ must be $\act$ at step $p_s$ of the pd-marking scheme, i.\,e., $\widehat{\alpha}[i] = \actsym$. Moreover, there must be at least one such $i$ with $1 \leq i \leq |\alpha|$ such that $\alpha[i] = x_s$.
\end{enumerate}

With Point~\ref{simpleObsThree}, there is some $j$ with $1 \leq j \leq |\alpha|$ such that $\alpha[j] = x_s$, and some $r$ with $1 \leq r \leq k$ such that $j$ is a position of the marked block $\mu_{r}$ of $\widehat{\alpha}$. Next, for every $i$ with $1 \leq i \leq k$, we define a position $t_i$ with $\widehat{\alpha}[t_i] = \actsym$ that either lies in $t_i$, or is the first position of $\beta_i$, or the last position of $\beta_{i-1}$. First, we set $t_r = j$. For every $i$, $1 \leq i < r$, we let $t_i$ be some position of $\mu_i$ that is an occurrence of $\actsym$ if one exists. If, on the other hand, $\mu_i$ has no occurrences of $\actsym$, then, due to Point~\ref{simpleObsOne}, $\mu_i$'s last symbol is $\closesym$, and, by Point~\ref{simpleObsTwo}, this means that $\beta_i$'s first symbol is $\actsym$, so we let $t_i$ be $\beta_i$'s first position. Analogously, for every $i$ with $r < i \leq k$, we let $t_i$ be some position of $\mu_i$ that is an occurrence of $\actsym$ if one exists, and if $\mu_i$ has no occurrences of $\actsym$, then its first symbol is $\closesym$, which means that $\beta_{i-1}$'s last symbol is $\actsym$, so we let $t_i$ be $\beta_{i-1}$'s last position.\par
Since every $t_i$ with $1 \leq i < r$ is in $\mu_i \beta_i[1]$, every $t_i$ with $r < i \leq k$ is in $\beta_{i-1}[|\beta_i|] \mu_i$, and $t_r$ is in $\mu_r$, these positions $t_i$ are in fact $k$ distinct positions that are $\act$ at step $p_s$ of the pd-marking scheme.\par
Now, if there is at least one additional $\act$ position, then there are at least $k+1$ $\act$ positions at step $p_s$, i.\,e., we have arrived at the first of the two cases mentioned above. In order to conclude the proof, we assume now that the $\act$ positions $t_i$, $1 \leq i \leq k$, are the only $\act$ positions at step $p_s$ of the pd-marking scheme, and we will show that this either leads again to the first case, but with respect to some other step of the pd-marking scheme, or to the second case mentioned above, i.\,e., there is a marking sequence $\sigma'$ with $\pi_{\sigma'}(\alpha) = k-1$. \par
First, we divide $\widehat{\alpha}$ into the part left of $\mu_r$, the factor $\mu_r$ and the part right of $\mu_r$, i.\,e., we consider the factorisation $\widehat{\alpha} = \widehat{\alpha}_{1} \mu_r \widehat{\alpha}_{2}$, where we call $\widehat{\alpha}_{1} = \beta_0 \mu_1 \beta_1 \mu_2 \ldots \beta_{r-1}$ the \emph{left side} and we call $\widehat{\alpha}_{2} = \beta_{r} \mu_{r+1} \beta_{r+1} \ldots \mu_k \beta_k$ the \emph{right side}. By our assumption that the positions $t_i$ with $1 \leq i \leq k$ are the only occurrences of $\actsym$ in $\widehat{\alpha}$, and the Points~\ref{simpleObsOne}~to~\ref{simpleObsThree} from above, we can conclude several facts about the form of the left and the right side. In the following, we shall only analyse the left side; all the following arguments apply analogously to the right side as well. \par
Each position $t_{\ell}$ with $1 \leq \ell < r$ is either inside $\mu_{\ell}$, or it is the leftmost position of $\beta_{\ell}$. If it is the leftmost position of $\beta_{\ell}$, then there is no occurrence of $\actsym$ in $\mu_{\ell}$, which means that the leftmost symbol in $\mu_{\ell}$ must be $\closesym$, and therefore, the rightmost symbol of $\beta_{\ell - 1}$ must be an $\actsym$ (note that $\beta_{\ell - 1} \in \{\actsym, \opensym\}^+$ (see Point~\ref{simpleObsOne}) and it is not possible that a symbols $\opensym$ occurs next to a symbol $\closesym$). However, this rightmost position must be the position $t_{\ell - 1}$ and is therefore also the leftmost position of $\beta_{\ell - 1}$. In particular, this means that $\beta_{\ell-1} = \actsym$ and $\mu_{\ell-1} \in \{\closesym\}^+$. This inductively proceeds to the left and therefore also means that $\beta_0 = \varepsilon$. Therefore, if $\ell$ with $1 \leq \ell < r$ is maximal such that $t_{\ell}$ is not inside $\mu_{\ell}$, then we have the following situation:

\begin{equation*}
\widehat{\alpha}_1 = \mu_1 \underbrace{\actsym}_{\beta_1} \mu_2 \underbrace{\actsym}_{\beta_2} \ldots \mu_{\ell-1} \underbrace{\actsym}_{\beta_{\ell-1}} \mu_{\ell} \underbrace{\actsym \opensym^{g_1}}_{\beta_{\ell}} \mu_{\ell + 1} \beta_{\ell + 1} \mu_{\ell + 2} \ldots \beta_{r-2} \mu_{r-1} \beta_{r-1}  \,,
\end{equation*}
where $\mu_i \in \{\closesym\}^+$ for every $i$ with $1 \leq i \leq \ell$, and $g_1 \geq 0$. Moreover, since $\ell$ is maximal, all $\mu_{i}$ with $\ell + 1 \leq i \leq r-1$ contain an $\act$ position, which means that $\beta_{i} \in \{\opensym\}^+$ for every $i$ with $\ell + 1 \leq i \leq r - 1$. However, this directly implies that $\mu_{i} = \actsym$ with $\ell + 2 \leq i \leq r-1$ and $\mu_{\ell+1} = \closesym^{g_2} \actsym$ for some $g_2 \geq 0$ with the property that at most one of $g_1$ and $g_2$ can be positive. More precisely, we have the following situation:
\begin{equation}\label{mainEqu}
\widehat{\alpha}_1 = \mu_1 \underbrace{\actsym}_{\beta_1} \mu_2 \underbrace{\actsym}_{\beta_2} \ldots \mu_{\ell-1} \underbrace{\actsym}_{\beta_{\ell-1}} \mu_{\ell} \underbrace{\actsym \opensym^{g_1}}_{\beta_{\ell}} \underbrace{\closesym^{g_2} \actsym}_{\mu_{\ell + 1}} \beta_{\ell + 1} \underbrace{\actsym}_{\mu_{\ell + 2}} \ldots \beta_{r-2} \underbrace{\actsym}_{\mu_{r-1}} \beta_{r-1}  \,,\tag{$\dagger$}
\end{equation}
where $\beta_i \in \{\opensym\}^+$ for every $i$ with $\ell + 1 \leq i \leq r - 1$, $\mu_i \in \{\closesym\}^+$ for every $i$ with $1 \leq i \leq \ell$, and $g_1, g_2 \geq 0$ with $0 \in \{g_1, g_2\}$.\par
If, on the other hand, no such $\ell$, $1 \leq \ell < r$, exists, then all the $\act$ positions $t_i$ are in $\mu_i$ for every $i$ with $1 \leq i \leq r-1$. In particular, this means that $\beta_i \in \{\opensym\}^+$ for every $i$ with $1 \leq i \leq r-1$, which forces all $\mu_i$, $2 \leq i \leq r-1$, to start and end with an $\act$ position, while $\mu_1$ must have a rightmost $\act$ position, but a leftmost $\act$ position only if $\beta_0 \neq \varepsilon$. Thus,
\begin{equation*}\label{mainEquNoEll}
\widehat{\alpha}_1 = \beta_0 \underbrace{\closesym^{g} \actsym}_{\mu_{1}} \beta_1 \underbrace{\actsym}_{\mu_{2}} \ldots \beta_{r-2} \underbrace{\actsym}_{\mu_{r-1}} \beta_{r-1}\,,\tag{$\star$}
\end{equation*}
where $\beta_i \in \{\opensym\}^+$  for every $i$ with $1 \leq i \leq r-1$, $g \geq 0$, and $g > 0$ implies $\beta_0 = \varepsilon$.\par
As mentioned above, these observations also apply to the right side $\widehat{\alpha}_2$ in an analogous way. \par
Next, we turn to the marked block $\mu_r$ in between the left and right side of $\widehat{\alpha}$. Since $\mu_r$ contains exactly one occurrence of $\actsym$, we can factorise it into $\mu_r = \nu_1 \actsym \nu_2$, where $\nu_1, \nu_2 \in \{\closesym\}^*$. We now consider four individual cases that arise from whether or not $\nu_1$ or $\nu_2$ are empty. For each of these cases, we can use the observations from above in order to further determine the structure of the left or right side of $\widehat{\alpha}$. 

\begin{enumerate}[start=1,label={\bfseries Claim (\arabic*)},align=left]%[start=1,label={\bfseries Claim (\arabic*)}, leftmargin = 0.7in]
\item \label{CaseLeftNonempty} If $\nu_1 \neq \varepsilon$, then we have 
\begin{equation*}
\widehat{\alpha}_1 = \mu_1 \actsym \mu_2 \actsym \ldots \mu_{r-1} \actsym\,.
\end{equation*}
\emph{Proof}: If $\nu_1 \neq \varepsilon$, then $\nu_1[1] = \closesym$, which implies that $\beta_{r-1}[|\beta_{r-1}|] = \actsym$ and therefore $\beta_{r-1} = \actsym$. This means that for $\ell = r-1$, we have the case described in~(\ref{mainEqu}), i.\,e., where $t_{\ell}$ is the rightmost $\act$ position that is not in $\mu_{\ell}$ and, since $\beta_{\ell} = \actsym$, we also have the case $g_1 = 0$. This directly implies the statement claimed above. \hfill $\square$
\item\label{CaseRightNonempty} If $\nu_2 \neq \varepsilon$, then we have 
\begin{equation*}
\widehat{\alpha}_2 = \actsym \mu_{r+1} \actsym \mu_{r+2} \ldots \actsym \mu_k\,.
\end{equation*}
\emph{Proof}: Analogous to~\ref{CaseLeftNonempty}. \hfill $\square$
\item\label{CaseLeftEmpty} If $\nu_1 = \varepsilon$, then we have one of the following two cases:
\begin{enumerate}
\item\label{CaseLeftEmptySubOne} For some $\ell$ with $1 \leq \ell \leq r-1$, 
\begin{equation*}
\widehat{\alpha}_1 = \mu_1 \underbrace{\actsym}_{\beta_1} \mu_2 \underbrace{\actsym}_{\beta_2} \ldots \mu_{\ell-1} \underbrace{\actsym}_{\beta_{\ell-1}} \mu_{\ell} \underbrace{\actsym \opensym^{g_1}}_{\beta_{\ell}} \underbrace{\closesym^{g_2} \actsym}_{\mu_{\ell + 1}} \beta_{\ell + 1} \underbrace{\actsym}_{\mu_{\ell + 2}} \ldots \beta_{r-2} \underbrace{\actsym}_{\mu_{r-1}} \beta_{r-1}  \,,
\end{equation*}
where $\beta_i \in \{\opensym\}^+$ for every $i$ with $\ell + 1 \leq i \leq r - 1$, $\mu_i \in \{\closesym\}^+$ for every $i$ with $1 \leq i \leq \ell$, and $g_1, g_2 \geq 0$ with $0 \in \{g_1, g_2\}$. Note that this is exactly the case described in Equation~\ref{mainEqu}.
\item\label{CaseLeftEmptySubTwo} $\widehat{\alpha}_1 = \beta_0 \underbrace{\closesym^{g} \actsym}_{\mu_{1}} \beta_1 \underbrace{\actsym}_{\mu_{2}} \ldots \beta_{r-2} \underbrace{\actsym}_{\mu_{r-1}} \beta_{r-1}$,\\
where $\beta_i \in \{\opensym\}^+$ for every $i$ with $1 \leq i \leq r-1$, $g \geq 0$, and $g > 0$ implies $\beta_0 = \varepsilon$. Note that this is exactly the case described in Equation~\ref{mainEquNoEll}.
\end{enumerate}
\emph{Proof}: If there is some $\ell'$, $1 \leq \ell' \leq r-1$, such that $t_{\ell'}$ is not in $\mu_{\ell'}$, then we can consider a maximal $\ell$ with this property and can conclude that we have the case described in~(\ref{mainEqu}), which is exactly the statement of~\ref{CaseLeftEmptySubOne}. If, on the other hand, no such $\ell'$ exists, then we have the case described in~(\ref{mainEquNoEll}), which is exactly the statement of~\ref{CaseLeftEmptySubTwo}.\hfill $\square$
\item\label{CaseRightEmpty} If $\nu_2 = \varepsilon$, then we have one of the following two cases:
\begin{enumerate}
\item\label{CaseRightEmptySubOne} For some $\ell$ with $r \leq \ell \leq k$, 
\begin{equation*}
\widehat{\alpha}_2 = \beta_{r} \underbrace{\actsym}_{\mu_{r+1}} \beta_{r+1} \underbrace{\actsym}_{\mu_{r+2}} \ldots \beta_{\ell-1} \underbrace{\actsym \closesym^{g_1}}_{\mu_{\ell}} \underbrace{\opensym^{g_2} \actsym}_{\beta_{\ell}} \mu_{\ell+1} \underbrace{\actsym}_{\beta_{\ell + 1}} \mu_{\ell+2} \underbrace{\actsym}_{\beta_{\ell + 2}} \ldots \underbrace{\actsym}_{\beta_{k-1}} \mu_{k}\,,
\end{equation*}
where $\beta_i \in \{\opensym\}^+$ for every $i$ with $r  \leq i \leq \ell - 1$, $\mu_i \in \{\closesym\}^+$ for every $i$ with $\ell + 1 \leq i \leq k$, and $g_1, g_2 \geq 0$ with $0 \in \{g_1, g_2\}$.
\item\label{CaseRightEmptySubTwo} $\widehat{\alpha}_2 = \beta_{r} \underbrace{\actsym}_{\mu_{r+1}} \beta_{r+1} \underbrace{\actsym}_{\mu_{r+2}} \ldots \underbrace{\actsym \closesym^{g}}_{\mu_{1}} \beta_k$,\\
where $\beta_i \in \{\opensym\}^+$ for every $i$ with $r \leq i \leq k - 1$, $g \geq 0$, and $g > 0$ implies $\beta_k = \varepsilon$.\par
\end{enumerate}
\emph{Proof}: Analogous to~\ref{CaseLeftEmpty}. \hfill $\square$
\end{enumerate}

\noindent In order to conclude the proof, we need some preliminary observations and definitions. 

\begin{quote}
\textbf{Observation ($\ast$)}: Let $i$ be arbitrary with $1 \leq i \leq |\alpha|$. Position $i$ is marked at stage $s$ of $\sigma$ (and therefore in some marked block of $\widehat{\alpha}$) if and only if $\alpha[i] \in \{x_1, x_2, \ldots, x_{s}\}$. If $\alpha[i] = x_s$, then position $i$ must be $\act$ at step $p_s$ (and therefore $\widehat{\alpha}[i] = \actsym$). If $\alpha[i] \in \{x_1, x_2, \ldots, x_{s-1}\}$, then position $i$ must be $\act$ or $\closed$ at step $p_s$ (and therefore $\widehat{\alpha}[i] \in \{\actsym, \closesym\}$). 
\end{quote}

Let $i$ with $1 \leq i \leq |\alpha|$ be some position that is $\act$ at step $p_s$ of the pd-marking scheme. We say that $i$ is \emph{blocked} if it is unmarked in $\widehat{\alpha}$ or if $\opensym \in \{\widehat{\alpha}[i-1], \widehat{\alpha}[i+1]\}$. The idea of this definition is that if $i$ is blocked, then it cannot be set from $\act$ to $\closed$ in the next step of the pd-marking scheme. Indeed, if $i$ is not marked, then $\alpha[i] \in \{x_{s+1}, x_{s+2}, \ldots, x_m\}$ (see Observation~($\ast$)), and, by assumption, for symbols $x \in \{x_{s + 1}, x_{s + 2}, \ldots, x_m\}$ we have not yet reached the situation that all positions from $\pset_{x}(\alpha)$ are $\act$ at the same time); thus, none of the corresponding positions can be set to $\closed$ in the next step. Moreover, if $\opensym \in \{\widehat{\alpha}[i-1], \widehat{\alpha}[i+1]\}$, then the $\act$ position $i$ is adjacent to an $\open$ position and therefore cannot be set to $\closed$ in the next step.\par

We next show the following claim: 

\begin{quote}
\textbf{Claim ($\ast \ast$)}: Every position on the left side and on the right side that is $\act$ at step $p_s$ of the pd-marking scheme is blocked. 
\end{quote}

\emph{Proof}: Due to~\ref{CaseLeftNonempty}~to~\ref{CaseRightEmpty}, we know that every position $i$ that is $\act$ in step $p_s$ of the pd-marking scheme and corresponds to a marked occurrence of $\actsym$ on the left side, satisfies $\widehat{\alpha}[i+1] = \opensym$. Moreover, every position $i$ that is $\act$ in step $p_s$ of the pd-marking scheme and corresponds to a marked occurrence of $\actsym$ on the right side, satisfies $\widehat{\alpha}[i-1] = \opensym$. Consequently, every position on the left and right side that is $\act$ at step $p_s$ of the pd-marking scheme is blocked. \qed\medskip

We are now ready to finally conclude this proof. To this end, we analyse all possible cases of how $\mu_r$ may look like, and we show that for each case we either obtain a contradiction, or there is a step of $Q$ with at least $k + 1$ $\act$ positions, or there is a marking sequence $\sigma'$ with $\pi_{\sigma'}(\alpha) = k-1$ (see the two cases mentioned at the beginning of this proof). \par
First, we note that if $t_r$ is blocked, then, due to Claim~$(\ast \ast)$, every position that is $\act$ at step $p_s$ of the pd-marking scheme is blocked. Thus, an $\open$ position will be set to $\act$ in the next step of the pd-marking scheme, which means that there are $k + 1$ $\act$ positions in the next step of the pd-marking scheme.\par
If $t_r$ is not blocked, then, since $t_r$ is marked in $\widehat{\alpha}$, we must have $\widehat{\alpha}[t_r - 1] \neq \opensym$ and $\widehat{\alpha}[t_r + 1] \neq \opensym$. We consider four cases that arise from whether $\nu_1$ or $\nu_2$ are empty.

\begin{itemize}
\item $\nu_1 = \varepsilon$ and $\nu_2 \neq \varepsilon$: Since $\nu_1 = \varepsilon$, position $t_r$ is a right neighbour of $\beta_{r-1}$. If the last symbol of $\beta_{r-1}$ is an occurrence of $\opensym$, then $t_r$ is blocked, which is a contradiction to our assumption that $t_r$ is not blocked. Therefore, the last symbol of $\beta_{r-1}$ is an occurrence of $\actsym$, which, according to~\ref{CaseLeftEmpty}, is only possible if we have the situation described in~\ref{CaseLeftEmptySubOne} with $\ell = r - 1$ and $g_1 = 0$. Indeed, if~\ref{CaseLeftEmptySubTwo} applies or if~\ref{CaseLeftEmptySubOne} applies with $\ell < r - 1$, then $\beta_{r-1} \in \{\opensym\}^+$, and if~\ref{CaseLeftEmptySubOne} applies with $g_1 > 0$, then $\beta_{r-1} = \actsym \opensym^{g_1}$.\par
\ref{CaseRightNonempty} implies that there is no marked position in the right side of $\widehat{\alpha}$ that is also $\act$ at step $p_s$ of the pd-marking scheme. Furthermore, since we have~\ref{CaseLeftEmptySubOne} with $\ell = r - 1$, there is also no marked position in the left side of $\widehat{\alpha}$ that is also $\act$ at step $p_s$ of the pd-marking scheme. Consequently, Observation ($\ast$) implies that $t_r$ is the only position with $\alpha[i] = x_s$. Since $x_s$ is marked in stage $s$ of $\sigma$, $\nu_2$ corresponds to a non-empty factor of $\alpha$ that is marked at stage $s-1$ of $\sigma$, and position $t_r - 1$ is not marked in stage $s - 1$ of $\sigma$, we know that marking $x_s$ in stage $s$ just extends the marked block that corresponds to $\nu_2$. Hence, at stage $s-1$ there were $k$ marked blocks, which is a contradiction to the assumption that $s$ is the first stage of $\sigma$ where we reach the maximum of $k$ marked blocks. 
\item $\nu_1 \neq \varepsilon$ and $\nu_2 = \varepsilon$: This case leads to a contradiction analogously to the previous case.
\item $\nu_1 \neq \varepsilon$ and $\nu_2 \neq \varepsilon$: Due to~\ref{CaseLeftNonempty}~and~\ref{CaseRightNonempty}, there is no marked position in the right or left side of $\widehat{\alpha}$ that is also $\act$ at step $p_s$ of the pd-marking scheme. Hence, Observation ($\ast$) again implies that $t_r$ is the only position with $\alpha[i] = x_s$. Since $x_s$ is marked in stage $s$ of $\sigma$, and since both $\nu_1$ and $\nu_2$ correspond to non-empty factors of $\alpha$ that are marked at stage $s-1$ of $\sigma$, we know that marking $x_s$ in stage $s$ joins two marked blocks and changes no other marked block. Hence, at stage $s-1$ there were $k+1$ marked blocks, which is a contradiction.
\item $\nu_1 = \varepsilon$ and $\nu_2 = \varepsilon$: Since $t_r$ is not blocked and $\beta_{r-1} \widehat{\alpha}[t_r] \beta_{r}$ is a factor of $\widehat{\alpha}$, we know that neither the last symbol of $\beta_{r-1}$ nor the first symbol of $\beta_{r}$ can be occurrences of $\open$. Just like in the first and second case,~\ref{CaseLeftEmpty}~and~\ref{CaseRightEmpty} imply that this is only possible if with respect to the left side, we have the situation described in~\ref{CaseLeftEmptySubOne} with $\ell = r - 1$ and $g_1 = 0$, and with respect to the right side, we have the situation described in~\ref{CaseRightEmptySubOne} with $\ell = r$ and $g_2 = 0$. This means that we have the following situation 

\begin{equation*}
\widehat{\alpha} = \mu_1 \underbrace{\actsym}_{\beta_1} \mu_2 \underbrace{\actsym}_{\beta_2} \ldots \mu_{r-1} \underbrace{\actsym}_{\beta_{r-1}} \underbrace{\actsym}_{\mu_r} \underbrace{\actsym}_{\beta_{r}} \mu_{r+1} \underbrace{\actsym}_{\beta_{r+1}} \ldots \mu_{k-1} \underbrace{\actsym}_{\beta_{k-1}} \mu_k\,.
\end{equation*}

Hence, $t_r$ is the only $\act$ position that is marked in $\widehat{\alpha}$, which means that $t_r$ is the only position with $\alpha[i] = x_s$. In particular, $t_r$ is the only position that is unmarked in stage $s-1$ and marked in stage $s$ of $\sigma$. This implies that in stage $s-1$ of $\sigma$ there are exactly $k-1$ marked blocks (i.\,e., the factors $\mu_j$ with $1 \leq j \leq k$ and $j \neq r$) and, by our assumption that stage $s$ is the first stage with $k$ marked blocks, we also know that in stages $1, 2, \ldots, s-1$ the maximum number of marked blocks is $k-1$. Moreover, at stage $s-1$, every unmarked position except $t_r$ (i.\,e., all the positions corresponding to the occurrences of $\actsym$, except the occurrence $\widehat{\alpha}[t_r] = \actsym$) is a neighbour of some marked block. Consequently, we can change $\sigma$ into a marking sequence $\sigma'$ as follows. The marking sequence $\sigma'$ simulates $\sigma$ up to stage $s-1$. As observed above, so far the maximum number of marked blocks is $k-1$. Then, instead of marking $x_s$, $\sigma'$ marks all other unmarked symbols in some order. In each of the corresponding stages of the marking sequence, marking the next symbol leaves the number of marked blocks unchanged, or decreases it (this can be easily seen by consulting the factorisation illustrated above). Finally, symbol $x_s$ is marked as the last symbol. Thus, $\sigma'$ is a marking sequence for $\alpha$ with $\pi_{\sigma'}(\alpha) = k-1$.\qedhere
\end{itemize}

\end{proof}

\section{A New Relationship Between Pathwidth and Cutwidth}\label{sec:direct}

We observe that the reduction from $\minCutwidthProb$ to $\minlocProb$ from Section~\ref{sec:locToCutwidth} combined with the reduction from $\minlocProb$ to $\minPathwidthProb$ from Section~\ref{sec:locToPW} gives a reduction from $\minCutwidthProb$ to $\minPathwidthProb$. Moreover, this reduction is approximation preserving; thus, it carries over approximations for $\minPathwidthProb$ (e.\,g.,~\cite{FeigeEtAl2008,GroenlandEtAl2023}) to $\minCutwidthProb$, and yields new results for $\minCutwidthProb$.

Pathwidth and cutwidth are classical graph parameters that play an important role for graph algorithms, independent from our application for computing the locality number. Therefore, it is the main purpose of this section to translate the reduction from $\minCutwidthProb$ to $\minPathwidthProb$ that takes $\minlocProb$ as an intermediate step into a direct reduction from $\minCutwidthProb$ to $\minPathwidthProb$. Such a reduction is of course implicitly hidden in the reductions of Sections~\ref{sec:locToCutwidth} and \ref{sec:locToPW}, but we believe that explaining the connection in a more explicit way will be helpful for researchers that are mainly interested in the graph parameters cutwidth and pathwidth.

The relationship between cutwidth and pathwidth revealed by this direct reduction is best illustrated via a third graph parameter that we call \emph{second order cutwidth}. To the best of our knowledge, this parameter has not explicitly been studied before. \par

Let $L = (v_1, v_2, \ldots, v_n)$ be a linear arrangement of a (multi)graph $G = (V, E)$. For the classical cutwidth, we consider the maximum number of edges that span over a gap between a vertex $v_i$ and $v_{i+1}$. For the second order cutwidth on the other hand, we consider the maximum number of edges that span over a vertex $v_i$ or that are adjacent to $v_i$, i.\,e., all edges $\{v_k, v_\ell\}$ with $k \leq i \leq \ell$ (note that this is equivalent to considering the maximum number of edges that span over the gap between $v_{i-1}$ and $v_{i}$ or over the gap between $v_i$ and $v_{i+1}$). Let us now formally define the second order cutwidth.\par
For every $i \in \{1, 2, \ldots, n\}$, we define $\Gamma_{L, i} = \{\{v_k, v_\ell\}\in E \mid k \leq i \leq \ell\}$. The \emph{second order cutwidth} of the linear arrangement $L$ is defined by $\socutwidth(L) = \max\{|\Gamma_{L, i}| \mid 1 \leq i \leq n\}$, and the second order cutwidth of $G$ is defined by $\socutwidth(G) = \min\{\socutwidth(L) \mid L \text{ is a linear arrangement for $G$}\}$. Since $\Gamma_{L, i} = \cutedges_L(i-1) \cup \cutedges_L(i)$ for every $i \in \{1, 2, \ldots, n\}$, we can also define the second order cutwidth in terms of the sets $\cutedges_L$, i.\,e., $\socutwidth(L) = \max\{|\cutedges_L(i-1) \cup \cutedges_L(i)| \mid 1 \leq i \leq n\}$.\par
For example, the linear arrangement $L$ on the top of Figure~\ref{fig:exampleCutwidth} has a second order cutwidth of $6$, which, e.\,g., is witnessed by $\Gamma_{L, 3}$, since $\Gamma_{L, 3} = \{\{u, w\}, \{u, x\}, \{v, x\}, \{v, y\}, \{v, z\}, \{w, x\}\}$. Note that $\Gamma_{L, 3} = \cutedges_L(2) \cup \cutedges_L(3)$. On the other hand, the linear arrangement $L'$ on the bottom has a second order cutwidth of $4$ (witnessed by $|\Gamma_{L', 2}| = 4$).\par

To understand the relationship between this parameter and cutwidth, we first show the following.

\begin{lemma}\label{cw_to_cw2}
Let $G=(V,E)$ be a connected graph with at least three vertices, then $\cutwidth(G)+1\leq\socutwidth(G)\leq 2 \cutwidth(G)$. 
Further, given a linear arrangement $L$ for $G$, a linear arrangement $L'$ for $G$ with $\socutwidth(L)-1\geq \cutwidth(L')$ can be computed in $\bigo(|E|)$. 
\end{lemma}

\begin{proof}
Simply by definition, we see that
\begin{align*} 
\socutwidth(L) &= \max\{\Gamma_{L, i} \mid 1 \leq i \leq n\} = \max\{|\cutedges_L(i-1) \cup \cutedges_L(i)| \mid 1 \leq i \leq n\}\\
&\leq \max\{|\cutedges_L(i-1)| + |\cutedges_L(i)| \mid 1 \leq i \leq n\}\leq 2\cutwidth(L)\,, 
\end{align*} 
for any linear arrangement $L$ for $G$. This directly gives the second inequality.

For the first inequality, observe that by definition $\cutwidth(L)\leq \socutwidth(L)$ for any linear arrangement $L$. Let $L = (v_1, v_2, \ldots, v_n)$ be a linear arrangement of minimum second order cutwidth, and assume $\cutwidth(L)=\socutwidth(L)$ (otherwise $L' := L$ shows the claim). We show how to construct a linear arrangement $L'$ of strictly smaller cutwidth than $L$ in polynomial time, by iterative rearrangements. To this end, we define $I_{max}(L)=\{v_t \mid |\cutedges_L(t)|= \socutwidth(G)\}$ and let $I_{max}^1(L)$ be the subset of $I_{max}(L)$ of degree one vertices, i.\,e., $I_{max}(L) = \{v_t \in I_{max}(L) \mid |N(v_t)| = 1\}$.
We now stepwise rearrange $L$ and use $I_{max}$ and $I^1_{max}$ to track progress, never increasing the second order cutwidth of the arrangement. Our goal is a linear arrangement $L'$ with $\socutwidth(L')=\socutwidth(G)$ where $I_{max}(L')=\emptyset$; note that this implies the desired $\cutwidth(L')\leq \socutwidth(G)-1$. Until we have reached this goal, we have a linear arrangement $L$  with $\cutwidth(L)=\socutwidth(L)$ implying $\cutedges_L(t) = \Gamma_{L, t}$ for each $t\in I_{max}(L)$. Since $\Gamma_{L, t} = \{\{v_k, v_\ell\}\in E \mid k \leq t \leq \ell\}$ and $\cutedges_L(t) = \{\{v_k, v_\ell\}\in E \mid k \leq t < \ell\}$, this also means that $v_t$ has no neighbour in $\{v_1,\dots, v_{t-1}\}$. Hence, $I_{max}(L)$ is an independent set. 

We change $L$ into a linear arrangement $L'$ as follows. Let $v_t \in I_{max}(L)$ be arbitrary, and let $v_\ell$ be the neighbour of $v_t$ in $G$ with the smallest index in $L$. We move $v_t$ directly to the right of the node $v_\ell$, i.\,e., we  define $L' := (v_1, \ldots, v_{t-1}, v_{t+1}, \ldots, v_{\ell}, v_t, v_{\ell+1}, \ldots, v_n)$. We also define $I_{max}(L')$ and $I_{max}^1(L')$ analogously as for $L$. By this definition, $v_t$ has position $\ell$ and $v_{\ell}$ has position $\ell-1$ in the new linear arrangement $L'$. 

By the choice of $\ell$, the cut of $v_{\ell}$ with respect to $L$ is the same as the cut of $v_t$ with respect to $L'$, i.\,e., $\cutedges_{L'}(\ell) = \cutedges_{L}(\ell)$. Among all other vertices, only $v_\ell$ can have a larger cut in $L'$ compared to $L$, since all other cuts either stay the same (the ones strictly to the right of $v_t$ in $L'$), or do not contain the edges of $v_t$ anymore (the ones strictly to the left of $v_{\ell}$ in $L'$). The only difference between the cuts $\cutedges_{L}(\ell)$ and $\cutedges_{L'}(\ell -1)$ are edges involving $v_t$. More precisely, all edges $\{v_t, u\}$ with $u \in N(v_t) \setminus \{v_{\ell}\}$ are in $\cutedges_{L}(\ell)$ (since the vertices $N(v_t) \setminus \{v_{\ell}\}$ are to the right of $v_{\ell}$ in $L$), while $\{v_t, v_{\ell}\}$ is not in $\cutedges_{L}(\ell)$ (since $v_t$ is to the left of $v_{\ell}$ in $L$). In $L'$, we have the opposite situation, i.\,e., none of the edges $\{v_t, u\}$ with $u \in N(v_t) \setminus \{v_{\ell}\}$ is in $\cutedges_{L'}(\ell-1)$, while $\{v_t, v_{\ell}\}$ is in $\cutedges_{L'}(\ell-1)$.

Summarizing, we see that all first or second order cuts for vertices $V\setminus\{v_t,v_\ell\}$ can only decrease from $L$ to $L'$. In particular -- which we will denote as property $(\dagger)$ for future reference -- we observe that $\cutedges_{L'}(\ell) = \cutedges_{L}(\ell)$, and $\cutedges_{L'}(\ell-1)=\cutedges_{L}(\ell)\cup\{v_t, v_{\ell}\}\setminus\{\{v_t,v_u\mid u \in N(v_t) \setminus \{v_{\ell}\}\}$.

Towards proving $\socutwidth(L')=\socutwidth(G)$ we only have to check the second order cuts of $v_\ell$ and $v_t$. Since $\{v_t, v_{\ell}\}\in \Gamma_{L,\ell}$, ($\dagger$) implies $\Gamma_{L',\ell-1}\subseteq \Gamma_{L,\ell}$.  Further, $\Gamma_{L',\ell}=\cutedges_{L'}(\ell-1)\cup \cutedges_{L'}(\ell)=\cutedges_{L}(\ell)\cup\{\{v_t, v_{\ell}\}\}\subseteq \Gamma_{L,\ell}$. Thus $\socutwidth(L)=\socutwidth(G)$ shows the claimed second order width of $L'$.

For the progress of reducing $I_{max}(L')$, first note that $v_\ell \notin I_{max}(L)$; recall that $v_{t} \in I_{max}(L)$ and $I_{max}(L)$ is an independent set. This implies that $|\cutedges_{L'}(\ell)| = |\cutedges_{L}(\ell)| \leq \socutwidth(G)-1$, so $v_t \notin I_{max}(L')$. Further, for all vertices except $v_\ell$, the cut values did not increase from $L$ to $L'$, so $I_{max}(L') \subseteq (I_{max}(L) \setminus \{v_t\}) \cup \{v_{\ell}\}$ and $I_{max}^1(L') \subseteq (I_{max}^1(L) \setminus \{v_t\}) \cup \{v_{\ell}\}$. \par
We now consider two cases depending on whether or not $|\cutedges_{L'}(\ell-1)|$ is strictly smaller than $|\cutedges_{L}(\ell)| + 1$.

\medskip

\noindent\textbf{Case 1}, $|\cutedges_{L'}(\ell-1)| < |\cutedges_{L}(\ell)| + 1$: Since $|\cutedges_{L}(\ell)| \leq \socutwidth(G) - 1$ (as observed above), this means that $|\cutedges_{L'}(\ell-1)| < \socutwidth(G)$. Hence, $v_{\ell} \notin I_{max}(L')$, which implies that $|I_{max}(L')| \leq |I_{max}(L)| - 1$.

\medskip

\noindent\textbf{Case 2}, $|\cutedges_{L'}(\ell-1)| = |\cutedges_{L}(\ell)| + 1$: By  ($\dagger$) this is only possible if $N(v_t) \setminus \{v_{\ell}\}=\emptyset$, which means that $v_t \in I_{max}^1(L)$. Since $G$ is a connected graph with at least three vertices, $v_\ell$ has at least one neighbour other than $v_t$, which means that $v_{\ell} \notin I_{max}^1(L')$. Consequently, $|I_{max}^1(L')| \leq |I_{max}^1(L)| - 1$.

\medskip

Thus, we have created a linear arrangement $L'$ with $\socutwidth(L')=\socutwidth(G)$ such that either $|I_{max}(L')|<|I_{max}(L)|$ or $|I_{max}(L')|=|I_{max}(L)|$ and $|I^1_{max}(L')|<|I^1_{max}(L)|$. Iterative application of this rearrangement converges to the desired linear arrangement $L'$ with $\socutwidth(L')=\socutwidth(G)$ and $I_{max}(L')=\emptyset$.

Computing $\mathcal C_L(i)$ for each $1\leq i\leq n$ and with this also the set $I_{max}(L)$ can be done in $\bigo(|E|)$. With this information, we only have to check the neighbourhood of the vertex that is moved for each rearrangement. Further, we only move vertices in $I_{max}$, so each vertex is moved at most once. Thus, all rearrangements needed to create $L'$ from $L$ can be done in $\bigo(|E|)$.
\end{proof}

Armed with this relationship, we now give the direct reduction from problem $\minCutwidthProb$ to problem $\minPathwidthProb$ that will in fact satisfy $\socutwidth(G)-1\leq  \pathwidth(G')\leq \socutwidth(G)$,  effectively linking cutwidth and pathwidth with the help of Lemma~\ref{cw_to_cw2}. This reduction is surprisingly simple. Let $G = (V, E)$ be a simple graph.\footnote{We discuss the case of multi-graphs later on.} We translate every original node $u \in V$ into a clique $\mathcal{K}(u) = \{u_v \mid v \in N(u)\}$, and we translate every original edge $\{u, v\}$ into an edge $\{u_v, v_u\}$. Hence, we replace each vertex $u$ by a clique of size $|N(u)|$, and these cliques are connected according to the original graph. More formally, $G$ is transformed into the graph $G' = (V', E')$ with $V' = \bigcup_{u \in V} \mathcal{K}(u)$ and $E' =\{\{u_v,v_u\}\mid \{u,v\}\in E\}\cup\{\{v_u,v_w\}\mid u, w \in N(v), u \neq w\}$. See Figure~\ref{fig:directReduction} for an illustration.\par

\begin{figure}
\begin{center}

\includegraphics{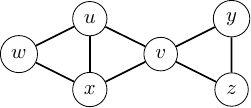}
\hspace{1cm}
\includegraphics{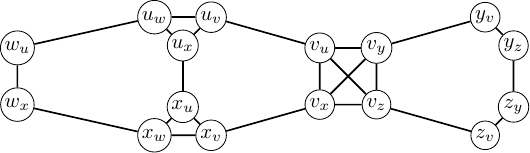}

\end{center}

\caption{A graph $G$ (left side) and the corresponding graph $G'$ obtained by the reduction (right side). Note that vertex $v$ of degree $4$ becomes a clique $\{v_u, v_y, v_x, v_z\}$, where $v_u$ is connected to $u_v$ (which is a vertex of the $3$-clique representing degree-$3$ vertex $u$), $v_y$ is connected to $y_v$ and so on.}
\label{fig:directReduction}
\end{figure}

We proceed by showing the second inequality for our goal $\socutwidth(G)-1\leq  \pathwidth(G')\leq \socutwidth(G)$.

\begin{lemma}\label{cw_to_pw1}
Let $G$ be a graph with at least one edge, then $\pathwidth(G')\leq \socutwidth(G)$.
\end{lemma}

\begin{proof}
Consider a graph $G=(V,E)$ and let $L=(v_1,\dots, v_n)$ be an optimal linear arrangement for the second order cutwidth of $G$. Every node $u_w$ of $G'$ is either a \emph{left node} or a \emph{right node} according to whether $u$ occurs to the left or to the right of $w$ in the linear arrangement $L$. More formally, a vertex $u_w$ of $G'$ is a \emph{left node}, if $u = v_j$ and $w = v_\ell$ with $j < \ell$; the term \emph{right node} is defined analogously. Obviously, $u_w$ is a right node if and only if $w_u$ is a left node.\par

To prove the claimed bound on the pathwidth of $G'$, we construct a path decomposition for $G'$ of width at most $\socutwidth(G)$ in the form of a pd-marking scheme.

Intuitively, the pd-marking scheme is as follows. Let us first recall that, for every $i \in \{1, 2, \ldots, n\}$, $\Gamma_{L, i} = \{\{v_k, v_\ell\}\in E \mid k \leq i \leq \ell\}$, and that $\socutwidth(G)$ is the maximum over the cardinalities of these sets. For every $i = 1, 2, \ldots, n$, we produce a step $i$ of the marking scheme, where every edge $\{u, v\} \in \Gamma_{L, i}$ is represented by having the right node of $\{u_v, v_u\} \in E'$ set to $\act$ (and these are the only $\act$ vertices) and the left node set to $\closed$. Moreover, for all edges $\{u, v\} \in E$ such that $u, v \in \{v_1, v_2, \ldots, v_{i-1}\}$, both $u_v$ and $v_u$ are $\closed$, and all other vertices are $\open$. This means that the number of $\act$ vertices corresponds to $|\Gamma_{L, i}|$. In order to conclude the prove, it suffices to show that (1) we can obtain step $i+1$ from step $i$ with at most $|\Gamma_{L, i+1}| + 1$ vertices being $\act$ at the same time, and (2) that the cover property is satisfied. Both claims follow from how we obtain step $i+1$ from $i$: Let $v := v_{i+1}$. We first set all $\act$ right nodes from $\mathcal{K}(v_{i-1})$ to $\closed$, then we set all $\open$ \emph{left} nodes $v_u \in \mathcal{K}(v)$ to $\act$ (now all vertices from $\mathcal{K}(v)$ are $\act$ at the same time as required by the cover property), then for every such left node $v_u \in \mathcal{K}(V)$, we set the right node $u_v$ to $\act$ and then $v_u$ to $\closed$ (this is done one by one, to get at most one $|\Gamma_{L, i + 1}| + 1$ $\act$ vertices). Now we have reached step $i+1$. Let us now define the pd-marking scheme more formally and prove its correctness.

For every $v = v_1, v_2, \ldots, v_n$, we perform the following steps.
\begin{itemize}
\item Step $1(v)$: Set all $\open$ left nodes from $\mathcal{K}(v)$ to $\act$.
\item Step $2(v)$: For every left node $v_u \in \mathcal{K}(v)$, set the right node $u_v \in \mathcal{K}(u)$ from $\open$ to $\act$, and then set $v_u$ from $\act$ to $\closed$.
\item Step $3(v)$: Set all $\act$ right nodes from $\mathcal{K}(v)$ to $\closed$.
\end{itemize}

Note that in our intuitive explanation above, Step $3(v)$ is the first operation that we have to do in order to get from step $i$ to step $i + 1$. More precisely, after finishing Step $2(v)$ we have reached the situation that has been called step $i$ above. It is simpler to state the Steps $1(v)$, $2(v)$ and $3(v)$ in this way, since these are exactly the operations that are with respect to the clique $\mathcal{K}(v_i)$.

By induction, it can be easily seen that after Step $1(v)$ all vertices from $\mathcal{K}(v)$ are $\act$ (i.\,e., before Step $1(v)$, only the left nodes are still $\open$), after Step $2(v)$ all right nodes from $\mathcal{K}(v)$ are still $\act$, but all left nodes from $\mathcal{K}(v)$ are $\closed$, and after Step $3(v)$ all vertices of $\mathcal{K}(v)$ are $\closed$.\par
Let us now prove that the marking scheme described above is a valid pd-marking scheme. Our considerations already show that we set every vertex $v \in V'$ from $\open$ to $\act$ and then from $\act$ to $\closed$. Now let $\{p_q, r_s\}$ be an arbitrary edge of $G'$. If $p_q, r_s \in \mathcal{K}(v)$ for some $v \in V$, then both $p_q$ and $r_s$ are $\act$ after Step $1(v)$. If there is no $v \in V$ with $p_q, r_s \in \mathcal{K}(v)$, then, by definition of $G'$, $p_q = u_w$ and $r_s = w_u$ with $\{u, w\} \in E$. Let us assume that $u_w$ is a left node, which means that $w_u$ is a right node. After Step $1(u)$, the vertex $u_w$ is $\act$. Then, in Step $2(u)$, we set $w_u \in \mathcal{K}(w)$ to $\act$, before setting $u_w$ to $\closed$. Thus, both $u_w$ and $w_u$ are $\act$ at the same time. The case where $w_u$ is a left node and $u_w$ is a right node can be handled analogously. Consequently, the above defined marking scheme is a valid pd-marking scheme, i.\,e., it describes a valid path decomposition of $G'$.\par
It remains to estimate the width of the path decomposition, i.\,e., the maximal number of vertices that are active at the same time. We formulate the invariant: for every $i \in \{1, 2, \ldots, n\}$, as soon as Step $2(v_i)$ is done, every $\act$ vertex $u_w$ is a right node such that $\{u, w\} \in \Gamma_{L, i}$.\par
First, we note that the invariant holds after Step $2(v_1)$, since then the set of $\act$ vertices is $\{w_{v_1} \mid w \in N(v_1)\}$ (which are all right vertices) and $\Gamma_{L, 1} = \{\{v_1, w\} \mid w \in N(v_1)\}$. Let us now assume that the invariant holds for some $i \in \{1, 2, \ldots, n-1\}$. \par
Let $u_w$ be a vertex that is $\act$ after Step $2(v_{i+1})$. If $u_w$ was already $\act$ immediately after Step $2(v_{i})$, then $u_w$ is a right node and $\{u, w\} \in \Gamma_{L, i}$. Moreover, $u_w \notin \mathcal{K}(v_i)$, since then it would have been set to $\closed$ in Step $3(v_{i})$. This means that $\{u, w\} \in \Gamma_{L, i + 1}$. If, on the other hand, $u_w$ was not already $\act$ immediately after Step $2(v_{i})$, then it has been set to $\act$ in Step $2(v_{i+1})$ (note that all vertices set to $\act$ in Step $1(v_{i+1})$ are set to $\closed$ in Step $2(v_{i+1})$). This means that it is a right node and that $\{u, w\}$ is in $\Gamma_{L, i + 1}$. Hence, as soon as Step $2(v_{i+1})$ is done, every $\act$ vertex $u_w$ is a right node such that $\{u, w\} \in \Gamma_{L, {i+1}}$.\par
By induction, this proves the invariant.\par
Let us now estimate the maximum number of $\act$ vertices in the entire pd-marking scheme. For every $i \in \{1, 2, \ldots, n\}$, the number of $\act$ vertices after Step $2(v_i)$ is bounded by $|\Gamma_{L, i}|$ (due to the invariant). It can be easily seen that carrying out Steps $3(v_{i})$, $1(v_{i+1})$ and $2(v_{i+1})$ produces a maximum of $|\Gamma_{L, i + 1}| + 1$ $\act$ vertices. More precisely, after setting some $\act$ right nodes from $\mathcal{K}(v_i)$ to $\closed$ in Step $3(v_i)$, we set a number $p$ of left nodes to $\act$ in Step $1(v_{i+1})$, which are then all set to $\closed$ in Step $2(v_{i+1})$, and instead we set $p$ right nodes to $\act$ (which then each account for one of the $\act$ vertices immediately after Step $2(v_{i+1})$). However, a left node $u_w$ is set to $\closed$ \emph{immediately} after the corresponding right node $w_u$ is set to $\act$; thus, we only need one additional $\act$ vertex, i.\,e., both $u_w$ and $w_u$ are $\act$ at the same time. Consequently, the maximum number of $\act$ vertices of the entire pd-marking scheme is $\max\{|\Gamma_{L, i}| \mid 1 \leq i \leq n\} + 1$, which means that its width equals $\socutwidth(L)$. 
\end{proof}

We now give the second part of the relationship between pathwidth and the second order cutwidth. Note that we also state this result constructively, to later use it for transferring approximations.

\begin{lemma}\label{cw_to_pw2}
Let $G=(V,E)$ be a graph with at least one edge, then $\pathwidth(G')\geq \socutwidth(G)-1$. Further, given a path decomposition $Q$ for $G'$, a linear arrangement $L$ for $G$ with $\socutwidth(L) \leq \width(Q) + 1$ can be constructed in $\bigo(|Q|)$. 
\end{lemma}

\begin{proof}
Let $Q$ be a path decomposition for $G'$, which we consider in the form of a pd-marking scheme with $p$ steps. For every $v \in V$, let $\phi(v) \in \{1, 2, \ldots, p\}$ be minimal such that all vertices from $\mathcal{K}(v)$ are $\act$ at step $\phi(v)$ (since $\mathcal{K}(v)$ is a clique, such a $\phi(v)$ must exist). Note that $\phi(v) \neq \phi(v')$ for every $v, v' \in V$ with $v \neq v'$ (since from one step to the next at most one vertex is changed). Let $L = (v_1, v_2, \ldots, v_n)$ be the linear arrangement of $G$ induced by the indices $\phi(v)$, i.\,e., for every $i, j \in \{1, 2, \ldots, n\}$, $\phi(v_i) < \phi(v_j)$ if and only if $i < j$. We note that $L$ can be created from $Q$ in time $\bigo(|Q|)$.

We will show that $\width(Q) \geq \socutwidth(L) - 1$ (since $Q$ is an arbitrary path decomposition, this proves the statement of the lemma). To this end, we will show that for every $i \in \{1,\dots,n\}$ and every edge $\{u,w\} \in \Gamma_{L, i} = \{\{v_k,v_\ell\}\in E \mid k \leq i \leq \ell\}$, the vertex $u_w$ or $w_u$ is $\act$ at step $\phi(v_i)$ of $Q$. Since the number of $\act$ vertices at any step of $Q$ is bounded by $\width(Q) + 1$, this implies that $|\Gamma_{L, i}| \leq \width(Q) + 1$, which directly implies that $\socutwidth(L) \leq \width(Q) + 1$.

Let $i \in \{1,\dots,n\}$ and let $\{u, w\} \in \Gamma_{L, i}$. For every $x \in V'$, let $I_{x} \subseteq \{1, 2, \ldots, p\}$ be the set of all steps of $Q$ in which $x$ is $\act$. By definition, the sets $I_{x}$ are intervals over $\{1, 2, \ldots, p\}$, and we know that $\phi(u) < \phi(v_i) < \phi(w)$. Since $I_{u_w}$ contains $\phi(u)$, $I_{w_u}$ contains $\phi(w)$, and $I_{u_w} \cap I_{w_u} \neq \emptyset$ (since there has to be a step where both $u_w$ and $w_u$ are $\act$), we conclude that $\{\phi(u), \phi(u) + 1, \ldots, \phi(w)\} \subseteq I_{u_w} \cup I_{w_u}$. Thus, we also have that $\phi(v_i) \in I_{u_w} \cup I_{w_u}$. This means that $u_w$ or $w_u$ is $\act$ at step $\phi(v_i)$ of $Q$.
 \end{proof}

As we want to transfer approximations for $\minPathwidthProb$ to $\minCutwidthProb$ also for multigraphs, we briefly explain how all results of this section easily generalize to this setting. The relationship of second order cutwidth and cutwidth remains exactly the same; the proof of Lemma~\ref{cw_to_cw2} generalizes to multigraphs with the only adjustment that $I_{max}^1(L)$ is defined as the set of vertices in $I_{max}(L)$ that have exactly one neighbour (that can be connected by multiple edges, so in this sense not of degree one).

For the connection to pathwidth, we can extend the reduction described before Lemma~\ref{cw_to_pw1} to multigraphs in a straightforward way. Let $G$ be a multigraph and let $\{u, v\}$ be an edge of $G$ with some multiplicity $t$ (i.\,e., in $G$ there are~$t$ parallel edges going from $u$ to $v$). While in the simple graph case an edge $\{u, v\}$ of $G$ was translated into the single edge $\{u_v, v_u\}$ of $G'$, we will now use $t$ simple edges $\{u^i_v, v^i_u\}$, $1\leq i \leq t$, in order to represent the multiplicity $t$ of the multi-edge of $G$. Hence, we represent a single vertex $v$ of $G$ with $N(v) = \{u_1, u_2, \ldots, u_k\}$ by several vertices $\mathcal{K}(v) = \{v^1_{u_1}, \ldots, v^{t_1}_{u_1}, v^1_{u_2}, \ldots, v^{t_2}_{u_2}, \ldots, v^1_{u_k}, \ldots, v^{t_k}_{u_k}\}$, where the $t_1, t_2, \ldots, t_k$ are the multiplicities of the edges between $v$ and its neighbours $u_1, u_2, \ldots, u_k$. Analogously to the simple graph case, we connect all the vertices of $\mathcal{K}(v)$ into a clique.

We can now prove Lemma~\ref{cw_to_pw1} for the case of multi-graphs in a similar way as for simple graphs. For a given multi-graph $G$, we apply the reduction from above, which yields a \emph{simple} graph $G'$ (recall that the multiplicities are represented by individual vertices in the cliques $\mathcal{K}(v)$ with $v \in V$). Then, we fix again an optimal linear arrangement $L=(v_1,\dots, v_n)$ for the second order cutwidth of $G$. We call a node $u^i_w$ a \emph{left node} if $u$ occurs to the left of $w$ with respect to $L$, and \emph{right nodes} are defined analogously. Now, we can define a pd-marking scheme in the same way as in the proof of Lemma~\ref{cw_to_pw1}, i.\,e., for every $v = v_1, v_2, \ldots, v_n$, we perform the Steps $1(v)$, $2(v)$ and $3(v)$. Since also in the adapted reduction we have the cliques $\mathcal{K}(v)$, both Step $1(v)$ (i.\,e., set all $\open$ left nodes from $\mathcal{K}(v)$ to $\act$) and Step $3(v)$ (i.\,e., set all $\act$ right nodes from $\mathcal{K}(v)$ to $\closed$) apply verbatim in the same way, while Step $2(v)$ reads as follows: For every left node $v^i_u \in \mathcal{K}(v)$, set the right node $u^i_v \in \mathcal{K}(u)$ from $\open$ to $\act$, and then set $v^i_u$ from $\act$ to $\closed$. The proof that this gives a path decomposition of width at most $\socutwidth(G)$ then is completely analogous. \par
The same holds for the proof of Lemma~\ref{cw_to_pw2}. 

We are now ready to state the main result of this section, i.\,e., how pathwidth approximation carries over to cutwidth approximation.

\begin{lemma}\label{cw_to_pw-apx}
If there is an $r(\opt,|V|)$-approximation algorithm for $\minPathwidthProb$ with running-time $\bigo(f(|V|))$, then there is also an $2r(2\opt,h)$-approximation algorithm for $\minCutwidthProb$ on multigraphs with running time  $\bigo(f(h)+h^2+n)$, where $n$ is the number of vertices and $h$ is the number of edges.
\end{lemma}

\begin{proof}
Let $G=(V,E)$ be an instance of $\minCutwidthProb$ and let $\mathcal A$ be an $r(\pathwidth(G'),|V|)$-approximation for $\minPathwidthProb$  Lemma~\ref{cw_to_pw1} combined with Lemma~\ref{cw_to_cw2} shows that $\pathwidth(G')\leq \socutwidth(G)\leq 2\cutwidth(G)$. Further,  Lemma~\ref{cw_to_pw2} shows that any path-decomposition $P$ of width $k$ for $G'$ can be translated into a linear arrangement $L$ for $G$ with $\socutwidth(L)\leq k+1$ in $\bigo(|P|)$.  By  Lemma~\ref{cw_to_cw2}, we can compute then from $L$ a linear arrangement $L'$ with $\cutwidth(L')\leq \socutwidth(L)-1\leq k$  in $\bigo(h)$.

The relative error of $L'$ can thus be bounded by $R(G,L)=\frac{\cutwidth(L')}{\cutwidth(G)}\leq \frac{2\pathwidth(P)}{\pathwidth(G')}=2R(G',P)$.
The algorithm which builds $G'$ from $G$ in $\bigo(n+h)$, runs $\mathcal A$ on $G'$ in $\bigo(f(h))$ and creates a linear arrangement $L'$ in $\bigo(h+|P|)$ has a performance ratio $2r(\pathwidth(G'),|V|)\leq 2r(2\cutwidth(G),h)$ and an overall running time in $\bigo(f(h)+h)$ (note that $\bigo(|P|)\subseteq \bigo(f(h))$, since $\mathcal A$ builds $P$ 
 in $\bigo(f(|V(G')|))=\bigo(f(h))$).
\end{proof}

For example, if we apply this lemma with respect to the $\bigo(\log n\sqrt{\log opt})$-approximation algorithm of~\cite{FeigeEtAl2008}, we obtain an $\bigo(\sqrt{\log(\opt)} \log(h))$-approximation algorithm for $\minCutwidthProb$ on multigraphs with $h$ edges, and if we apply it with respect to the $\bigo(\mathsf{tw} \sqrt{\log \mathsf{tw}})$-approximation algorithm of~\cite{GroenlandEtAl2023}, we obtain an $\bigo(\sqrt{\log(\opt)} \opt)$-approximation algorithm. Note that the second result holds since an $\bigo(\mathsf{tw} \sqrt{\log \mathsf{tw}})$-approximation algorithm for pathwidth is also an $\bigo(\opt \sqrt{\log \opt})$-approximation algorithm for pathwidth. Unfortunately, our reduction blows up the treewidth, so it does not give a translation to a ratio that depends only on the treewidth.

To the best knowledge of the authors, these are new approximations ratios for cutwidth that have not previously been reported in the literature, and that are better or incomparable to existing ones. Hence, let us state this result more prominently.

\begin{corollary}\label{thm:approxMultiCut}
There is a (polynomial-time) $\bigo(\sqrt{\log(\opt)} \log(h))$-approximation algorithm and an $\bigo(\sqrt{\log(\opt)} \opt)$-approximation algorithm for $\minCutwidthProb$ on multigraphs with $h$ edges.
\end{corollary}

\section{Conclusions}\label{sec:conc}

In this work, we have answered several open questions about the string parameter of the locality number. Our main tool was to relate the locality number to the graph parameters cutwidth and pathwidth via suitable reductions. As an additional result, our reductions also pointed out an interesting relationship between these classical graph parameters and the locality number for strings, with implications for approximating these parameters.\par
While our focus is on theoretical results in form of lower and upper complexity bounds, we stress here that the reductions may also be of practical interest, since they allow to transform any practical pathwidth or cutwidth algorithm into a practical algorithm for computing the locality number (or to transform a practical pathwidth algorithm into a practical algorithm for computing the cutwidth). This seems particularly interesting, since, as pointed out at the end of Section~\ref{sec:locToPW}, practical algorithms for constructing path decompositions of small width is a vibrant research area of practical algorithm engineering.

%\bibliography{bibfile}

\appendix

\section{Additional Word-Combinatorial Considerations}
\label{combwo}

In this section, we give the details that have been omitted in Section~\ref{sec:ExamplesWordComb}.

\subsection{The Locality of the Zimin Words}

\begin{lemma}
$\loc(Z_i)=\frac{|Z_i|+1}{4}=2^{i-2}$ for $i\in\MN_{\geq 2}$.
\end{lemma}

\begin{proof}
Clearly, $x_1$ and $x_1x_2x_1$ are $1$-local.
Consider a fixed $i\in\MN$ and the marking sequence $(x_2,x_1,y_1,y_2,\dots,y_{i-2})$ for $i\geq 3$ and $\{y_1,\dots,y_{i-2}\}=\{x_3,\dots,x_i\}$. Notice
that for all $j\in\MN$, $x_j$ occurs $2^{i-j}$ times in $Z_i$. Thus by marking $x_2$, there are 
$2^{i-2}$ marked blocks. Since all occurrences of $x_1$ are adjacent to occurrences of $x_2$, 
marking $x_1$ does not change the number of marked blocks. As marking the remaining variables only leads to 
the merging of some pairs of consecutive blocks into one, we never have more than $2^{i-2}$ marked blocks. 

In the following we will show the converse. More precisely, we show that if a sequence is optimal for $Z_i$ then it starts with $x_2,x_1$. Let us note first that, for $2\leq p<r$, between two consecutive occurrences of $x_r$ in $Z_i$ there is one occurrence of $x_p$. More precisely, each occurrence of a variable $x_p$, with $p\geq 2$, is directly between two occurrences of $x_1$. Also, notice that $x_j$ has $2^{i-j}$ occurrences in $Z_i$.  Now, if $x_1$ is marked before $x_2$, because $Z_i$ starts with $x_1x_2$ and ends with $x_2x_1$, it is immediate that after the marking of $x_1$ we will have at least $2^{i-2}+1$ marked blocks in the word (separated by the $2^{i-2}$ unmarked occurrences of $x_2$). This is, thus, a marking sequence that is not optimal. So $x_2$ is marked before $x_1$ in an optimal sequence. Assume that there exists $x_j$, with $j>2$, which is also marked before $x_1$ in an optimal sequence. Let $w$ be a word such that $Z_i=x_1wx_1$. There are $2^{i-1}-2$ occurrences of $x_1$ in $w$, and $w$ starts with $x_2x_1$ and ends with $x_1x_2$. As each two consecutive (marked) occurrences of the letters $x_2$ and $x_j$ are separated by unmarked occurrences of $x_1$ we have that, just before marking $x_1$, there are at least $\min\{2^{i-1}-1,2^{i-2}+2^{i-j}\}$ marked blocks in $w$ (and the same number in $Z_i$). This again shows that this is not an optimal marking sequence. So, before $x_1$ is marked, only $x_2$ should be marked. This concludes the proof of our claim, and of the proposition.
\end{proof}

\subsection{The Locality of (Condensed) Palindromes and Repetitions.}

We use the following notation. Given a marking sequence $\sigma$, let $\sigma^R$ be the marking sequence obtained by reversing $\sigma$ (i.e. $\sigma^R(i) = \sigma(|X|-i+1)$ for $1 \leq i \leq |X|$). 

By $\loc(\condensed(w))=\loc(w)$, it is enough to show our results for condensed words. Since there are no condensed palindromes of even length, only palindromes of odd length are of interest when determining the locality number. A word $w$ is called strictly $k$-local if for every optimal marking sequence of $w$ there is a stage when exactly $k$ factors are marked. For a letter $a\in \alphabet(w)$, we denote by $|w|_a$ the number of occurrences of $a$ in $w$.  For simplicity of notations, let $[n]:=\{1,2,\ldots,n\}$.

Let $w_i\in(X\cup\overline{X})^{\ast}$ be the marked version of $w$ at stage $i\in[|\alphabet(w)|]$ for a given marking sequence $\sigma$.

\begin{lemma}\label{lem01}
Define the morphism $f:X\cup\overline{X}\rightarrow\{0,1\}$ by 
\[
f(x)=\begin{cases}
0&\mbox{if }x\in X,\\
1&\mbox{if }x\in\overline{X}.
\end{cases}
\]
If $w$ is a palindrome and $\sigma$ a marking sequence for $w$ then $f(w_i)$ is a palindrome for 
all $i\in[|\alphabet(w)|]$.
\end{lemma}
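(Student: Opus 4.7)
The statement is essentially immediate once one unpacks the two ingredients: the palindromic structure of $w$ and the fact that the marking sequence acts on the alphabet (not on individual positions). My plan is therefore just to make these two observations precise and then combine them.

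First, I would recall the marking process: at stage $i$ of $\sigma$, a position $j \in [|w|]$ of $w_i$ is marked if and only if the letter $w[j]$ belongs to the set $\{\sigma(1),\sigma(2),\ldots,\sigma(i)\}$. In particular, whether position $j$ is marked at stage $i$ depends \emph{only} on the letter $w[j]$, not on $j$ itself. This is the only property of marking sequences that the argument uses.

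Next, I would use that $w$ is a palindrome, i.e.\ $w[j] = w[|w|-j+1]$ for every $j \in [|w|]$. Combined with the previous observation, this means that at every stage $i$, position $j$ is marked if and only if position $|w|-j+1$ is marked. Since $f$ depends only on whether a position is marked (sending unmarked letters to $0$ and marked letters to $1$), we obtain $f(w_i)[j] = f(w_i[j]) = f(w_i[|w|-j+1]) = f(w_i)[|w|-j+1]$ for every $j$, which is precisely the definition of $f(w_i)$ being a palindrome.

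There is no real obstacle here; the only thing to be careful about is not to confuse ``$\sigma$ marks a letter'' (one bijection step, affecting potentially many positions simultaneously) with ``marking a single position''. If one keeps that distinction straight, the chain $w[j]=w[|w|-j+1] \Rightarrow$ (position $j$ marked $\Leftrightarrow$ position $|w|-j+1$ marked) $\Rightarrow f(w_i)[j]=f(w_i)[|w|-j+1]$ gives the claim for every $i \in [|\abc(w)|]$.
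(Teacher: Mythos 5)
Your proof is correct and follows essentially the same route as the paper's: both observe that whether a position is marked at stage $i$ depends only on the letter at that position, combine this with the palindrome symmetry $w[j]=w[|w|-j+1]$, and conclude that $f(w_i)$ is a palindrome. Your write-up is slightly more careful about the indexing (the paper writes $w[j]=w[n-j]$, an off-by-one slip), but the idea is identical.
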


\begin{proof}
Let $w=ux u^R$ be a palindrome with $u\in X^{\ast}$ and $x\in X\cup\overline{X}$ and 
$|w|=n\in\N$. Moreover let $\sigma$ be a marking sequence for $w$ and $i\in[|\alphabet(w)|]$. Since $w$ 
is a palindrome, $w[j]=w[n-j]$. This implies $w_i[j],w_i[n-j]$ are both either in $X$ or in 
$\overline{X}$. Thus either are both mapped to $0$ or to $1$. Consequently $f(w_i)$ is a 
palindrome.
\end{proof}

\medskip

Recall the definition of {\em border priority markable} from \cite{FSTTCS}. A strictly $k$-local word $w=avb\in XX^{\ast}X$ is called border priority markable if there exists a marking sequence $\sigma$ of $w$ such that 
in every stage $i\in [|\alpha(w)|]$ of $\sigma$ where $k$ blocks are marked, $a$ and $b$ are marked as well. Analogously right-border 
priority markable and left-border priority markable are defined: A strictly $k$-local word 
$w=avb\in XX^{\ast}X$ is called right-border priority markable (rbpm) if  if there exists a marking sequence $\sigma$ of $w$ such that 
in every stage $i\in[|\alpha(w)|]$ of $\sigma$ where $k$ blocks are marked, $b$ is marked as well - respectively, for 
left-border priority markable, $a$ is marked as well.

\begin{remark}
If $w\in X^{\ast}$ is right-border priority markable, then $u^R$ is left-border priority markable.
\end{remark}

\begin{lemma}
Let $w=u\ta u^R$ be an odd-length condensed palindrome with $u\in X^{\ast}$ and $\ta\in X$. Let $u$ be 
strictly $k$-local witnessed by the marking sequence $\sigma$.
\begin{itemize}
\item If $u$ is rbpm then $\loc(w)=2k-1$,
\item if $u$ is not rbpm and $\ta\not\in\alphabet(u)$ then $\loc(w)=2k$,
\item if $u$ is not rbpm and $\ta\in\alphabet(u)$ and for all optimal marking 
sequences for $u$ there exists 
a stage $i\in[|\alphabet(u)|]$ such that $\ta$ is marked, $k$ blocks are marked,  and $u[|u|]$ is 
unmarked then $\loc(w)=2k+1$, and
\item else $\loc(w)=2k$.
\end{itemize}
\end{lemma}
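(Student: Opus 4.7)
My proof plan rests on a block-counting identity that follows immediately from the palindrome structure established in Lemma~\ref{lem01}. For any marking of $w = u\mathtt{a}u^R$, let $k_i$ denote the number of marked blocks in the $u$-part at stage $i$; by Lemma~\ref{lem01}, the $u^R$-part inherits a mirrored marking with the same number $k_i$ of blocks, so the number of marked blocks in $w_i$ equals $2k_i + \epsilon_i$, where $\epsilon_i \in \{-1,0,+1\}$ depends only on the marking status of $u[|u|]$ and $\mathtt{a}$: we have $\epsilon_i = -1$ exactly when both are marked (so the two innermost blocks of $u_i$ and $(u^R)_i$ merge through the marked $\mathtt{a}$), $\epsilon_i = +1$ exactly when $u[|u|]$ is unmarked and $\mathtt{a}$ is marked (so $\mathtt{a}$ is an isolated marked block), and $\epsilon_i = 0$ otherwise. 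All four claimed identities will be corollaries of this formula together with the condensed-palindrome hypothesis $u[|u|] \neq \mathtt{a}$.

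For the lower bounds, I would start from any marking sequence $\tau$ for $w$ and extract a marking sequence $\sigma$ for $u$ by restricting to $\alphabet(u)$. Since $\loc(u) = k$, the sequence $\sigma$ attains at least $k$ marked blocks at some stage, which already yields $\loc(w) \geq 2k-1$ and settles case~(1). For cases~(2)--(4), where $u$ is not rbpm, $\sigma$ either exceeds $k$ blocks somewhere (in which case the identity gives $\geq 2k+1$) or is optimal; in the latter case, the failure of rbpm forces some stage with $k_i = k$ and $u[|u|]$ unmarked, so $\epsilon_i \in \{0,+1\}$ and $w_i$ has at least $2k$ blocks. The hypothesis of case~(3) additionally forces $\epsilon_i = +1$ at such a stage, giving the bound $\geq 2k+1$.

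For the upper bounds, I would construct explicit marking sequences for $w$ from marking sequences for $u$. In case~(2), an optimal $\sigma$ for $u$ extended by marking $\mathtt{a}$ last keeps $\mathtt{a}$ unmarked until the final step, so $\epsilon_i \leq 0$ throughout and the marking number is at most $2k$. In case~(3), any optimal $\sigma$ for $u$ used directly on $w$ gives marking number at most $2k+1$ since $\epsilon_i \leq +1$ and $k_i \leq k$. In case~(4), the ``nice'' optimal $\sigma$ guaranteed by the hypothesis never has $k_i = k$, $\mathtt{a}$ marked, and $u[|u|]$ unmarked simultaneously, so $2k_i + \epsilon_i \leq 2k$ at every stage. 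In case~(1) with $\mathtt{a} \notin \alphabet(u)$, prepending $\mathtt{a}$ to an rbpm witness $\sigma$ keeps $\mathtt{a}$ marked from stage one onward: peak stages of $u$ (where $u[|u|]$ is marked by rbpm) give $\epsilon_i = -1$ and hence $2k-1$ blocks in $w_i$, while at non-peak stages $k_i \leq k-1$, so even $\epsilon_i = +1$ yields at most $2k-1$.

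The main obstacle is the upper bound in case~(1) when $\mathtt{a} \in \alphabet(u)$: the rbpm witness $\sigma$ might mark $\mathtt{a}$ only after some peak stage of $u$, in which case applying $\sigma$ directly to $w$ produces $2k$ blocks rather than $2k-1$ at that peak. Naively promoting $\mathtt{a}$ to the front of $\sigma$ can destroy $k$-locality of $u$, since marking $\mathtt{a}$ alone creates one marked block per occurrence of $\mathtt{a}$ in $u$, which may exceed $k$. I plan to resolve this via a swap argument on $\sigma$, moving $\mathtt{a}$ one position earlier at a time and using the condensed-palindrome hypothesis $u[|u|] \neq \mathtt{a}$ (so the rightmost block of $u$ is never created or destroyed by marking $\mathtt{a}$) to argue that each swap preserves both $k$-locality of $u$ and the rbpm property. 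Showing that at least one such sequence of swaps terminates with $\mathtt{a}$ marked before the first peak is the crux of the proof.
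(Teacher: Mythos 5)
Your block-counting identity is correct and makes explicit what the paper packages into Lemma~\ref{lem01} together with the counts $|\condensed(f(\cdot))|_1$; your lower bounds (restrict an arbitrary sequence for $w$ to $\abc(u)$ and use $\loc(u)=k$) are also sound, and cleaner than the paper's argument by contradiction, which at one point applies the rbpm property of a \emph{fixed} witness to an arbitrary optimal sequence $\mu$ for $w$ — a step that is not justified by the definition of rbpm. For the upper bound in case~(1) with $\ta\notin\abc(u)$, your \emph{prepending} construction is the right one; note that the paper's $\sigma'$ \emph{appends} $\ta$, which only gives $\pi_{\sigma'}(w)\leq 2k$ rather than $2k-1$, so you have improved on the written argument there.

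However, the gap you flag in case~(1) when $\ta\in\abc(u)$ cannot be closed by any swap argument, because the claimed bound $\loc(w)=2k-1$ is false in that regime. Take $u=xy$ and $\ta=x$, so $w=xyxyx$ is an odd-length condensed palindrome. Then $\loc(u)=1$ and $u$ is strictly $1$-local; $u$ is rbpm with witness $(y,x)$, since at every $1$-block stage of this sequence $u[|u|]=y$ is marked; and $\ta=x\in\abc(u)$. Case~(1) therefore predicts $\loc(w)=2k-1=1$, but no letter of $w$ occurs in a single contiguous run, so marking any letter first already creates two marked blocks and $\loc(w)=2$. The correct value $2=2k$ is what the ``else'' branch would give, which strongly suggests the missing hypothesis $\ta\notin\abc(u)$ in case~(1). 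The paper's own proof shares the defect: it asserts ``$\pi_{\sigma}(w)\leq 2k-1$ or $\pi_{\sigma'}(w)\leq 2k-1$'' with no supporting argument, and in this example $\pi_\sigma(w)=2$. In short, the crux you identified is a genuine gap, and the resolution is not a cleverer swap but to add $\ta\notin\abc(u)$ as a hypothesis of case~(1) and route the rbpm-with-$\ta\in\abc(u)$ subcase into ``else.''
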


\begin{proof}
Let $\sigma$ be an optimal marking sequence of $u$. If $\ta\in\alphabet(u)$ then $\sigma$ is a marking 
sequence for $w$. Marking $w$ w.r.t. $\sigma$ leads to $\pi_{\sigma}(w)\leq 2k+1$ since there are at 
most maximal $k$ blocks marked each in $u$ and $u^R$, and additionally the single $\ta$ in the 
middle. If $\ta\not\in\alphabet(u)$ then $\sigma'=\sigma\cup\{(|u|+1,\ta\}$ is a marking sequence for $w$ 
with $\pi_{\sigma'}(w)\leq 2k$, since by marking w.r.t. $\sigma$ maximal $k$ blocks are marked by 
$\sigma$ each in $u$ and $u^R$ and afterwards on marking $a$ two blocks are joined. Thus in any 
case $\loc(w)\leq 2k+1$.

\noindent \textbf{case 1.} Consider $u$ to be rbpm. Thus in every stage $i\in[|\alphabet(u)|]$ where $k$ blocks are marked, 
$u[|u|]$ is marked. This implies that $\pi_{\sigma}(w)\leq 2k-1$ or $\pi_{\sigma'}(w)\leq 2k-1$ 
with $\sigma'$ defined as above.\\
\textbf{Supposition}: $\loc(w)=:\ell<2k-1$\\
Let $\mu$ be an optimal marking sequence for $w$. Then $\mu$ is also a marking sequence for $u$ and 
thus $\pi_{\mu}(u)\geq k$. By $\loc(u)=k$ there exists a stage $i\in[|\alphabet(w)|]$ of $\mu$ such that 
$k$ blocks are marked in $u$, or more precisely $|\condensed(f(u_i))|_1=k$. On the other hand 
$|\condensed(f(w_i))|_1\leq\ell$. Since $u$ is rbpm $u[|u|]$ is marked. If $x$ is not marked,  
$|\condensed(f(u_i))|_1\leq\frac{\ell}{2}<\frac{2k-1}{2}=k-\frac{1}{2}$.  If $x$ is marked, 
$|\condensed(f(u_i))|_1\leq\frac{\ell-1}{2}<\frac{2k-2}{2}=k-1$.
This is in both cases a contradiction to $|\condensed(f(u_i))|_1=k$.

\noindent \textbf{case 2.}  Consider now that $u$ is not rbpm. Thus there exists a stage $i\in[|\alphabet(u)|]$ in which $k$ blocks 
are marked but $u[|u|]$ is unmarked. If $\ta$ is not in $\alphabet(u)$ marking $\ta$ before stage $i$ 
leads to $2k+1$ blocks for the largest such $i$. Considering $\sigma'$ then at the beginning $u$ 
and $u^R$ are completely marked and in the end two blocks are joined by marking $\ta$. This leads 
to $\loc(w)\leq 2k$.\\
\textbf{Supposition}: $\loc(w)<2k$\\
As described, $\ta$ needs to be marked after the last stage where in $u$ $k$ blocks are marked 
without $u[|u|]$ being marked. But this sums up to $k$ blocks marked in $u$ and $k$ blocks marked 
in $u^R$, hence overall $2k$ blocks. This concludes the case $\ta\not\in\alphabet(u)$.\\
Consider $\ta\in\alphabet(u)$ and assume that $\ta$ is marked by $\sigma$ when  $k$ blocks are marked in 
$u$ and $u[|u|]$ is unmarked. Thus 
$\pi_{\sigma}(w)=2k+1$.\\
\textbf{Supposition}: $\loc(w)=:\ell<2k+1$\\
Let $\mu$ be an optimal marking sequence for $w$. \\
\textbf{Additional supposition}: $\mu$ not optimal for $u$\\
Then there exists a stage $i\in[\alphabet(w)]$ such that $|\condensed(f(u_i))|_1=k+1$. If $\ta$ is unmarked 
in this stage, $|\condensed(f(w_i))|_1=2k+2>\ell$ which contradicts the first supposition. If $\ta$ is 
marked in this stage $|\condensed(f(w_i))|_1=2k+1$ which  contradicts the first supposition.\\
Thus, $\mu$ is optimal for $u$. By assumption there exists a stage $i\in[|\alphabet(u)|]$ such that 
$\ta$ is marked, $k$ blocks are marked, and $u[|u|]$ is unmarked.
This implies since $\condensed(f(w_i))$ is a palindrome that at most $\frac{\ell-1}{2}$ blocks are 
marked in $u$. Thus, $k\leq\frac{\ell-1}{2}<\frac{2k+1-1}{2}=k$.

\noindent \textbf{case 3.} In the remaining case $u$ is not rbpm, $\ta\in\alphabet(u)$, and there exists an optimal marking 
sequence for $u$ such that in every stage $\ta$ is unmarked or less than $k$ blocks are marked or 
$u[|u|]$ is marked. Let $\sigma$ be such a marking sequence. Then $\pi_{\sigma}(w)=2k$.\\
\textbf{Supposition}: $\loc(w)=:\ell<2k$\\
Let $\mu$ be an optimal marking 
sequence for $w$.
Since $u$ is not rbpm there exists a stage $i\in[|\alphabet(u)|]$ such that $|\condensed(f(u_i))|_1=k$ and 
$u[|u|]$ is unmarked. If $\ta$ were unmarked in stage $i$, 
$k=|\condensed(f(u_i))|_1\leq\frac{\ell}{2}<k$ and if $\ta$ were marked in stage $i$, 
$k=|\condensed(f(u_i))|_1\leq\frac{\ell-1}{2}<\frac{2k-1}{2}=k-\frac{1}{2}$. Thus $2k+1\leq \ell<2k$ 
would hold.
\end{proof}

\begin{lemma}
Let $w=u^i$ be the $i$-times repetition for $u\in X^{\ast}$ and $i\in\N$. If 
$u$ is strictly $k$-local then 
\[
\loc(w)=
\begin{cases}
ik-i+1,&\mbox{ if } u \mbox{ is bpm},\\
ik,&\mbox{otherwise.}
\end{cases}
\]
\end{lemma}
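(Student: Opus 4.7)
The crux is a simple counting identity for block counts in $w = u^i$. Writing $u = a\,v\,b$ with $a = u[1]$ and $b = u[|u|]$, any marking sequence $\sigma$ for $\abc(w) = \abc(u)$ marks every copy of $u$ inside $w$ identically. So if at a given stage each copy of $u$ contains $k_j$ marked blocks, adjacent copies merge at every one of the $i-1$ boundaries of $w$ if both $a$ and $b$ are marked, and at none of them otherwise. Hence the number of marked blocks in $w$ at that stage is exactly $ik_j - (i-1)$ in the first case and $ik_j$ in the second. With this identity the lemma reduces to careful bookkeeping, so my plan is to prove matching upper and lower bounds in each case.

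For the upper bounds I will exhibit an appropriate marking sequence. In the bpm case, I take the witnessing marking sequence of $u$ promised by the definition, so that every stage reaching $k$ blocks in $u$ has both $a$ and $b$ marked; combining this with the identity gives $\pi_\sigma(w) \leq ik - i + 1$, since stages with $k_j < k$ contribute at most $i(k-1) < ik - i + 1$ blocks to $w$. In the non-bpm case, I take any optimal marking sequence of $u$ and obtain the trivial bound $\pi_\sigma(w) \leq ik$ directly from the identity.

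For the lower bounds, I consider an arbitrary marking sequence $\mu$ of $w$. Since $\abc(w) = \abc(u)$, $\mu$ is simultaneously a marking sequence for $u$; let $k' := \pi_\mu(u) \geq k$. If $k' > k$, the identity already gives $\pi_\mu(w) \geq ik' - (i-1) \geq ik + 1$, which exceeds both candidate values. If $k' = k$, then $\mu$ restricted to $\abc(u)$ is optimal for $u$, and I split cases: in the bpm case, the stage where $u$ attains $k$ blocks contributes either $ik$ or $ik - i + 1$ blocks to $w$, both at least $ik - i + 1$; in the non-bpm case, the definition supplies a stage with $k$ blocks in $u$ at which some endpoint is unmarked, so no boundary merging occurs and $w$ has exactly $ik$ blocks there.

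The main obstacle is handling the non-bpm definition precisely: it must be read as saying that for every optimal marking sequence of $u$ there is a stage with exactly $k$ blocks at which at least one of $a, b$ is unmarked, and this is exactly the statement used in the $k' = k$ subcase of the non-bpm lower bound. Once this is in place, the matched bounds yield $\loc(w) = ik - i + 1$ if $u$ is bpm and $\loc(w) = ik$ otherwise, consistent with the preliminary observation $\loc(w^i) \in \{ik,\; ik - i + 1\}$ from Section \ref{sec:prim}.
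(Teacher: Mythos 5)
Your proof is correct and follows essentially the same approach as the paper: relate the block count in $w=u^i$ to the block count in $u$ by tracking whether adjacent copies merge at the $i-1$ boundaries, then use the bpm/non-bpm dichotomy to control what happens at the critical stage where $u$ reaches $k$ blocks. If anything your write-up is cleaner than the paper's, which only works out the lower bound for $i=2$ and asserts the general $i$ extension is trivial; your explicit counting identity $ik_j-(i-1)$ versus $ik_j$ handles general $i$ directly and makes the case split on $k'=\pi_\mu(u)$ more transparent than the paper's somewhat terse contradiction argument.
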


\begin{proof}
Let $\sigma$ be a marking sequence with $\pi_{\sigma}=\loc(u)=k$. Since 
$\alphabet(u)=\alphabet(u^i)$ for all $i\in\N$, $\sigma$ is also a marking 
sequence for $w$. If $u$ is not bpm, there exists a stage during the marking in 
which $k$ blocks are marked by $\sigma$ and at least one of $u[1]$ or $u[|u|]$ is unmarked. 
Thus marking $w$ according to the sequence $\sigma$ leads to $\pi_{\sigma}(w)=ik$.  
If $u$ is bpm, in any stage in which $k$ blocks are marked, $u[1]$ and $u[|u|]$ are marked and thus in 
$w$, while being marked according to $\sigma$, the last marked block of an occurrence of $u$ and the 
first marked block of the next occurrence of $u$ coincide, as soon as the prefix of length $|u|$ of $w$ contains $k$ marked blocks.
So, we get $\pi_{\sigma}(w)=ik-i+1$. 

For proving $\loc(w)=ik$ or $\loc(w)=ik-i+1$ respectively, consider firstly $i=2$.  
Assume first that $w$ is bpm. Suppose $\loc(w)=\ell< 2k-1$. Let $\sigma'$ be the marking sequence witnessing $\loc(w)=\ell$. Since 
$u$ is strictly $k$-local, there exists a stage in marking $w$ by $\sigma'$ in 
which $u$ has $k$ marked blocks. The second $u$ has exactly as many marked blocks as the first one, so also $k$. In the best case, in $w$ the last marked block of the first $u$ and the first marked block of the second $u$ are connected. 
Anyway, the number of marked blocks of $w$ is, in that case, exactly $2k-1$. A contradiction to the assumption $\loc(w)=\ell< 2k-1$. 
If $u$ is not bpm, then, once again, there exists a stage in marking $w$ by $\sigma'$ in 
which $u$ has $k$ marked blocks. The second $u$ has also exactly $k$ marked block. But, in this case, in $w$ the last marked block of the first $u$ and the first marked block of the second $u$ do not touch (as either the last letter of $u$ or its first letter are not marked). So $w$ has $2k$ marked blocks, a contradiction. 

This reasoning can be trivially extended for $i>2$. 
\end{proof}

\end{document}